\documentclass{lmcs}
\usepackage[utf8]{inputenc}
\pdfoutput=1

% LMCS Layouting Macros
\usepackage{lastpage}
\lmcsdoi{18}{1}{33}
\lmcsheading{}{\pageref{LastPage}}{}{}%
{Feb.~24,~2021}{Feb.~24,~2022}{}

\usepackage{lineno}
\usepackage{stmaryrd}
\usepackage{amssymb}
\usepackage{color}
\usepackage{xcolor}
\usepackage{array}
\usepackage{url}
\usepackage{listings}
\usepackage{ebproof}
\usepackage{amsmath}
\usepackage{subcaption}
\usepackage{thmtools}
\usepackage{thm-restate}

\lstset{language=, mathescape, frame=shadowbox, basicstyle=\ttfamily, lineskip=0pt, rulesepcolor=\color{blue!25},backgroundcolor=\color{blue!5}}

\newcommand{\ens}[1]{\mathbb{#1}}
\newcommand{\expr}[1]{{\tt #1}}
\newcommand{\e}{\expr{e}}
\newcommand{\Operator}{\ens{O}}
\newcommand{\Var}{\ens{V}}
\newcommand{\oper}[1]{{\tt #1}}
\newcommand{\op}{\oper{op}}
\newcommand{\variable}[1]{{\tt #1}}
\newcommand{\vvv}{\variable{v}}
\newcommand{\uu}{\variable{u}}
\newcommand{\ww}{\variable{w}}
\newcommand{\x}{\variable{x}}
\newcommand{\y}{\variable{y}}
\newcommand{\z}{\variable{z}}

\newcommand{\Command}[1]{{\tt #1}}
\newcommand{\cmd}{\Command{c}}
\newcommand{\prog}{{\tt p}_\phi}

\newcommand{\instr}[1]{{\tt #1}}
\newcommand{\skp}{\instr{skip}}
\newcommand{\sap}{\instr{;\ }}
\newcommand{\asg}{\instr{\ :=\ }} % chktex 26
\newcommand{\while}{\instr{while}}
\newcommand{\ret}{\instr{return \ }}
\newcommand{\ifa}{\instr{if}}

\newcommand{\elsea}{\instr{\ else\ }}
\newcommand{\rgl}{::=}
\newcommand{\FV}{\mathcal{V}}
\newcommand{\E}{\mathcal{E}}
\newcommand{\Cb}{\mathcal{C}}
\newcommand{\D}{\mathcal{D}}
\newcommand{\A}{\mathcal{A}}
\newcommand{\Oo}{Op}
\newcommand{\true}{1}
\newcommand{\false}{0}
\newcommand{\W}{\ens{W}}
\newcommand{\w}{\mathit{w}}
\newcommand{\varv}{\mathit{v}}
\newcommand{\varu}{\mathit{u}}
\newcommand{\size}[1]{|#1|}
\newcommand{\sem}[1]{\llbracket#1\rrbracket}
\newcommand{\store}{\mu}
\newcommand{\Imp}{\vDash}
\newcommand{\tier}[1]{\mathbf{#1}}
\newcommand{\tiera}{\tier{0}}
\newcommand{\tierb}{\tier{1}}
\newcommand{\tierc}{\tier{2}}
\newcommand{\tierd}{\tier{3}}
\newcommand{\slat}[1]{\mathbf{#1}}
\newcommand{\sla}{\slat{t}}
\newcommand{\slb}{\slat{t'}}
\newcommand{\SL}{\mathbf{N}}
\newcommand{\meet}{\wedge}
\newcommand{\join}{\vee}
\newcommand{\ord}{\preceq}
\newcommand{\ordst}{\prec}
\newcommand{\pbl}{\Gamma, \Delta \vdash} % chktex 1
\newcommand{\dord}{\unlhd}
\newcommand{\typenv}{\Gamma}
\newcommand{\typop}{\Delta}
\newcommand{\dom}{\textit{dom}}
\newcommand{\meq}[1]{==_{#1}}
\newcommand{\mpred}{\oper{pred}}
\newcommand{\msuc}[1]{\oper{suc}_{#1}}

\newcommand{\mpt}{\mathrm{MPT}}

\newcommand{\st}{\mathrm{ST}}
\newcommand{\FP}{\mathrm{FP}}
\newcommand{\BFF}{\mathrm{BFF}}

\keywords{Feasible Functionals, \texorpdfstring{$\BFF$}{BFF}, implicit computational complexity, tiering, type-2, type system.}

%
%\theoremstyle{plain}\newtheorem{satz}[thm]{Satz} %\crefname{satz}{Satz}{S\"atze}
%%% Do NOT replace the proclamation environments lready provided by
%%% your own.
%
\def\eg{\emph{e.g.}}
\def\ie{\emph{i.e.}}

\begin{document}

\title[A tier-based typed PL characterizing BFF]{A tier-based typed programming language characterizing Feasible Functionals}

\author[E.~Hainry]{Emmanuel Hainry\rsuper{a}}
\author[B.~Kapron]{Bruce M. Kapron\rsuper{b}}
\author[J-Y.~Marion]{Jean-Yves Marion\rsuper{a}}
\author[R.~P\'echoux]{Romain P\'echoux\rsuper{a}}

\address{Universit{\'e} de Lorraine, CNRS, Inria, LORIA, F-54000 Nancy, France}
\email{\{emmanuel.hainry,jean-yves.marion,romain.pechoux\}@loria.fr}
\address{University of Victoria, Victoria, BC, Canada}
\email{bmkapron@uvic.ca}

\begin{abstract}
The class of Basic Feasible Functionals $\BFF_2$ is the type-2 counterpart of the class $\FP$ of type-1 functions computable in polynomial time.
Several characterizations have been suggested in the literature, but none of these present a programming language with a type system guaranteeing this complexity bound.
We give a characterization of $\BFF_2$ based on an imperative language with oracle calls using a tier-based type system whose inference is decidable. Such a characterization should make it possible to link higher-order complexity with programming theory.
The low complexity (cubic in the size of the program) of the type inference algorithm contrasts with the intractability of the aforementioned methods and does not overly constrain the expressive power of the language.
\end{abstract}

\maketitle

%\end{frontmatter}
%\modulolinenumbers[5]
%\linenumbers

\section{Introduction}
Type-2 computational complexity aims to study classes of functions that take type-1 arguments. The notion of feasibility for type-2 functionals was first studied in~\cite{Con73} and in~\cite{M76} using subrecursive formalisms. Later,~\cite{CooKap89,CU93} provided characterizations of polynomial time complexity at all finite types based on programming languages with explicit bounds and applied typed lambda-calculi, respectively. The class characterized in these works was christened the Basic Feasible Functionals, $\BFF$ for short.

It was shown in~\cite{KC91,KC96} that, similarly to type-1, feasible type-2 functions correspond to the programs computed in time polynomial in the size of their input. In this setting, the polynomial bound is a type-2 function as the size of a type-1 input is itself a type-1 object. This characterization lent support to the notion that at type level 2, the Basic Feasible Functionals ($\BFF_2$) are the correct generalization of FP to type-2.

Nevertheless, these characterizations are faced by at least two problems:
\begin{enumerate}
\item Characterizations using a general model of computation (whether machine- or program-based) require externally imposed and explicit resource bounding, either by a type-2 polynomial~\cite{KC91,KC96,FHHP15} or a bounding function within the class of~\cite{Con73,M76}. This is analogous to a shortcoming in Cobham's characterization of the class of (type 1) polynomial time computable functions $\FP$~\cite{Cob65}. Such bounding requires either a prior knowledge of program complexity or a check on type-2 polynomial time constraints, which is  highly intractable;\label{pb:1}
\item There is no natural programming language for these characterizations as they rely on machines or function algebras and cannot be adapted directly to programs. Some attempts have been made to provide programming languages for characterizing $\BFF_2$.  These languages are problematic either due to a need to provide some form of explicit external bounding~\cite{CooKap89} or from including unnatural constructs or type-2 recursion patterns~\cite{CU93,IRK01,DR06} which severely constrain the way in which type-2 programs may be written. All these distinct approaches would make it difficult for a non-expert programmer to use these formalisms as programming languages.\label{pb:2}
\end{enumerate}

\noindent
A solution to Problem~(\ref{pb:1}) was suggested in~\cite{KS17} by constraining Cook's definition of Oracle Polynomial Time (OPT)~\cite{C92}, which allows type-1 polynomials to be substituted for type-2 polynomials. To achieve this, oracle Turing machines are required to have a \emph{polynomial step count}: on any input, the length of their computations is bounded by a type-1 polynomial in the size of their input and the maximal size of any answer returned by the oracle. However $\BFF_2$ is known to be strictly included in OPT\@. In~\cite{KS17}, OPT is constrained by only allowing computations in which oracle return values increase in size a constant number of times, resulting in a class they called SPT (\emph{strong polynomial time}). This class is strictly contained in $\BFF_2$. $\BFF_2$ is recovered in~\cite{KS18} by putting a dual restriction, called \emph{finite lookahead revision}, on machines: on any input, the number of oracle calls on input of increasing size is bounded by a constant. The class of functions computed by machines having polynomial step count and finite lookahead revision is called MPT\@. The type-2 restriction of the simply-typed lambda closure of functions in MPT (and SPT) characterizes exactly $\BFF_2$.

Problem~(\ref{pb:2}) has been extensively tackled by the Implicit Computational Complexity community for type-1 complexity. This line of work provides machine independent characterizations that eliminate the external explicit bound and was initiated by the seminal works~\cite{BelCoo92} and~\cite{LeivantMar93}.
However, none of these works has been adapted to the case of type-2 complexity in a tractable approach. To this day, tractable implicit characterizations of type-2 complexity classes are still missing.
\paragraph{Our contribution}
We provide the first tractable characterization of type-2 polynomial time using a typed imperative language with oracle calls. Each oracle call comes with an associated \emph{input bound} which aims at bounding the size of the oracle input. However the size of the oracle answer, which is unpredictable, remains unbounded and, consequently, the language can be used in practice.

The characterization is inspired by the tier-based type system of~\cite{M11} characterizing $\FP$. Consequently, it relies on a non-interference principle and is also inspired by the type system of~\cite{VIS96} guaranteeing confidentiality and integrity policies by ensuring that values of high level variables do not depend on values of low level variables during a program execution. In our context, the level is called a tier.

Let $\sem{\st}$ be the set of functions computed by typable (also called \emph{safe}, see Definition~\ref{def:safe}) and terminating programs and let $\lambda(X)_2$ be the type-2 restriction of the simply-typed lambda closure of terms with constants in $X$. The  characterization of $\BFF_2$ is as follows:

\begin{thm}\label{thm:main}
$\lambda(\sem{\st})_2 = \BFF_2$.
\end{thm}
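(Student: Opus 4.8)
The plan is to prove the two inclusions separately, leveraging the Kapron--Steinberg characterizations recalled above, namely $\lambda(\spt)_2 = \lambda(\mpt)_2 = \BFF_2$. Since the operator $\lambda(\cdot)_2$ is monotone with respect to set inclusion of its base functions, it suffices to sandwich $\sem{\st}$ between two classes whose type-2 lambda closures both coincide with $\BFF_2$: concretely, I would establish $\spt \subseteq \sem{\st} \subseteq \mpt$. Applying $\lambda(\cdot)_2$ then yields $\BFF_2 = \lambda(\spt)_2 \subseteq \lambda(\sem{\st})_2 \subseteq \lambda(\mpt)_2 = \BFF_2$, forcing equality throughout.

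For the soundness inclusion $\sem{\st} \subseteq \mpt$, I would take an arbitrary safe, terminating program and show that, read as an oracle computation, it satisfies the two defining properties of $\mpt$. First, \emph{polynomial step count}: adapting the tier-based analysis of~\cite{M11}, the non-interference discipline enforced by the type system should bound the number of execution steps by a polynomial in the size of the input together with the maximal size of any oracle answer. The key point is that the input bound attached to each oracle call controls the size of the data fed to the oracle, while the tier stratification prevents the a priori unbounded oracle answers from driving the length of any loop. Second, \emph{finite lookahead revision}: I would argue that the tier constraints on the variables guarding and feeding oracle calls force the number of oracle calls made on inputs of strictly increasing size to be bounded by a constant depending only on the program text.

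For the completeness inclusion $\spt \subseteq \sem{\st}$, I would simulate, for each $\spt$ oracle Turing machine, its step-by-step computation inside the imperative language, encoding tape configurations in the usual way and realizing oracle queries by the language's oracle-call construct equipped with an appropriate input bound derived from the machine's polynomial step count. The delicate part is not the simulation itself but the requirement that the resulting program be \emph{typable}: one must assign tiers to the simulation variables so that all typing rules are satisfied, using the bounded-size-increase property of $\spt$ machines to justify placing the oracle-interaction variables at tiers compatible with the non-interference conditions.

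The main obstacle I expect is the finite-lookahead-revision half of soundness. Establishing the polynomial step count is a relatively direct, if technical, transfer of the $\FP$ argument from~\cite{M11}, since tiering was designed precisely to bound iteration. By contrast, bounding the number of size-increasing oracle calls is a genuinely type-2 phenomenon with no type-1 analogue: it demands a semantic invariant tying the static tier of each oracle-guard variable to the dynamic pattern of oracle calls along an execution, and showing that this invariant is preserved by every typing rule, in particular through nested loops that may repeatedly revisit oracle-call sites. Making this invariant strong enough to yield a program-independent constant, yet provable by induction on the typing derivation, is the crux of the argument.
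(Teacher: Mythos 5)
Your soundness half is exactly the paper's argument: safe terminating programs have polynomial step count (Theorem~\ref{thm:pol}) and finite lookahead revision (Theorem~\ref{thm:flr}), hence $\sem{\st}\subseteq\mpt$ (Proposition~\ref{sound}), and one concludes by monotonicity of the lambda closure together with $\lambda(\mpt)_2=\BFF_2$ from~\cite{KS18}. You correctly identify finite lookahead revision as the genuinely type-2 difficulty. The one ingredient you omit is padding: every query of a program has the form $\sem{\upharpoonright}(\varv,\w)$, so the simulation only yields an $\mpt$ function on \emph{padded} oracles, and one needs the invariance of $\mpt$ under padding (Proposition~\ref{padnpad}) to close this step.

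The gap is in your completeness direction, $\spt\subseteq\sem{\st}$, and it is not a matter of technical elaboration: the step-by-step simulation you describe cannot be typed. An ``input bound derived from the machine's polynomial step count'' does not exist as typable data, because the step count of an $\spt$ machine is polynomial in the maximum of the input size and the \emph{oracle answer sizes}, which are revealed only dynamically; in a safe program, rule (OR) forces every input bound to have the outermost tier, and by Lemmas~\ref{lem:confinement} and~\ref{lem:stratification} such data can neither grow inside a loop nor receive values derived from oracle answers. For the same reason a clocked simulation loop is untypable: its length would have to depend on answers returned by calls inside that very loop, and rule (OR) (return tier strictly below the innermost tier) exists precisely to forbid this flow. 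Worse, a faithful simulation is impossible in principle: safe programs provably have finite lookahead revision, while an $\spt$ machine need not, since $\spt$ bounds answer-size revisions rather than query-size revisions (a machine posing queries $1,11,111,\ldots$ and halting at the first nonempty answer is $\spt$, yet performs unboundedly many lookahead revisions against the oracle constantly equal to $\epsilon$). So any correct proof along your lines must first reorganize the machine's computation --- splitting the run into constantly many phases at answer revisions, recomputing polynomial budgets at strictly descending tiers, exploiting the padding of $\upharpoonright$ so that all queries within a phase have equal size, and recovering the original functional from the padded-oracle version inside the lambda closure. None of this appears in your sketch, and your remark that ``the delicate part is not the simulation itself'' has it backwards.

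The paper avoids machine simulation entirely. It invokes the characterization $\lambda(\FP\cup\{\mathcal{I}'\})_2=\BFF_2$ of~\cite{KS19}, obtains $\FP=\sem{\st}_1$ (Theorem~\ref{t1}) by embedding the two-tier type system of~\cite{MP14} into the present one (Lemma~\ref{subsystem}), and then exhibits a single five-line safe terminating program $it_\phi$ (Figure~\ref{fig:itphi}) computing the bounded iterator $\mathcal{I}'$, with the padded oracle $\tilde{\phi}$ handling the $\upharpoonright$ mismatch. This reduces type-2 completeness to one concrete typing derivation, which is considerably lighter than the construction your route would require.
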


Soundness ($\lambda(\sem{\st})_2 \subseteq \BFF_2$, Theorem~\ref{thm:soundness}) is demonstrated by showing that each function of $\sem{\st}$ is in Kapron-Steinberg's MPT class~\cite{KS18}. The type system makes use of several tiers and is designed to enforce  a  tier-based non-interference result (Theorem~\ref{thm:ni}) and generalizes the operator type discipline of~\cite{M11} to ensure the polynomial step count property (Corollary~\ref{psc}) and the finite lookahead revision property (Theorem~\ref{thm:flr}), two non-trivial semantic properties. Two important points to stress are that: (i) these properties are enforced statically on programs as consequences of being typable (whereas they were introduced in~\cite{KS18} as pure semantic requirements on machines); (ii) the enforcement of finite lookahead revision through the use of tiering is a new non-trivial result.

Completeness ($  \BFF_2 \subseteq \lambda(\sem{\st})_2$, Theorem~\ref{t2}) is shown using an alternative characterization: $\lambda(\FP \cup \{\mathcal{I}'\})_2 = \BFF_2$, where $\mathcal{I}'$ is a bounded iterator that is polynomially equivalent to the recursor $\mathcal{R}$ of~\cite{CU93}, as demonstrated in~\cite{KS19}. The simulation of $\FP$ is performed by showing that our type system strictly embeds the tier-based type system of~\cite{MP14}. Consequently, our type system also provides a characterization of $\FP$ (Theorem~\ref{t1}) with strictly more expressive power when restricted to type-1 programs. Finally, a typable and terminating program computing the bounded iterator functional $\mathcal{I}'$ is exhibited. As in~\cite{KS18}, the simply-typed lambda-closure is mandatory to achieve completeness as oracle composition is not allowed by the syntax of the language.

The tractability of the type system is proved in Theorem~\ref{thm:ti}, where type inference is shown to be to be solvable in cubic time in the size of the program.
As a consequence of the decidability of type inference for simply typed lambda-calculus~\cite{M91}, we obtain the first decidable (up to a termination assumption) programming language based characterization of type-2 polynomial.
While the termination assumption is obviously not decidable, it is the most general condition for the result to hold.
However, it can be replaced without loss of completeness by combining our type system with automatic termination provers for imperative programs, for example~\cite{CPR06,LJB01}. The price to pay is a loss of expressive power.
Hence this paper provides a new approach for reasoning about type-2 feasibility automatically, in contrast to related works.

The characterization of Theorem~\ref{thm:main} is extensionally complete: all functions of $\BFF_2$ are computed by a typable and terminating program. It is not intensionally complete: there are false negatives as discussed in Example~\ref{oracles}. This incompleteness is a consequence of the decidability of type inference as providing intensionally complete descriptions of polynomial time is known to be a $\Sigma_2^0$-complete problem in the arithmetical hierarchy~\cite{H79}.

\textit{Outline. }\S\ref{s:ts} is devoted to presenting the type system technical developments and main intuitions. \S\ref{s:prop} states the type system main properties. \S\ref{s:ex} presents several examples that will help the reader to understand the underlying subtle mechanisms. Soundness and completeness are proved in \S\ref{s:sound} and \S\ref{s:comp}, respectively. The decidability of type inference is shown in \S\ref{s:ext}.
Future work is discussed in \S\ref{s:con}.

This paper is an extended and improved version of the paper~\cite{HKMP20} presented at Logic In Computer Science 2020, including complete proofs.

\section{Related work}
\paragraph{Implicit Computational Complexity (ICC)} has lead to the development of several techniques such as interpretations~\cite{BMM11}, light logics~\cite{Girard98}, mwp-bounds~\cite{BAJK08,JK09}, and tiering~\cite{M11,LM13,HP15}. These tools are restricted to type-1 complexity. Whereas the light logic approach can deal with programs at higher types, its applications are restricted to type-1 complexity classes such as $\FP$~\cite{BT04,BM10} or polynomial space~\cite{GMR08}. Interpretations were extended to higher-order polynomials in~\cite{BL16} to study $\FP$ and adapted in~\cite{FHHP15,HP17} to $\BFF_2$. However, by essence, all these characterizations use (at least) type-2 polynomials and cannot be considered as tractable.

\paragraph{Other characterizations of $\BFF_2$}
The characterizations of~\cite{CooKap89,IRK01} are based on a simple imperative programming language that enforces an explicit  external bound on the size of oracle outputs within loops. This restriction is impractical from a programming perspective as the size of oracle outputs cannot be predicted. In this paper, the bound is programmer friendly by its implicit nature and because it only constraints the size of the oracle input. Function algebra characterizations were developed in~\cite{KS19,CU93}: the recursion schemes are not natural and cannot be used in practice. Several characterizations~\cite{KC91,KC96} using type-2 polynomials were also developed but they focus on machines rather than programs.

\section{Imperative programming language with oracles}
\subsection{Syntax and semantics}
Consider a set $\Var$ of variables and a set $\Operator$ of operators $\op$ of fixed arity $ar(\op)$. For notational convenience, operators are used both in infix and prefix notations.  Let $\overline{t}$ denote a tuple of $n$ elements (variables, expressions, words, ...) $t_1,\ldots,t_n$, where $n$ is given by the context. % chktex 26 chktex 11

Expressions, commands and programs are defined by the grammar of Figure~\ref{fig:synt},
\begin{figure*}
\hrulefill
\\[10pt]
$ \begin{array}{llll}
\texttt{Expressions} \qquad \qquad & \e, \e_1,\ldots
  & \rgl &\x
   \ |\ \op(\overline{\e})  \ | \ \phi(\e_1 \upharpoonright \e_2) \\
\texttt{Commands} &  \cmd, \cmd_1, \cmd_2
    & \rgl &\skp   \ | \  \x  \asg \e   \ | \  \cmd_1 \sap \cmd_2 \ | \ \ifa (\e) \{\cmd_1\} \elsea \{\cmd_2\}   \\
    & & &\ |\    \while(\e)\{ \cmd \}\   \\
  \texttt{Programs} &  \prog   & \rgl &\cmd \ \ret \x
 \end{array}$
 \\[10pt]

\hrulefill
\caption{Syntax of imperative programs with oracles}%
\label{fig:synt}
\end{figure*}
where $\x,\y \in \Var, \op, \upharpoonright \in \Operator$,
and $\phi$ is an oracle symbol. There can be only one oracle per program. Consequently, each program is indexed by its oracle as subscript.\footnote{The results can be generalised naturally to a constant number of oracles. However, this is of no particular interest with respect to the complexity class $\BFF_2$.}

Let $\FV(\prog)$ be the set of variables occurring in the program $\prog$. An expression of the shape $\phi(\e_1 \upharpoonright \e_2)$ is called an \emph{oracle call}. $\e_1$ is called the \emph{input data}, $\e_2$ is called the \emph{input bound} and $\e_1 \upharpoonright \e_2$ is called the \emph{input}. We write $\phi \notin \prog$ in the special case where no oracle call appears in $\prog$.

Let $\W = \Sigma^*$ be the set of words over a finite alphabet  $\Sigma$ such that $\{0,1\} \subseteq \Sigma$. The symbol $\epsilon$ denotes the empty word. The length of a word $\w$ (tuple $\overline{t}$) is denoted  $\size{\w}$ ($\size{\overline{t}}$, respectively). Given two words $\w$ and $\varv$ in $\W$ let $\varv.\w$ denote the concatenation of $\varv$ and $\w$. For a given symbol $a \in \Sigma$, let $a^n$ be defined inductively by $a^0 = \epsilon$ and $a^{n+1}=a.a^n$. Let $\dord$ be the sub-word relation over $\W$, which is defined by $\varv \dord \w$, if there are $\varu$ and $\varu'$ such that $\w = \varu.\varv.\varu'$.

A total function $\sem{\op}: \W^{ar(\op)} \to \W$ is associated to each operator. Constants may be viewed as operators of arity zero.

For a given word $\w \in \W$ and an integer $n$, let $\w_{\upharpoonright n}$ be the word obtained by truncating $\w$ to its first $\min(n, \size{w})$ symbols and then padding with a word of the form $10^k$ to obtain a word of size exactly $n+1$. For example, $1001_{\upharpoonright 0} = 1$, $1001_{\upharpoonright 1} = 11$, $1001_{\upharpoonright 2} = 101$, and $1001_{\upharpoonright 6} = 1001100$.
Define $\forall \varv,\w \in \W,\ \sem{\upharpoonright}(v,w) =  v_{\upharpoonright \size{w}}$.
Padding ensures that $\size{\sem{\upharpoonright}(v,w)}=\size{\w}+1$. The syntax of programs enforces that oracle calls are always performed on input data padded by the input bound. Combined with the above property, this ensures that oracle calls are always performed on input data whose size does not exceed the size of the input bound plus one. Consequently, no oracle call can be performed on the empty word.

The oracle symbol $\phi$ computes a total function from $\W$ to $\W$, called an oracle function. In order to lighten notations, we will make no distinction between the oracle symbol $\phi$ and the oracle function it represents.

A store $\store$ is a partial map from $\Var$ to $\W$. Let $dom(\store)$ be the domain of $\store$. Let $\store[\x_1 \leftarrow \w_1,\ldots,\x_n \leftarrow \w_n]$ be a notation for the store $\store'$ satisfying $\forall \x \in \dom(\store)-\{\x_1,\ldots,\x_n\},\ \store'(\x)=\store(\x)$ and $\forall \x_i \in\{\x_1,\ldots,\x_n\},\ \store'(\x_i)=\w_i$. Let $\store_0$ be the store defined by $dom(\store_0)=\Var$ and $\forall \x \in dom(\store_0),\ \store_0(\x)=\epsilon$.
The size of a store $\store$ is defined by $\size{\store}= \Sigma_{\x \in \dom(\store)}\size{\store(\x)}.$

The judgment $\store \Imp_\phi \e \to  \w$ means that the expression $\e$ is evaluated to the word $\w \in \W$ with respect to the store $\store$ and the oracle $\phi$.  The judgment $\store \Imp_\phi \cmd \to \store'$ expresses that, under the store $\store$ and the oracle $\phi$, the command $\cmd$ terminates and outputs the store $\store'$. As the oracle is fixed for each program, we will omit it throughout the paper in the judgments subscript, \eg, writing $\store \Imp \e \to  \w$ for $\store \Imp_\phi \e \to  \w$. The operational semantics of the language is deterministic and is given in Figure~\ref{fig:Com}. In rule (Seq) of Figure~\ref{fig:Com}, it is implicitly assumed that $\cmd_1$ is not a sequence.

A \emph{derivation} $\pi_{\phi}: \store \Imp \prog \to \w$ is a tree rooted at $ \store \Imp \prog \to \w$, where children of each node are obtained by applying the rules of Figure~\ref{fig:Com}.
Let $\size{\pi_{\phi}}$ denote the size of the derivation $\pi_{\phi}$. Note that $\size{\pi_{\phi}}$ corresponds to the number of steps in a sequential execution of $\prog$, initialized with store $\mu$. Hence, $\size{\pi_{\phi}}$ can be infinite. With no restriction on operators, this measure is too coarse to correspond, even asymptotically, to running time. With suitable restrictions, there is a correspondence, given in Proposition~\ref{runtime} below.

A program $\prog$ such that $\FV(\prog)=\{\overline{\x}\}$ computes the partial function $\sem{\prog} \in \W^{\size{\overline{\x}}} \to \W$,
defined by $\sem{\prog}(\overline{\w})=w$ if $\exists \pi_{\phi}, \ \pi_{\phi}: \store_0[\x_1 \leftarrow \w_1,\ldots,\x_{\size{\overline{\x}}} \leftarrow \w_{\size{\overline{\x}}}] \Imp \prog \to  \w.$
In the special case where, for any oracle $\phi$, $\sem{\prog}$ is a total function, the program $\prog$ is said to be terminating.

A second order function $f: (\W \to \W) \to (\W \to \W)$ is computed by a program $\prog$ if for any oracle function $\phi \in \W \to \W$ and word $\w \in \W$, $f(\phi)(\w)=\sem{\prog}(\w)$.

\begin{figure*}[t]
\hrulefill
\\[10pt]
\centering
\begin{prooftree}
\hypo{\phantom{ \Imp \x  \asg \e \to } }
\infer1[(Var)]{\store \Imp \x \to \store(\x)}
\end{prooftree}
\quad \quad
\begin{prooftree}
\hypo{\forall i \leq ar(\op),\ \store \Imp \e_i \to \w_i}
\infer1[(Op)]{\store \Imp \op(\overline{\e}) \to \sem{\op}(\overline{\w})}
\end{prooftree}
\\[10pt]
\begin{prooftree}
\hypo{\store \Imp \e_1 \to \varv}
\hypo{\store \Imp \e_2 \to \w}
\hypo{\phi(\sem{\upharpoonright}(\varv,\w))=\varu}
\infer3[(Orc)]{\store \Imp \phi(\e_1\upharpoonright \e_2) \to \varu}
\end{prooftree}
\\[10pt]
\begin{prooftree}
\hypo{\phantom{ \Imp \x  \asg \e \to }  }
\infer1[(Skip)]{\store \Imp \skp \to \store}
\end{prooftree}
\quad \quad
\begin{prooftree}
\hypo{\store \Imp \e \to \w }
\infer1[(Asg)]{\store \Imp \x  \asg \e \to \store[\x  \leftarrow \w]}
\end{prooftree}
\\[10pt]
\begin{prooftree}
\hypo{\store \Imp \cmd_1 \to \store_1 }
\hypo{\store_1 \Imp \cmd_2 \to \store_2}
\infer2[(Seq)]{\store \Imp \cmd_1 \sap \cmd_2 \to \store_2}
\end{prooftree}
\quad \quad
\begin{prooftree}
\hypo{\store \Imp \e \to \w}
\hypo{\store \Imp \cmd_\w \to \store'}
\hypo{\w\in \{\false,\true\} }
\infer3[(Cond)]{\store \Imp \ifa (\e) \{\cmd_{\true}\} \elsea \{\cmd_{\false}\}  \to \store'}
\end{prooftree}
\\[10pt]
\begin{prooftree}
\hypo{\store \Imp \e \to \false}
\infer1[(Wh$_0$)]{\store \Imp \while (\e) \{\cmd\} \to \store}
\end{prooftree}
\quad \quad
\begin{prooftree}
\hypo{\store \Imp \e \to \true}
\hypo{\store \Imp \cmd \sap  \while (\e) \{\cmd\} \to \store'}
\infer2[(Wh$_1$)]{\store \Imp  \while (\e) \{\cmd\} \to \store'}
\end{prooftree}
\\[10pt]
\begin{prooftree}
 \hypo{\store \Imp \cmd \to \store'}
 \infer1[(Prg)]{\store \Imp  \cmd\ \ret \x \to \store'(\x)}
\end{prooftree}
\\[10pt]
\hrulefill
\caption{Big step operational semantics}%
\label{fig:Com}
\end{figure*}

\subsection{Neutral and positive operators}
We define two classes of operators called neutral and positive. This categorization of operators will be used in \S\ref{ss:safe} where the admissible types for operators will depend on their category in the type system.

\begin{defi}[Neutral and positive operators]\label{def:np}\quad \quad
\begin{itemize}
\item An operator $\op$ is \emph{neutral} if:
\begin{enumerate}
\item either $\op$ is a constant operator, \ie, $ar(\op)=0$;
\item $\sem{\op}:\W^{ar(\op)} \to \{0,1\}$, \ie, $\sem{\op}$ is a predicate;
\item or  $\forall \overline{\w} \in \W^{ar(\op)},\ \exists i  \leq ar(\op)$, $
\sem{\op}(\overline{\w}) \dord {\w_i}
$;
\end{enumerate}
\item An operator $\op$  is \emph{positive} if there is a constant $c_{\op}$ such that: \\$
\forall \overline{\w}^{ar(\op)} \in \W,\ \size{\sem{\op}(\overline{\w})}  \leq \max_i \size{\w_i} + c_{\op}
$.
\end{itemize}
\end{defi}

\noindent
A neutral operator is always a positive operator but the converse is not true.
In the remainder, we name positive operators those operators that are positive but not neutral.

\begin{exa}\label{ex:1}
The operator $\meq{}$ tests whether or not its arguments are equal and the operator $\mpred{}$ computes the predecessor.
\[
\sem{\meq{}}(\w, \varv) =
   \begin{cases}
     1 &\text{if }\varv=\w  \\
         0 &\text{otherwise}
   \end{cases} \quad \quad
\sem{\mpred}(\varv)   =
     \begin{cases}
       \epsilon  &\text{if\ }\varv=\epsilon \\
      \varu  &\text{if\ } \varv=a.\varu,\ a \in \Sigma \\
      \end{cases}
\]
Both operators are neutral.
$
\sem{\msuc{i}}(\varv) = i.\varv, \ \text{for }i \in \{0,1\}$, is a positive operator since $\size{\sem{\msuc{i}}(\varv)}=\size{i.\varv}=\size{\varv}+1$.
\end{exa}

\section{Type system}\label{s:ts}
In this section, we introduce a tier based type system, the main contribution of the paper, that allows to provide a characterization of type-2 polynomial time complexity (\cite{M76,KC91,KC96}).

\subsection{Tiers and typing judgments}
Atomic types are elements of the totally ordered set $(\SL ,\ord,\tiera,\join,\meet)$ where $\SL =\{\tiera, \tierb, \tierc, \ldots\}$ is the set of natural numbers, called \emph{tiers}, in accordance with the data ramification principle of~\cite{L94},  $\ord$ is the usual ordering on integers and $\join$ and $\meet$ are the max and min operators over integers. Let $\ordst$ be defined by $\ordst \ :=\ \ord \cap \neq$.
We use the symbols $\sla,\slb,\ldots,\sla_1,\sla_2,\ldots$ to denote tier variables. For a finite set of tiers, $\{\sla_1,\ldots,\sla_n\}$, let $\join_{i=1}^n \sla_i$ ($\meet_{i=1}^n \sla_i$, respectively) denote $\sla_1 \join \ldots \join \sla_n$ ($\sla_1 \meet \ldots \meet \sla_n$, respectively).

 A variable typing environment $\typenv$ is a finite mapping from $\Var$ to $\SL$, which assigns a single tier to each variable.

An operator typing environment $\typop$ is a mapping that associates to each operator $\op$ and each tier $\sla \in \SL$ a set of admissible operator types $\typop(\op)(\sla)$, where the operator types corresponding to the operator $\op$ are of the shape $\sla_1 \to \ldots \to \sla_{ar(\op)} \to \sla'$, with $\sla_i,\sla' \in \SL$.

Let $\dom(\typenv)$ (resp. $\dom(\typop)$)  denote the set of variables typed by $\typenv$ (resp.\ operators typed by $\typop$).

Typing judgments are either \emph{command typing judgments} of the shape $\pbl \cmd : (\sla,\sla_{in},\sla_{out})$ or \emph{expression typing judgments} of the shape $\pbl \e : (\sla,\sla_{in},\sla_{out})$. The intended meaning of such a typing judgment is that the \emph{expression tier} or \emph{command tier} is $\sla$, the \emph{innermost  tier} is $\sla_{in}$, and the \emph{outermost tier} is $\sla_{out}$.
The innermost (resp.\ outermost) tier is the tier of the guard of the innermost (resp.\ outermost) while loop containing the expression or command in question.
In the case of a single non-nested while loop, the innermost and outermost tiers are equal (as illustrated by rule (W$_0$) of Figure~\ref{TS}). These two tiers are irrelevant for an expression or a command not appearing inside a while loop.

The type system preventing flows from $\sla_2$ to $\sla_1$, whenever $\sla_2 \ordst \sla_1$ holds, is presented in Figure~\ref{TS}.

A typing derivation $\rho \rightslice \pbl \cmd :(\sla,\sla_{in},\sla_{out})$  is a tree whose root is the typing judgment $\pbl \cmd :(\sla,\sla_{in},\sla_{out})$ and whose children are obtained by applications of the typing rules. Due to the rule (OP) of Figure~\ref{TS}, that allows several admissible types for operators, typing derivations are, in general, not unique.
However the two typing rules for while loops (W) and (W$_0$) are mutually exclusive (when read bottom-up) because of the non-overlapping requirements for $\sla_{out}$ in Figure~\ref{TS}.
 The notation $\rho$ will be used whenever mentioning the root of a typing derivation is not explicitly needed. We use the notation $\rho \rightslice \pbl \cmd :(\sla,\sla_{in},\sla_{out})$ (R) to denote the typing derivation $\rho \rightslice \pbl \cmd :(\sla,\sla_{in},\sla_{out})$ whose children are obtained by application of a typing rule labelled by (R).

 Given two typing derivations $\rho$ and $\rho'$, we write $\rho' \leq \rho$ (respectively $\rho' < \rho$) if $\rho'$ is a (strict) subtree of $\rho$.
Let $\D(\rho)$ be  defined by $\D(\rho)=\{\rho' \ | \ \rho' \leq \rho\}$ and let $\mathring{\D} (\rho)$ be defined by $\mathring{\D} (\rho)=\D(\rho)-\{ \rho\}$.

\begin{figure*}
\hrulefill
\centering
\\[10pt]
\centering
\begin{prooftree}
\hypo{\Gamma(\x)=\sla}
\infer1[(V)]{\pbl \x: (\sla, \sla_{in},\sla_{out})}
\end{prooftree}
\\[10pt]
{
\begin{prooftree}
\hypo{\sla_1 \to \cdots \to \sla_{ar(\op)} \to \sla \in \Delta(\op)(\sla_{in})}
\hypo{\forall i \leq ar(\op), \  \pbl \e_i: (\sla_i,\sla_{in},\sla_{out})}
\infer2[(OP)]{\pbl  \op(\overline{\e}): (\sla,\sla_{in},\sla_{out})}
\end{prooftree}
}
\\[10pt]
{
\begin{prooftree}
\hypo{\pbl \e_1:(\sla,\sla_{in},\sla_{out})}
\hypo{\pbl \e_2: (\sla_{out},\sla_{in},\sla_{out})}
\hypo{\sla \ordst  \sla_{in} \wedge \sla \ord \sla_{out}  }
\infer3[(OR)]{\pbl \phi(\e_1 \upharpoonright \e_2) : ( \sla,\sla_{in},\sla_{out})}
\end{prooftree}
}
\\[10pt]
\begin{prooftree}
\hypo{\pbl  \cmd\ : (\sla,\sla_{in},\sla_{out})}
\infer1[(SUB)]{\pbl  \cmd\ : (\sla\mathbf{+1},\sla_{in},\sla_{out})}
\end{prooftree}
\quad \quad
\begin{prooftree}
\infer0[(SK)]{\pbl \skp\ : (\tiera,\sla_{in},\sla_{out})}
\end{prooftree}
\\[10pt]
\begin{prooftree}
\hypo{\pbl \x: (\sla_1,\sla_{in},\sla_{out})}
\hypo{\pbl \e: (\sla_2,\sla_{in},\sla_{out})}
\hypo{\sla_1  \ord \sla_2}
\infer3[(A)]{\pbl \x \asg \e \ : (\sla_1,\sla_{in},\sla_{out})}
\end{prooftree}
\\[10pt]
\begin{prooftree}
\hypo{\pbl  \cmd_1 :(\sla,\sla_{in},\sla_{out})}
\hypo{\pbl \cmd_2:(\sla,\sla_{in},\sla_{out})}
\infer2[(S)]{\pbl \cmd_1 \sap \cmd_2 \ :(\sla,\sla_{in},\sla_{out})}
\end{prooftree}
\\[10pt]
{
\begin{prooftree}
\hypo{\pbl \e: (\sla,\sla_{in},\sla_{out})}
\hypo{\pbl \cmd_{\true}: (\sla,\sla_{in},\sla_{out})}
\hypo{\pbl \cmd_{\false}:(\sla,\sla_{in},\sla_{out})}
\infer3[(C)]{\pbl \ifa (\e) \{\cmd_{\true}\} \elsea \{\cmd_{\false}\}\ :(\sla,\sla_{in},\sla_{out})}
\end{prooftree}
}
\\[10pt]
\begin{prooftree}
\hypo{\pbl \e: (\sla,\sla_{in},\sla_{out})}
\hypo{\pbl \cmd : (\sla,\sla,\sla_{out})}
\hypo{\tierb \ord \sla \ord \sla_{out}}
\infer3[(W)]{\pbl  \while (\e) \{\cmd\}\ :(\sla,\sla_{in},\sla_{out})}
\end{prooftree}
\\[10pt]
\begin{prooftree}
\hypo{\pbl \e: (\sla,\sla_{in},\sla)}
\hypo{\pbl \cmd : (\sla,\sla,\sla)}
\hypo{\tierb \ord \sla }
\infer3[(W$_{0}$)]{\pbl  \while (\e) \{\cmd\}\ :(\sla,\sla_{in},\tiera)}
\end{prooftree}
\\[10pt]
\hrulefill
\caption{Tier-based type system}%
\label{TS}

\end{figure*}

\subsection{Safe environments and programs}\label{ss:safe}
The typing rules of Figure~\ref{TS} are not restrictive enough in themselves to guarantee polynomial time computation, even for type-1. Indeed operators need to be restricted to prevent exponential programs from being typable (see counter-Example~\ref{counter:exp}). The current subsection introduces such a restriction, called \emph{safe}.

\begin{defi}[Safe operator typing environment]\label{sote}
An operator typing environment $\typop$ is \emph{safe} if for each $\op \in \dom(\typop)$ of arity $ar(\op)>0$, $\op$ is neutral or positive and $\sem{\op}$ is a polynomial time computable function, and for each $\sla_{in} \in \SL$, and for each $\sla_1 \to \ldots \sla_{ar(\op)} \to \sla \in \typop(\op)(\sla_{in})$, the two conditions below hold:
\begin{enumerate}
\item $\sla \ord \meet_{i=1}^{ar(\op)} \sla_i \ord \join_{i=1}^{ar(\op)} \sla_i \ord \sla_{in}$,\label{premier}
\item  if the operator $\op$ is positive then $\sla \ordst \sla_{in}$.\label{second}
\end{enumerate}
\end{defi}

\begin{exa}\label{tre}
Consider the operators $\meq{}$, $\mpred$ and $\msuc{i}$ of Example~\ref{ex:1}.
For a safe typing environment $\Delta$, it holds that $\Delta(\meq{})(\tierb)=\{\tierb \to \tierb \to \tierb \} \cup \{\sla \to \sla' \to \tiera \ | \ \sla,\sla' \ord \tierb\}$, as $\meq{}$ is neutral. However $\tiera \to \tierb \to \tierb \notin \Delta(\meq{})(\tierb)$ as  it breaks Condition (\ref{premier}) of Definition~\ref{sote} since the operator output tier has to be smaller than each of its operand tier (\ie, $ \tierb \not\ord \tiera \meet  \tierb$).

It also holds that $\Delta(\mpred)(\tierc)=\{\tierc \to \sla \ | \ \sla \ord \tierc \}\cup \{ \tierb \to \sla \ | \ \sla \ord \tierb \} \cup \{\tiera \to \tiera \}$.

For the positive operator $\msuc{i}$, we have $\Delta(\msuc{i})(\tierb) =\{ \tierb \to \tiera ,\tiera \to \tiera\}$. $\tierb \to \tierb \notin \Delta(\msuc{i})(\tierb)$ as the operator output tier has to be strictly smaller than $\tierb$, due to Condition (\ref{second}) of Definition~\ref{sote}. Applying the same restriction, it holds that $\Delta(\msuc{i})(\tierc)=\{\tierc \to \tierb, \tierc \to \tiera, \tierb \to \tierb, \tierb \to \tiera, \tiera \to \tiera \}$.
\end{exa}

\begin{defi}[Safe program]~\label{def:safe}
Given $\typenv$ a variable typing environment and $\typop$ a safe operator typing environment, the program $\prog = \cmd  \ \ret \x$ is a \emph{safe program}  if there are $ \sla, \sla_{in},\sla_{out}$ such that $\rho \rightslice \pbl \cmd : (\sla,\sla_{in},\sla_{out})$.
\end{defi}

\begin{defi}
Let $\st$ be the set of safe and terminating programs and $\sem{\st}$ be the set of functionals computed by programs in $\st$:
\[\sem{\st}=\{ \lambda \phi.\lambda \w_1.\cdots \lambda w_n .\sem{\prog}(\w_1,\ldots,\w_n) \ | \ \prog \in \st\}.\]

\end{defi}

\subsection{Some intuitions}
Before providing a formal treatment of the type system's main properties in \S\ref{s:prop}, we provide the reader with a brief intuition of types, that are triplets of tiers $(\sla,\sla_{in},\sla_{out})$, in a typing derivation obtained by applying the typing rules of Figure~\ref{TS}:
\begin{itemize}
\item $\sla$ is the tier of the expression or command under consideration. It is used to prevent data flows from lower tiers to higher tiers in control flow commands and assignments. By safety and by rules (OP) and (OR), expression tiers are structurally decreasing. Consequently, rule (A) ensures that data can only flow from higher tiers to lower tiers. Command tiers are structurally increasing and, consequently, an assignment of a higher tier variable can never be controlled by a lower tier in a conditional or while command. The subtyping rule (SUB) for commands follows this discipline by allowing a command of tier $\sla$ to be considered as a tier $\sla\mathbf{+1}$ command and, hence, controlled by an expression of tier $\sla\mathbf{+1}$. However subtyping is strictly prohibited for expressions as this would break the flow.
\item $\sla_{in}$ is the tier of the innermost while loop containing the expression or command under consideration, provided it exists. It is used to allow declassification (\ie, a release of some information at a lower tier to a higher tier) to occur in the program by allowing an operator to have types depending on the context. Moreover, the innermost tier restricts the return types of operators and oracle calls:
\begin{itemize}
\item in rule (OR), the return type $\sla$ is strictly smaller than $\sla_{in}$,
\item in rule (OP), for a positive operator, the return type $\sla$ is strictly smaller than $\sla_{in}$.
\end{itemize}
This forbids programs from iterating on a data whose size can increase during the iteration.
\item $\sla_{out}$ is the tier of the outermost while loop containing the expression or command under consideration, provided it exists. Its purpose is to bound by a constant the number of lookahead revisions (that is the number of times a query to the oracle may increase in size) allowed in oracle calls. By rule (OR), all oracle input bounds have a tier equal to the tier of the outermost while loop where they are called. Hence, the size of the data stored in the input bound cannot increase in a fixed while loop and it can increase at most a constant number of times.
\end{itemize}
There are two rules (W) and (W$_0$) for while loops. (W) is the standard rule and updates the innermost tier with the tier of the while loop guard under consideration. (W$_0$) is an initialization rule that allows the programmer to instantiate by default the main command with outermost tier $\tiera$ as it has no outermost while. It could be sacrificed for simplicity but at the price of a worst expressive power.

\section{Examples}\label{s:ex}
In this section, we provide several examples and counter-examples, starting with programs with no oracle calls in order to illustrate how the type system works.
Some of its restrictions in terms of expressive power are also discussed in Example~\ref{oracles}. In the typing derivations, we sometimes omit the environments, writing $\vdash$ instead of $\pbl$ in order to lighten the notations. Moreover, for notational convenience, we will use labels for expression tiers. For example,  $\e^{\sla}$ means that $\e$ is of tier $\sla$. Also, to make the presentation of the examples lighter, we will work over the unary integers rather than all of $\W$. In particular, a value $\variable{v}$ denotes $1^\variable{v}$, and in particular $0$ denotes $\epsilon$. Also, with this convention, $\sem{\mpred}(\variable{v})=\max\{0,\variable{v}-1\}$ and $\sem{\msuc{1}}(\variable{v})=\variable{v}+1$.

\begin{exa}[Addition]\label{e1}
Consider the simple program below, with no oracle, computing the unary addition.

\begin{minipage}{0.9\textwidth}
\begin{lstlisting}[language=C, mathescape, frame=shadowbox, rulesepcolor=\color{blue!25},backgroundcolor=\color{blue!5}]
  $\while(\x > 0)^\tierb \{ $
    $\x^\tierb \asg \mpred(\x)^\tierb\sap$
    $\y^\tiera \asg \msuc{1}(\y)^\tiera$
  $\}$
  $\ret\ \y$
\end{lstlisting}
\end{minipage}

This program is safe with respect to the following typing derivation:
\[
\centering
\scalebox{0.95}{
\begin{prooftree}
\hypo{\Gamma(\x)=\tierb}
\infer1[(V)]{\vdash \x : (\tierb,\tierb,\tierb)}
\infer1[(OP)]{\vdash \x>0 : (\tierb,\tierb,\tierb)}
\hypo{}
\ellipsis{}{\rho_1 \rightslice  \vdash \x \asg \mpred(\x) : (\tierb,\tierb,\tierb) }
\hypo{}
\ellipsis{}{\rho_2 \rightslice \vdash \y \asg \msuc{1}(\y) : (\tiera,\tierb,\tierb)}
\infer2[(S)]{\vdash \x \asg \mpred(\x) \sap \y \asg \msuc{1}(\y) : (\tierb,\tierb,\tierb)}
\infer2[(W$_0$)]{\vdash \while(\x>0)\{\x \asg \mpred(\x) \sap \y \asg \msuc{1}(\y) \} : (\tierb,\tierb,\tiera)}
\end{prooftree}
}
 \]
The while loop is guarded by $\x > 0$. If the main command is typed by $({\tierb},\tierb, \tiera)$ then the expression $\x > 0$ is of tier $\tierb$ by the typing rule (W$_0$). Consequently, the variable $\x$ is forced to be of tier $\tierb$ using the type $\tierb \to \tierb$ for the operator $>0$ in the (OP) rule. $\tierb \to \tierb \in \Delta(> 0)(\tierb)$ holds as the operator $>0$ is neutral. One application of the subtyping rule (SUB) is performed for the sequence to be typed as the subcommands are required to have homogeneous types.

The typing derivation $\rho_1$ is as follows:
\[
\begin{prooftree}
\hypo{\Gamma(\x)=\tierb}
\infer1[(V)]{\vdash \x : (\tierb,\tierb,\tierb)}
\hypo{\Gamma(\x)=\tierb}
\infer1[(V)]{\vdash \x : (\tierb,\tierb,\tierb)}
\infer1[(OP)]{\vdash \mpred(\x) : (\tierb,\tierb,\tierb)}
\infer2[(A)]{\rho_1 \rightslice  \vdash \x \asg \mpred(\x) : (\tierb,\tierb,\tierb) }
\end{prooftree}
\]

In $\rho_1$, the $\mpred$ operator is used with the type $\tierb \to \tierb$ in the (OP) rule. This use is authorized as, $\mpred$ is neutral and, consequently, $\tierb \to \tierb \in \Delta(\mpred)(\tierb)$. As a consequence, the  rule (A) in $\rho_1$ can be derived as the tier of the assigned variable $\x$ (equal to $\tierb$) is smaller than the tier of the expression $\mpred(\x)$ (also equal to $\tierb$).

The second typing derivation $\rho_2$ is as follows:
\[
\begin{prooftree}
\hypo{\Gamma(\y)=\tiera}
\infer1[(V)]{\vdash \y : (\tiera,\tierb,\tierb)}
\hypo{\Gamma(\y)=\tiera}
\infer1[(V)]{\vdash \y : (\tiera,\tierb,\tierb)}
\infer1[(OP)]{\vdash \msuc{1}(\y) : (\tiera,\tierb,\tierb)}
\infer2[(A)]{\rho_2 \rightslice \vdash \y \asg \msuc{1}(\y) : (\tiera,\tierb,\tierb) }
\end{prooftree}
\]

The only distinction between $\rho_2$ and $\rho_1$ is that the operator $\msuc{1}$ is positive. Consequently, with an innermost tier of $\tierb$, the type $\tierb \to \tierb$ is not authorized for such an operator (since $\tierb \to \tierb \notin \Delta(\msuc{1})(\tierb)$). Indeed, by Example~\ref{tre}, $\Delta(\msuc{1})(\tierb) =\{ \tierb \to \tiera ,\tiera \to \tiera\}$. The type $\tierb \to \tiera$ is ruled out as it would require a non-homogeneous type for $\y$. Consequently, the rule (OP) is applied on type $\tiera \to \tiera$ and the variable $\y$ must be of tier $\tiera$. Notice that the program could also be typed by assigning higher tiers $\sla $ and $\sla'$ such that $\sla' \ordst \sla$, to $\x$ and $\y$, respectively.
\end{exa}

\begin{exa}[Exponential]\label{counter:exp}
The program below, computing the exponential, is not safe.

\begin{minipage}{0.9\linewidth}
\begin{lstlisting}
  $\while ( \x> 0)\{$
    $\z \asg \y\sap$
    $ \while (\z> 0)\{$
      $ \z \asg \mpred(\z) \sap$
      $\y \asg \msuc{1}(\y)$
    $\} \sap$
    $\x \asg \mpred(\x) $
  $\}$
  $\ret\ \y$
\end{lstlisting}
\end{minipage}

By contradiction, suppose that it can be typed with respect to the typing environments $\Gamma$ and $\Delta$. Let $\Gamma(\x),\Gamma(\y)$ and $\Gamma(\z)$ be $\sla_\x,\sla_\y$ and $\sla_\z$, respectively.

The subcommand $\z \asg \y$ enforces $\sla_{\z} \ord \sla_\y$ to be satisfied for the following typing derivation to hold.
\[
\begin{prooftree}
\hypo{\Gamma(\z)=\sla_\z}
\infer1[(V)]{\vdash \z : (\sla_\z,\sla_{in},\sla_{out})}
\hypo{\Gamma(\y)=\sla_\y}
\infer1[(V)]{\vdash \y : (\sla_\y,\sla_{in},\sla_{out})}
\infer2[(A)]{\rho_1 \rightslice \vdash \z \asg \y : (\sla_\z,\sla_{in},\sla_{out})}
\end{prooftree}
\]

The subcommand $\y \asg \msuc{1}(\y)$ enforces the constraint $\sla_{\y} \ordst \sla_{in}$, $\sla_{in}$ being the command innermost tier, for the typing derivation to hold.
\[
\begin{prooftree}
\hypo{\Gamma(\y)=\sla_\y}
\infer1[(V)]{\vdash \y : (\sla_\y,\sla_{in},\sla_{out})}
\hypo{\sla_\y \to \sla_\y \in \Delta(\msuc{1})(\sla_{in})}
\hypo{\Gamma(\y)=\sla_\y}
\infer1[(V)]{\vdash \y : (\sla_\y,\sla_{in},\sla_{out})}
\infer2[(OP)]{\vdash \msuc{1}(\y) : (\sla_\y,\sla_{in},\sla_{out})}
%\hypo[]{\sla_{\y} \ord \sla_\y}
\infer2[(A)]{\rho_2 \rightslice\vdash \y \asg \msuc{1}(\y) : (\sla_\y,\sla_{in},\sla_{out}) }
\end{prooftree}
\]

Indeed, as $\msuc{1}$ is a positive operator, by Condition~\ref{second} of Definition~\ref{sote}, $\sla_{\y} \ordst \sla_{in}$ has to be satisfied for $\sla_\y \to \sla_\y \in \Delta(\msuc{1})(\sla_{in})$ to hold.

The innermost while loop enforces the constraint $ \sla_{in} \ord \sla_\z$ in the following typing derivation.
\[
\begin{prooftree}
\hypo{\Gamma(\z)=\sla_\z}
\infer1[(V)]{\vdash \z : (\sla_\z,\sla_{in}',\sla_{out})}
\infer1[(OP)]{\vdash \z>0 : (\sla_{in},\sla_{in}',\sla_{out})}
\hypo{}
\ellipsis{}{}
\infer1[(S)]{\vdash \z \asg \mpred(\z) \sap \y \asg \msuc{1}(\y) : (\sla_{in},\sla_{in},\sla_{out})}
\infer2[(W)]{\rho_3 \rightslice\vdash \while(\z>0)\{\z \asg \mpred(\z) \sap \y \asg \msuc{1}(\y) \} : (\sla_{in},\sla_{in}',\sla_{out})}
\end{prooftree}
\]

 First, notice that only the rule (W) can be applied to this typing derivation as the corresponding subcommand is already contained inside a while loop and, consequently, $\tierb \ord \sla_{out}$ is enforced by the outermost while loop using rule (W) or rule (W$_0$). Second, the tier of this subcommand is equal to the innermost tier $\sla_{in}$ of subcommand
$\y \asg \msuc{1}(\y)$ (in $\rho_2$). Indeed, rules (W) and (W$_0$) are the only typing rules updating the innermost tier and there is no while loop in between. Finally, in the rule (OP), as $>0$ is neutral, Condition~\ref{premier} of Definition~\ref{sote} enforces that $\sla_{in} \ord \sla_\z \ord \sla_{in}'$ holds for the program to be typed.

Putting all the above constraints together, we obtain the contradiction $\sla_\z  \ord \sla_\y \ordst \sla_{in} \ord \sla_\z$. Consequently, the program cannot be typed.
\end{exa}

%\begin{rem}
%Allowing subtyping for expressions would break the type system properties.
%Indeed, adding a typing rule of the shape:
%$$
%
%\begin{prooftree}
%\hypo{\pbl \x : (\tiera,\sla_{in},\sla_{out})}
%\infer1[(X)]{\pbl \x : (\tierb,\sla_{in},\sla_{out})}
%\end{prooftree}
%$$
%
%would make the exponential program described in Example~\ref{counter:exp} typable, hence breaking Theorem~\ref{thm:pol}.

%$$
%\begin{prooftree}
%\hypo{\Gamma(\x)=\tierb}
%\infer1{ \pbl \x :  (\tierb,\tierb, \tierb)}
%\hypo{\ldots}
%\infer2[(OP)]{\pbl \x>0: (\tierb,\tierb,\tierb)}
%\hypo{\Gamma(\z)=\tierb}
%\infer1{\pbl \z :  (\tierb,\tierb, \tierb)}
%\hypo{\Gamma(\y)=\tiera}
%\infer1{\pbl \y : (\tiera,\tierb, \tierb)}
%\infer1[(X)]{\pbl \y : (\tierb,\tierb, \tierb)}
%\infer2[(A)]{\pbl \z \asg \y :  (\tierb, \tierb, \tierb)}
%\hypo{}
%\ellipsis{}{\pbl \cmd : (\tierb, \tierb, \tierb)}
%\infer2[(S)]{\pbl \z \asg \y \sap \cmd : (\tierb, \tierb, \tierb)}
%\infer2[(W$_0$)]{\pbl \while(x>0)\{\z \asg \y \sap \cmd\}: (\tierb,\tierb,\tiera)}
%\end{prooftree}
%$$

%with $\cmd = \while(\z >0)\{ \y \asg \y+1\sap\z \asg \z-1\}\sap \x \asg \x-1$.

%Subtyping in the other direction trivially breaks the non-interference property of Theorem~\ref{thm:ni} as illustred by the following typing derivation:
%$$
%
%\begin{prooftree}
%\hypo{\Gamma(\x)=\tierb}
%\infer1{\pbl \x : (\tierb,\sla_{in},\sla_{out} )}
%\infer1[(X)]{ \pbl \x :  (\tiera,\sla_{in},\sla_{out} )}
%\hypo{ \vdash \e :(\tiera,\sla_{in},\sla_{out} )}
%\infer2[(A)]{\pbl \x \asg \e : (\tiera,\sla_{in},\sla_{out} )}
%\end{prooftree}
%$$
%\end{rem}

\begin{exa}[Multiple tiers]\label{ex:multiple}
Consider the following program illustrating the use of multiple tiers.

\begin{lstlisting}
 $\cmd_1 : \while ( \x> 0)^\tierc\{$
      $\x^\tierc \asg \mpred(\x)^\tierc \sap$
      $\y^\tierb \asg \msuc{1}(\msuc{1}(\y))^\tierb$
    $\}$ $\sap$
 $\cmd_2 : \while ( \y> 0)^\tierb\{$
      $\y^\tierb \asg \mpred(\y)^\tierb \sap$
      $\z^\tiera \asg \msuc{1}(\msuc{1}(\z))^\tiera$
    $\}$
    $\ret\ \z$
\end{lstlisting}

The program is safe with respect to the variable typing environment $\Gamma$ such that $\Gamma(\x)=\tierc$, $\Gamma(\y)=\tierb$ and $\Gamma(\z)=\tiera$.
The main command can be typed by $(\tierc,\tierc,\tiera)$ as illustrated below, provided that $\cmd_1$ and $\cmd_2$ are the commands corresponding to the first while loop and second while loop, respectively.

\[
\begin{prooftree}
\hypo{\rho_1}
\ellipsis{}{}
\infer1[(W$_0$)]{\rho_1 \rightslice \vdash \cmd_1 : (\tierc,\tierc,\tiera)}
\hypo{\rho_2}
\ellipsis{}{}
\infer1[(W$_0$)]{\rho_2 \rightslice\vdash \cmd_2 : (\tierc,\tierc,\tiera)}
\infer2[(S)]{\vdash \cmd_1 \sap \cmd_2 : (\tierc,\tierc,\tiera)}
\end{prooftree}
\]

The typing derivation $\rho_1$ corresponds to the first while loop $\cmd_1$ and is described below.

\[
\scalebox{0.9}{
\begin{prooftree}
\hypo{\Gamma(\x)=\tierc}
\infer1[(V)]{\vdash \x : (\tierc,\tierc,\tierc)}
\infer1[(OP)]{\vdash \x>0 : (\tierc,\tierc,\tierc )}
\hypo{\rho_1^1}
\ellipsis{}{}
\infer1[(A)]{\vdash \x \asg \mpred(\x):(\tierc,\tierc,\tierc ) }
\hypo{\rho_2^1}sub
\ellipsis{}{}
\infer1[(A)]{\vdash \y \asg \msuc{1}(\msuc{1}(\y)) : (\tierb,\tierc,\tierc )}
\infer1[(SUB)]{\vdash \y \asg \msuc{1}(\msuc{1}(\y)) : (\tierc,\tierc,\tierc )}
\infer2[(S)]{\vdash\x \asg \mpred(\x) \sap \y \asg \msuc{1}(\msuc{1}(\y)) : (\tierc,\tierc,\tierc )}
\infer2[(W$_0$)]{\vdash \cmd_1 : (\tierc,\tierc,\tiera)}
\end{prooftree}
}
\]

The typing derivation $\rho_1^1$ can be built easily using rules (A), (OP), and (V) as $\mpred$ is neutral and can be 0.9given the type $\tierc \to \tierc$ in $\Delta(\mpred)(\tierc)$ (see Example~\ref{tre}). The typing derivation $\rho_2^1$ can be built using the same rules as $\msuc{1}$ is positive and can be given the type $\tierb \to \tierb$ in $\Delta(\msuc{1})(\tierc)$ (see Example~\ref{tre} again). $\rho_2^1$ requires the prior application of subtyping rule (SUB) as the tier of the assignment is equal to $\Gamma(\y)=\tierb$.

The typing derivation $\rho_2$, described below,

\[
\scalebox{0.9}{
\begin{prooftree}
\hypo{\Gamma(\y)=\tierb}
\infer1[(V)]{\vdash \y : (\tierb,\tierc,\tierb)}
\infer1[(OP)]{\vdash \y>0 : (\tierb,\tierc,\tierb )}
\hypo{\rho_1^2}
\ellipsis{}{}
\infer1[(A)]{\vdash \y \asg \mpred(\y):(\tierb,\tierb,\tierb ) }
\hypo{\rho_2^2}
\ellipsis{}{}
\infer1[(A)]{\vdash \z \asg \msuc{1}(\msuc{1}(\z)) : (\tiera,\tierb,\tierb )}
\infer1[(SUB)]{\vdash \z \asg \msuc{1}(\msuc{1}(\z)) : (\tierb,\tierb,\tierb )}
\infer2[(S)]{\vdash\y \asg \mpred(\y) \sap \z \asg \msuc{1}(\msuc{1}(\z)) : (\tierb,\tierb,\tierb )}
\infer2[(W$_0$)]{\vdash \cmd_2 : (\tierb,\tierc,\tiera)}
\infer1[(SUB)]{\vdash \cmd_2 : (\tierc,\tierc,\tiera)}
\end{prooftree}
}
\]

can be obtained in a similar way by taking the type $\tierb \to \tierb$ for the neutral operator $\mpred$ in $\Delta(\mpred)(\tierb)$ and the type $\tiera \to \tiera$ for the positive operator $\msuc{1}$ in $\Delta(\msuc{1})(\tierb)$. The initial subtyping rule is required as it is not possible to derive $\vdash \y>0 : (\tierc,\tierc,\tierc )$ with the requirement that $\Gamma(\y)=\tierb$.

It is worth noticing that the above program cannot be typed with only two tiers $\{\tiera,\tierb\}$. Indeed, the first while loop enforces that $\Gamma(\y) \ordst \Gamma(\x)$ and the second while loop enforces that $\Gamma(\z) \ordst \Gamma(\y)$. More generally, the program can be typed by $(\sla\mathbf{+2},\sla\mathbf{+2},\tiera)$ or $(\sla\mathbf{+2},\sla\mathbf{+2},\sla\mathbf{+2})$, for any tier $\sla$.
\end{exa}

\begin{exa}[Oracle]\label{oracles}
For a given input $\x$ and a given oracle $\phi$, the program below computes whether there exists a unary integer $n$ of size smaller than $\size{\x}$ such that $\phi(n) =0$.

\begin{lstlisting}
  $\y^\tiera \asg \x^\tierb \sap$
  $\z^\tiera \asg 0 \sap$
  $\while ( \x>= 0)^\tierb\{$
    $\ifa( \phi(\y^\tiera \upharpoonright \x^\tierb)\meq{}0)^\tiera\{$
      $\z^\tiera \asg 1$
    $\}$
   $\elsea\{\skp\} \sap$
    $\x^\tierb \asg \mpred(\x)^\tierb $
  $\}$
  $\ret\ \z$
\end{lstlisting}

This program is safe and can be typed by $(\tierb,\tierb,\tiera)$ under the variable typing environment $\Gamma$ such that $\Gamma(\x)=\tierb$ and $\Gamma(\y)=\Gamma(\z)=\tiera$.  The constants $0$ and $1$ can be considered to be neutral operators of zero arity and, hence, can be given any tier smaller than the innermost tier. It is easy to verify that the commands $\z \asg 1$, $\skp$, and $\x \asg \mpred(\x)$ can be typed by $(\tiera,\tierb,\tierb)$, $(\tiera,\tierb,\tierb)$, and $(\tierb,\tierb,\tierb)$, respectively, using typing rules (OP), (SK), and (A).

The conditional subcommand can be typed as described below.

\[
\scalebox{0.9}{
\begin{prooftree}
\hypo{\tiera \to \tierb \to \tiera \in \Delta(\meq{})(\tierb)}
\hypo{\Gamma(\y)=\tiera}
\infer1[(V)]{\vdash \y : (\tiera,\tierb, \tierb)}
\hypo{\Gamma(\x)=\tierb}
\infer1[(V)]{\vdash \x : (\tierb,\tierb, \tierb)}
\infer2[(OR)]{\vdash \phi(\y \upharpoonright \x) : (\tiera,\tierb,\tierb)}
\hypo{}
\infer1[(OP)]{\vdash 0 : (\tierb,\tierb,\tierb)}
\infer3[(OP)]{\vdash \phi(\y \upharpoonright \x) \meq{}0 : (\tiera,\tierb,\tierb)}
\hypo{\vdots}
\infer2[(C)]{\vdash \ifa( \phi(\y \upharpoonright \x)\meq{}0)\{ \z \asg 1\}\elsea\{\skp\} : (\tiera,\tierb,\tierb)}
\end{prooftree}
}
\]

The while loop will be typed using rule $(W_0)$. Consequently, the inner command can be typed by $(\tierb,\tierb,\tierb)$ after applying subtyping once.

Notice that the equivalent program obtained by swapping $\x$ and $\y$ in the oracle input (\ie, $\phi(\x \upharpoonright \y)$) is not typable as the tier of $\x$ has to be strictly smaller than the innermost tier in typing rule (OR). Although this requirement restricts the expressive power of the type system, it is strongly needed as it prevents uncontrolled loops on oracle outputs to occur. In particular, commands of the shape
$\while(\x>0)\{ \x \asg \phi(\x \upharpoonright \x)\}$
are rejected by the type system.

Note that the above program is typable as the oracle calls are performed in a decreasing order and, hence, does not break the finite lookahead revision property, which will be presented in \S\ref{ss:flr}.

Now consider the equivalent program where oracle calls are performed in increasing order.

\begin{lstlisting}
  $\x \asg 0 \sap$
  $\z \asg 0 \sap$
  $\while ( \y>= \x)\{$
    $\ifa( \phi(\y \upharpoonright \x)\meq{}0)\{$
      $\z \asg 1$
    $\}$
   $\elsea\{\skp\} \sap$
    $\x \asg \msuc{1}(\x) $
 $\}$
 $\ret\ \z$
\end{lstlisting}

This program is not a safe program. Suppose, by contradiction, that it can be typed with respect to a safe operator typing environment. The innermost tier $\sla$ of the commands under the while will be equal to the tier of the guard  $\y>= \x$, independently of whether rule (W) or rule (W$_0$) is used to type the while command. Moreover, $\x$ has a tier $\sla_\x$ such that $\sla \ord \sla_\x$, using rule (OP) on the guard and, by definition of safe typing environments.
Now $\msuc{1}$ is a positive operator and, consequently, by rule (OP) and, by definition of safe typing environments again, $\msuc{1}(\x)$ has a tier $\sla_{\msuc{1}(\x)}$ strictly smaller than the innermost tier, \ie, $\sla_{\msuc{1}(\x)} \ordst \sla$. By typing rule (A), in order to be typed, the command $\x \asg \msuc{1}(\x)$ enforces  $\sla_\x \ord \sla_{\msuc{1}(\x)}$. Hence, we obtain a contradiction: $\sla \ordst \sla$.
\end{exa}

\begin{exa}[Multiple tiers with oracle]
The following program
computes the function $\Sigma_{i=0}^{\max_{x=0}^{n}\phi(x)}\phi(i)$.

\begin{lstlisting}
  $\x^\tierd \asg n \sap \y^\tierc \asg \x^\tierd \sap \z^\tierc  \asg 0 \sap$
  $\while ( \x^\tierd>= 0)\{$
    $\z^\tierc \asg \max(\phi(\y^\tierc \upharpoonright \x^\tierd)^\tierc,\z^\tierc)\sap$
    $\x^\tierd \asg \mpred(\x^\tierd)$
  $\}\sap \vvv^\tierb \asg \z^\tierc \sap \uu^\tiera \asg 0\sap$
  $\while ( \z^\tierc>= 0)\{$
    $\ww^\tierb \asg \phi(\vvv^\tierb \upharpoonright \z^\tierc)^\tierb \sap$
    $\while ( \ww^\tierb>= 0)\{$
      $\uu^\tiera \asg \msuc{1}(\uu)^\tiera\sap \ww^\tierb \asg \mpred(\ww^\tierb)$
    $\}\sap \z ^\tierc \asg \mpred(\z)^\tierc$
  $\}$
  $\ret\ \uu$
\end{lstlisting}

This program can be typed by $(\tierd,\tierd,\tiera)$ under the variable type assignment $\Gamma$ such that $\Gamma(\x)=\tierd$, $\Gamma(\y)=\Gamma(\z)=\tierc$, $\Gamma(\vvv)=\Gamma(\ww)=\tierb$, and $\Gamma(\uu)=\tiera$.

The first while loop will be typed using rule (W$_0$). Consequently, its inner command is typed by $(\tierd,\tierd,\tierd)$. As the $\max$ operator is neutral, it can be given the type $\tierc \to \tierc \to \tierc \in \Delta(\max)(\tierd)$. The oracle call is typable as the input data $\y$ has a tier strictly small than the innermost tier ($\tierd$) and the input bound has tier equal to the outermost tier ($\tierd$).

The second while loop can be typed using rule (W$_0$) after applying subtyping rule (SUB). Consequently, its  inner command is typed by $(\tierc,\tierc,\tierc)$. The oracle call is performed on input data of strictly smaller tier ($\tierb$) and on input bound of tier equal to the outermost tier ($\tierc$). The inner while loop can be typed using rule (W) and thus updates the innermost tier to $\tierb$. Consequently, $\msuc{1}$ is enforced to be of tier $\tiera \to \tiera$ in the inner command.
\end{exa}

\section{Properties of safe programs}\label{s:prop}
We now show the main properties of safe programs:
\begin{itemize}
\item a standard non-interference property in \S\ref{ss:ni} ensuring that computations on higher tiers do not depend on lower tiers (Theorem~\ref{thm:ni}).
\item a polynomial time property in \S\ref{ss:psc} ensuring that terminating programs terminate in time polynomial in the input size and maximal oracle output size (Theorem~\ref{thm:pol}).
\item a finite lookahead revision property in \S\ref{ss:flr} ensuring that, for any oracle and any input, the number of oracle calls on input of increasing size is bounded by a constant (Theorem~\ref{thm:flr}).
\end{itemize}

\subsection{Notation}
Let us first introduce some preliminary notation. Let $\E(a)$ (res. $\Cb(a)$) be the set of expressions (respectively commands) occurring in $a$, for $a \in \{\prog,\cmd\}$. Let $\A(\cmd)$ be the set of variables that are assigned to in $\cmd$, \eg, $\A(\x \asg \y \sap \y \asg \z)= \{\x,\y\}$. Let $\Oo(\e)$ and $\FV(\e)$ be the set of operators in expression $\e$ and the set of variables in expression $\e$, respectively.

\subsection{Non-interference}\label{ss:ni}
We now show that the type system provides classical non-interference properties.

In a safe program, only variables of tier higher than $\sla$ can be accessed to evaluate an expression of tier $\sla$.
\begin{lem}[Simple security]\label{lem:ss}
Given a safe program $\prog $ with respect to the typing environments $\typenv,\typop$, for any expression $\e \in \E(\prog)$, if $\ \pbl \e : (\sla,\sla_{in},\sla_{out})$, then  for all $ \x \in \FV(\e)$, $ \sla \ord \Gamma(\x)$.

\end{lem}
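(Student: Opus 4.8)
The plan is to proceed by structural induction on the expression $\e$, equivalently by induction on its typing derivation, using the three expression rules (V), (OP), and (OR) of Figure~\ref{TS}. Since subtyping is forbidden for expressions, the syntactic shape of $\e$ determines the last rule uniquely, so no extra case analysis is needed. In the base case $\e = \x$, rule (V) forces $\sla = \Gamma(\x)$; as $\FV(\x) = \{\x\}$, the inequality $\sla \ord \Gamma(\x)$ holds by reflexivity of $\ord$.

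In the operator case $\e = \op(\overline{\e})$, the last rule (OP) provides a type $\sla_1 \to \cdots \to \sla_{ar(\op)} \to \sla \in \typop(\op)(\sla_{in})$ together with typings $\pbl \e_i : (\sla_i,\sla_{in},\sla_{out})$ for each $i$. The key ingredient is the safety of $\typop$: Condition~\ref{premier} of Definition~\ref{sote} gives $\sla \ord \meet_{i=1}^{ar(\op)} \sla_i$, hence $\sla \ord \sla_i$ for every $i$. By the induction hypothesis applied to each $\e_i$, every $\y \in \FV(\e_i)$ satisfies $\sla_i \ord \Gamma(\y)$, so transitivity of $\ord$ yields $\sla \ord \Gamma(\y)$. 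Since $\FV(\op(\overline{\e})) = \bigcup_i \FV(\e_i)$ (the arity-zero constant case being vacuous), the claim follows.

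In the oracle case $\e = \phi(\e_1 \upharpoonright \e_2)$, rule (OR) gives $\pbl \e_1 : (\sla,\sla_{in},\sla_{out})$, $\pbl \e_2 : (\sla_{out},\sla_{in},\sla_{out})$, and the side condition $\sla \ord \sla_{out}$. For $\y \in \FV(\e_1)$ the induction hypothesis directly gives $\sla \ord \Gamma(\y)$; for $\y \in \FV(\e_2)$ it gives only $\sla_{out} \ord \Gamma(\y)$, which combines with $\sla \ord \sla_{out}$ by transitivity to recover $\sla \ord \Gamma(\y)$. As $\FV(\phi(\e_1 \upharpoonright \e_2)) = \FV(\e_1) \cup \FV(\e_2)$, the induction is complete.

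I do not expect a serious obstacle, as this is a routine structural induction. The only subtlety is recognizing that the two situations where an argument tier ($\sla_i$ in (OP), or $\sla_{out}$ in (OR)) may legitimately exceed $\sla$ are precisely reconciled by the safety Condition~\ref{premier} and by the (OR) side condition $\sla \ord \sla_{out}$. Indeed, the safety restriction on operators is exactly what makes the operator case go through: without it, an operator could produce an output of tier strictly above one of its argument tiers, breaking the inequality.
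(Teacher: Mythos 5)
Your proof is correct and follows exactly the approach the paper itself uses---the paper's proof is simply stated as ``by structural induction on expressions,'' and your three cases (V), (OP) with safety Condition~\ref{premier}, and (OR) with the side condition $\sla \ord \sla_{out}$ are precisely the details that induction requires. Your handling of the nullary-operator case and the transitivity steps is accurate, so there is nothing to correct.
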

\begin{proof}
 By structural induction on expressions.\qedhere
 \end{proof}
There is no equivalent lemma for commands because of the subtyping rule (SUB).

\begin{cor}\label{coro:ss}
Given a safe program $\prog $ with respect to the typing environments $\typenv,\typop$,
for any $\x \asg \e \in \Cb(\prog)$, $\Gamma(\x) \ord  \meet_{\y \in \FV(\e)}\Gamma(\y)$.
\end{cor}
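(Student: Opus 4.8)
The plan is to read this off directly from the assignment rule (A) together with Simple Security (Lemma~\ref{lem:ss}), so the corollary is essentially immediate once the shape of the typing derivation for an assignment is pinned down. First I would fix an arbitrary assignment command $\x \asg \e \in \Cb(\prog)$ and recall that, since $\prog$ is safe, there is a typing derivation $\rho$ for its main command in which $\x \asg \e$ occurs as a subcommand. I then want to locate the rule instance that actually types this assignment.

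Reading the rules of Figure~\ref{TS} bottom-up, the only rule whose conclusion is an assignment typing judgment is (A); the sole other rule that can apply to a command of this form is the subtyping rule (SUB), which merely raises the command tier of an already-typed command while leaving the command itself unchanged. Consequently, below the (possibly empty) block of (SUB) applications sitting above $\x \asg \e$ in $\rho$, the subderivation is concluded by an instance of (A). From that instance I read off expression judgments $\pbl \x : (\sla_1, \sla_{in}, \sla_{out})$ and $\pbl \e : (\sla_2, \sla_{in}, \sla_{out})$ together with the side condition $\sla_1 \ord \sla_2$. Since the tier of a variable is forced by rule (V), we have $\sla_1 = \Gamma(\x)$. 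Applying Lemma~\ref{lem:ss} to the judgment $\pbl \e : (\sla_2, \sla_{in}, \sla_{out})$ yields $\sla_2 \ord \Gamma(\y)$ for every $\y \in \FV(\e)$, whence $\sla_2 \ord \meet_{\y \in \FV(\e)} \Gamma(\y)$. Chaining the two inequalities gives $\Gamma(\x) = \sla_1 \ord \sla_2 \ord \meet_{\y \in \FV(\e)} \Gamma(\y)$, which is exactly the claimed bound.

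There is no real obstacle here, as the statement is a direct corollary of Lemma~\ref{lem:ss}; the only two points deserving a word of care are procedural. The first is the justification that (A) must sit at the root of the assignment's subderivation despite intervening (SUB) applications, which I would dispatch by noting that (SUB) is the unique command-subtyping rule and preserves the underlying command, so the chain of (SUB) instances above $\x \asg \e$ must bottom out in an application of (A). The second is the degenerate case $\FV(\e) = \emptyset$ — for instance when $\e$ is a constant operator — where $\meet_{\y \in \FV(\e)} \Gamma(\y)$ is an empty meet with no canonical value in $\SL$ (which has no top element); there the statement is to be understood as the universally quantified inequality $\forall \y \in \FV(\e),\ \Gamma(\x) \ord \Gamma(\y)$, which holds vacuously.
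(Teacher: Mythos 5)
Your proof is correct and follows essentially the same route as the paper's: extract the judgments $\pbl \x : (\sla_1,\sla_{in},\sla_{out})$ and $\pbl \e : (\sla_2,\sla_{in},\sla_{out})$ with $\sla_1 \ord \sla_2$ from the (A) instance, use rule (V) to get $\sla_1 = \Gamma(\x)$, and conclude via Lemma~\ref{lem:ss}. Your two extra points of care --- that intervening (SUB) applications preserve the command so the subderivation must bottom out in (A), and the reading of the empty meet when $\FV(\e)=\emptyset$ --- are sound refinements of details the paper leaves implicit.
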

\begin{proof}Given a command $\x \asg \e \in  \Cb(\prog)$ of a safe program with respect to the typing environments $\typenv,\typop$, there are $\sla_1, \sla_2,\sla_{in},\sla_{out}$ such that $\pbl \x: (\sla_1,\sla_{in},\sla_{out})$, $\pbl \e: (\sla_2,\sla_{in},\sla_{out})$ and $\sla_1  \ord \sla_2$, by typing rule (A). Applying Lemma~\ref{lem:ss}, $\forall \y \in \FV(\e)$, $\sla_2 \ord \Gamma(\y)$. By typing rule (V) $\sla_1 = \Gamma(\x)$ and, consequently, $\forall \y \in \FV(\e)$, $\Gamma(\x) \ord \Gamma(\y)$.
 \qedhere
 \end{proof}

The following Lemma states that command tiers are monotonic in their subcommand tier in a given typing derivation.
\begin{lem}\label{lem:strat}
Let $\rho$ be a typing derivation of a safe program  with respect to the typing environments $\typenv,\typop$. For any typing derivations $\rho_1 \rightslice \pbl \cmd_1 : (\sla^1,\sla^1_{in},\sla^1_{out}) \in \D(\rho)$ and $\rho_2 \rightslice \pbl \cmd_2 : (\sla^2,\sla^2_{in},\sla^2_{out})\in \D(\rho_1)$, $\sla^2 \ord \sla^1$.
\end{lem}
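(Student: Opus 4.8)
The plan is to prove the statement by induction on the height of $\rho_1$ (equivalently, on the number of rule applications separating $\rho_1$ from $\rho_2$), exploiting transitivity of $\ord$ together with a case analysis on the last typing rule applied in $\rho_1$. The guiding intuition, already announced in \S\ref{s:ts}, is that command tiers are structurally increasing, so passing from a command to any of its subcommands can only weakly decrease the command tier.

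First I would observe that any command subderivation $\rho_2 \in \D(\rho_1)$ is reached from the root of $\rho_1$ by a path passing exclusively through command premises: the rules producing expression typing judgments, namely (V), (OP) and (OR), have only expression premises, so no command derivation can occur strictly below an expression judgment. Consequently it suffices to understand how the command tier evolves across a single command rule and then to iterate.

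The core of the argument is that every command typing rule weakly decreases the command tier from the conclusion to each of its command premises. Inspecting Figure~\ref{TS}: in (S), (C), (W) and (W$_0$) each command premise carries exactly the command tier $\sla$ of the conclusion; in (SUB) the unique command premise has tier $\sla$ while the conclusion has tier $\sla\mathbf{+1}$, so the premise tier is $\ord$ the conclusion tier; and (SK) and (A) have no command premises at all, the premises of (A) being the expression judgments for $\x$ and $\e$. Hence for each immediate command premise $\rho' \rightslice \pbl \cmd' : (\sla',\sla'_{in},\sla'_{out})$ of the last rule of $\rho_1$ we obtain $\sla' \ord \sla^1$.

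With this in hand the induction is routine. If $\rho_2 = \rho_1$ then $\sla^2 = \sla^1$ and the claim holds by reflexivity of $\ord$. Otherwise $\rho_2$ is a proper command subderivation, so by the observation above $\rho_2 \in \D(\rho')$ for one of the immediate command premises $\rho'$ of the last rule of $\rho_1$; the induction hypothesis applied to $\rho'$ yields $\sla^2 \ord \sla'$, and the rule analysis gives $\sla' \ord \sla^1$, whence $\sla^2 \ord \sla^1$ by transitivity. The only subtlety to keep in mind is the treatment of (SUB), the sole rule for which premise and conclusion command tiers differ; since it still decreases the tier it poses no real difficulty. I therefore expect the main obstacle to be merely the (very mild) bookkeeping of confirming, rule by rule, that command premises never carry a tier exceeding that of the conclusion.
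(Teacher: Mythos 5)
Your proof is correct and follows essentially the same route as the paper: the paper's own argument is a terse proof by contradiction that appeals to exactly the monotonicity you establish rule by rule, namely that every command typing rule of Figure~\ref{TS} gives its command premises a tier $\ord$ that of its conclusion (with (SUB) being the only rule where the inequality is strict). Your structural induction with the explicit case analysis and transitivity of $\ord$ is simply the fully spelled-out version of that one-line monotonicity argument, so there is no substantive difference.
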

\begin{proof}
Suppose by contradiction that $\rho_1 <\rho_2$ and $\sla^1 \ordst \sla^2$ hold. As all typing rules for commands in Figure~\ref{TS} are monotonic in the command tier,  $\rho_2 \rightslice \pbl \cmd_2 : (\sla^2,\sla^2_{in},\sla^2_{out})$ cannot be derived from $\rho_1$.\qedhere
\end{proof}

The confinement Lemma expresses the fact that commands of tier $\sla$ cannot write in variables of strictly higher tier.

\begin{lem}[Confinement]\label{lem:confinement}
Given a safe program $\prog $ with respect to the typing environments $\typenv,\typop$,
for any $\cmd \in  \Cb(\prog)$, if $\ \pbl \cmd:(\sla,\sla_{in},\sla_{out})$, then  for all $ \x \in \A(\cmd)$, $\Gamma(\x) \ord \sla$.
\end{lem}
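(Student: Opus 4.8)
The plan is to prove the Confinement Lemma by structural induction on the command typing derivation $\rho \rightslice \pbl \cmd : (\sla,\sla_{in},\sla_{out})$, showing that every variable assigned to in $\cmd$ has tier at most $\sla$. The statement to establish is: for all $\x \in \A(\cmd)$, $\Gamma(\x) \ord \sla$.

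First I would treat the base cases. For (SK), $\A(\skp)=\emptyset$, so the claim holds vacuously. For the assignment rule (A), $\cmd$ is $\x \asg \e$ with command tier $\sla_1 = \Gamma(\x)$ (by rule (V) applied to $\x$), and $\A(\x \asg \e)=\{\x\}$; here $\Gamma(\x)=\sla_1 \ord \sla_1$ holds trivially, matching the command tier. Next I would handle the inductive cases. For the sequence rule (S), both $\cmd_1$ and $\cmd_2$ are typed at the same command tier $\sla$ as $\cmd_1 \sap \cmd_2$, and $\A(\cmd_1 \sap \cmd_2)=\A(\cmd_1) \cup \A(\cmd_2)$; the induction hypothesis applied to each subderivation gives $\Gamma(\x) \ord \sla$ for every assigned variable. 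The conditional rule (C) is analogous: both branches $\cmd_{\true}, \cmd_{\false}$ carry command tier $\sla$, so the induction hypothesis on each branch closes the case. For the while rules (W) and (W$_0$), the body $\cmd$ is typed at command tier $\sla$ (equal to the tier of the whole while command), and $\A(\while(\e)\{\cmd\})=\A(\cmd)$, so the induction hypothesis applies directly.

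The only genuinely interesting case is subtyping (SUB). Here $\cmd$ is typed at tier $\sla\mathbf{+1}$, derived from a subderivation typing the same command $\cmd$ at tier $\sla$. Since subtyping does not change the command, $\A(\cmd)$ is unchanged, and the induction hypothesis gives $\Gamma(\x) \ord \sla$ for every $\x \in \A(\cmd)$; then $\Gamma(\x) \ord \sla \ord \sla\mathbf{+1}$ by transitivity of $\ord$. This is the step to watch, because it is precisely where the command tier is raised, so one must confirm the bound is preserved upward---which it is, since raising the command tier only weakens the inequality we are asserting.

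I expect no serious obstacle: the proof is a routine structural induction, and the key structural fact enabling it is that all command typing rules except (SUB) preserve the command tier across subcommands, while (SUB) only increases it. (This monotonicity is essentially the content of Lemma~\ref{lem:strat}, which could alternatively be invoked to streamline the argument.) The one subtlety worth stating explicitly is the use of rule (V) in the assignment case to identify the command tier of an assignment $\x \asg \e$ with $\Gamma(\x)$ exactly, which is what makes the base case tight rather than merely an inequality.
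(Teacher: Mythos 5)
Your proof is correct: the induction is well-founded, all seven command typing rules are covered, and each case checks out --- in particular the (A) case correctly exploits that expression judgments admit no subtyping, so the tier of the assigned variable is exactly the command tier, and the (SUB) case correctly observes that raising the command tier only weakens the desired inequality. Your route differs from the paper's, however. The paper argues by contradiction: assuming $\sla \ordst \Gamma(\x)$ for some $\x \in \A(\cmd)$, it extracts from the derivation of $\cmd$ a subderivation $\rho_2 \rightslice \pbl \x \asg \e : (\Gamma(\x), \sla_{in}',\sla_{out}')$ (which exists by rules (A) and (V) --- exactly the tightness fact you isolate), and then invokes Lemma~\ref{lem:strat}, the monotonicity of command tiers along subderivations, to conclude $\Gamma(\x) \ord \sla$, contradicting the assumption. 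So the paper's proof is a short application of previously established machinery, whereas yours re-derives that monotonicity inline through the case analysis; as you yourself note, your inductive cases essentially reprove Lemma~\ref{lem:strat} restricted to what is needed for assigned variables. What your version buys is self-containedness and explicitness (the paper's proof of Lemma~\ref{lem:strat} is itself quite terse), and it avoids the slightly roundabout contradiction framing; what the paper's version buys is brevity and reuse of a lemma it needs anyway elsewhere (e.g., in Lemma~\ref{lem:inner}). Either argument is acceptable.
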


\begin{proof}
By contradiction. Suppose that $\rho_1 \rightslice \pbl \cmd:(\sla,\sla_{in},\sla_{out})$ holds. Consider a variable $\x \in \A(\cmd)$ and suppose that $\sla \ordst \Gamma(\x)$. By typing rule (A), there is an expression $\e$  and there are tiers $\sla_{in}',\sla_{out}'$ such that $\rho_2 \rightslice  \pbl \x \asg \e : (\Gamma(\x), \sla_{in}',\sla_{out}')$ and $\rho_2 \in \D(\rho_1)$.
Consequently, by Lemma~\ref{lem:strat}, $ \Gamma(\x) \ord \sla$, which contradicts the assumption.
\qedhere
\end{proof}

For a given variable typing environment $\Gamma$ and a given tier $\sla$, we define an equivalence relation on stores by:
$\store \approx^{\Gamma}_{ \sla }\store'$ if $\dom(\store) = \dom(\store')=\dom(\Gamma)$ and  for each $ \x \in \dom(\Gamma)$, if $\sla \ord \Gamma(\x)$ then $\store(\x)=\store'(\x)$.

We introduce a non-interference Theorem ensuring that the values of tier $\sla$ variables during the evaluation of a program do not depend on values of tier $\sla'$ variables for $\sla' \ordst \sla$.

\begin{thm}[Non-interference]\label{thm:ni}
%\begin{restatable}[Non-interference]{theorem}{ni}\label{thm:ni}
Given a safe program $ \cmd\ \ret \x $ with respect to the typing environments $\typenv,\typop$. For any stores $\store_1$ and $\store_2$ if $\store_1 \approx^{\Gamma}_{ \sla }\store_2$,
 $ \store_1\Imp \cmd  \to  \store'_1$ and $ \store_2\Imp \cmd  \to  \store'_2$
  then $\store'_1 \approx^{\Gamma}_{ \sla }\store'_2$.
\end{thm}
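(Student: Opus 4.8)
The plan is to prove the non-interference theorem by induction on the structure of the derivation $\store_1 \Imp \cmd \to \store_1'$ in the big-step operational semantics, simultaneously with (or with reference to) the derivation $\store_2 \Imp \cmd \to \store_2'$. The statement to establish is that $\store_1 \approx^{\Gamma}_{\sla} \store_2$ is preserved to the outputs whenever $\cmd$ is typable. I would fix a tier $\sla$ once and for all and show that for every variable $\x$ with $\sla \ord \Gamma(\x)$, the final values agree. The key auxiliary fact, which follows from Lemma~\ref{lem:ss} (Simple security), is an expression-level non-interference: if $\pbl \e : (\slb, \sla_{in}, \sla_{out})$ with $\sla \ord \slb$, then $\store_1 \approx^{\Gamma}_{\sla} \store_2$ forces $\store_1 \Imp \e \to \w$ and $\store_2 \Imp \e \to \w'$ to yield $\w = \w'$, because all free variables of $\e$ have tier at least $\slb \succeq \sla$ and hence agree under the equivalence.

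The induction would proceed case by case on the command typing rules of Figure~\ref{TS}. For (SK) there is nothing to do. For assignment $\x \asg \e$ typed $(\sla_1, \sla_{in}, \sla_{out})$, I split on whether $\sla \ord \Gamma(\x)$. If $\sla \ord \Gamma(\x)$, then since rule (A) gives $\Gamma(\x) = \sla_1 \ord \slb$ for the expression tier $\slb$, we have $\sla \ord \slb$, so the expression-level lemma makes both evaluations of $\e$ agree and the updated stores still coincide on $\x$; all other variables are untouched. If $\sla \not\ord \Gamma(\x)$, the variable $\x$ is outside the domain constrained by $\approx^{\Gamma}_{\sla}$, so the update is harmless. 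For sequencing (S), I compose the two inductive hypotheses through the intermediate store. For the conditional (C), the guard $\e$ has tier $\sla$ equal to the command tier; here I must split on whether $\sla \ord \slb$ for the guard tier $\slb$ as above: if it does, the guard evaluates identically in both runs, the same branch is taken, and I apply the inductive hypothesis to that branch; if it does not, then by Confinement (Lemma~\ref{lem:confinement}) every variable assigned in either branch has tier $\ord \slb \ordst \sla$, hence is unconstrained by $\approx^{\Gamma}_{\sla}$, so both branches leave the $\sla$-visible part of the store unchanged regardless of which is executed.

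The while cases (W) and (W$_0$) are the delicate ones and I expect them to be the main obstacle, because the two runs may take different numbers of iterations when the guard is low. The clean approach is to treat both loops uniformly via the unfolding rules (Wh$_0$) and (Wh$_1$) of Figure~\ref{fig:Com} and induct on the size of the operational derivation rather than the typing derivation, or to prove a dedicated loop lemma. When the loop guard tier $\slb$ satisfies $\sla \ord \slb$, the guards agree in both runs at every step, the iteration counts coincide, and I push the invariant through each unfolding using the body's inductive hypothesis. When $\sla \not\ord \slb$, the guard is strictly below $\sla$; by Confinement the body can only assign variables of tier $\ord \slb \ordst \sla$, so no $\sla$-visible variable is ever modified by the loop in either run, and the equivalence on the visible part is trivially preserved no matter how the two iteration counts differ. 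The crux is therefore organizing the argument so that the mismatch in iteration counts is confined entirely to the low, $\sla$-invisible part of the store, which is exactly what Confinement guarantees; this is where the structural discipline of the type system (command tiers increasing, guards controlling only equal-or-higher tiers) does the real work.
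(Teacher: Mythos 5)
Your proposal is correct and follows essentially the same route as the paper's proof: structural induction on the big-step operational derivations, with expression-level agreement derived from Simple security (Lemma~\ref{lem:ss}/Corollary~\ref{coro:ss}) for the visible cases, and Confinement (Lemma~\ref{lem:confinement}) handling the low-guard conditional and while cases where the two runs may diverge in control flow. The paper treats the conditional case only implicitly (``all the other cases can be treated in a similar manner''), whereas you spell it out, but the decomposition, key lemmas, and case splits are identical.
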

%\end{restatable}
%\begin{proof}
%By structural induction on program derivations, using Lemma~\ref{lem:ss} and Lemma~\ref{lem:confinement}.
%\end{proof}

\begin{proof}
By structural induction on derivations. The base case of a derivation consisting in only one node is straightforward as only the rule (Skip) of Figure~\ref{fig:Com} can be fired. Consequently, $\cmd = \skp$ and $\store'_1=\store_1 \approx^{\Gamma}_{ \sla } \store'_2=\store_2$.

Now consider the two derivations $ \pi_\phi^1 : \store_1\Imp \cmd  \to  \store'_1$ and $\pi_\phi^2 : \store_2\Imp \cmd  \to  \store'_2$ such that $\store_1 \approx^{\Gamma}_{ \sla }\store_2$. We perform a case analysis on commands.
\begin{itemize}
\item if $\cmd = \x \asg \e$ then the rule at the root of derivations $ \pi_\phi^1 $ and $ \pi_\phi^2$ is the rule (Asg) of Figure~\ref{fig:Com}. There are two cases to consider.
\begin{itemize}
\item If $\sla \ord \Gamma(\x)$ then, by Corollary~\ref{coro:ss}, $\forall \y \in \FV(\e)$, $\Gamma(\x) \ord \Gamma(\y)$ and consequently, $\forall \y \in \FV(\e)$, $\store_1(\y)=\store_2(\y)$. This implies that there exists $\w \in \W$ such that $\store_1 \Imp \e \to w$ and $\store_2 \Imp \e \to w$. Consequently $\store_1 \Imp \cmd \to \store_1[\x \leftarrow \w]$, $\store_2 \Imp \cmd \to \store_2[\x \leftarrow \w]$ and  $\store_1[\x \leftarrow \w] \approx^{\Gamma}_{ \sla }  \store_2[\x \leftarrow \w]$.
\item  If $\Gamma(\x) \ordst \sla$ then $\store_1 \Imp \cmd \to \store_1[\x \leftarrow \w]$, $\store_2 \Imp \cmd \to \store_2[\x \leftarrow \varv]$ and $\store_1[\x \leftarrow \w] \approx^{\Gamma}_{ \sla }  \store_2[\x \leftarrow \varv]$.
\end{itemize}
\item if $\cmd = \cmd_1 \sap \cmd_2$ then the rule at the root of derivations $ \pi_\phi^1 $ and $ \pi_\phi^2$ is the rule (Seq) of Figure~\ref{fig:Com}. Consequently, there exist two stores $\store''_1$ and $\store''_2$, such that, for $i \in \{1,2\}$, $\pi_\phi^{i'}: \store_i \Imp \cmd_1 \to \store''_i$ and $\pi_\phi^{i''}:\store''_i \Imp \cmd_2 \to \store'_i$ are subderivations of $\pi_\phi^i $. Therefore, if $\store_1 \approx^{\Gamma}_{ \sla }  \store_2$ then $\store''_1 \approx^{\Gamma}_{ \sla } \store''_2$ and $\store'_1 \approx^{\Gamma}_{ \sla } \store'_2$, by applying the induction hypothesis twice.
\item if $\cmd = \while (\e) \{\cmd'\}$ then the rule at the root of derivations $ \pi_\phi^1 $ and $ \pi_\phi^2$ can be either rule (Wh$_0$) or rule (Wh$_1$) of Figure~\ref{fig:Com}. If $ \pbl  \while (\e) \{\cmd'\} :(\sla_1,\sla_{in},\sla_{out})$ then $ \pbl  \e :(\sla_1,\sla_{in},\sla_{out})$ and there are two cases to consider:
\begin{itemize}
\item If $\sla \ord \sla_1$ then, by Lemma~\ref{lem:ss}, $ \forall \x \in \FV(\e),\ \sla_1 \ord \Gamma(\x)$. Consequently, $ \forall \x \in \FV(\e), \ \store_1(\x)=\store_2(\x)$. It follows that $\store_1 \Imp \e \to w$ and $\store_2 \Imp \e \to w$, with $w \in \{0,1\}$. Notice that,  it excludes the possibility to apply (Wh$_1$) on one derivation and (Wh$_0$) on the other derivation. For the non-trivial case where $w =1$, we obtain that $\pi_\phi^{1'} :\store_1 \Imp \cmd' \sap \cmd \to \store_1'$, $\pi_\phi^{2'} : \store_2 \Imp \cmd' \sap \cmd \to \store_2'$, and $\store'_1 \approx^{\Gamma}_{ \sla }  \store'_2$, by induction on the subderivations $\pi_\phi^{1'}$ and $\pi_\phi^{2'}$. Consequently, $\store_1 \Imp \cmd \to \store_1'$, $\store_2 \Imp  \cmd \to \store_2'$ and $\store'_1 \approx^{\Gamma}_{ \sla }  \store'_2$.
\item If $ \sla_1 \ordst \sla$ then, by Lemma~\ref{lem:confinement}, $\forall \x \in \A(\cmd'),\  \Gamma(\x) \ord \sla_1$. Consequently, if $\store_1 \Imp \cmd \to \store'_1$ and $\store_2 \Imp \cmd \to \store'_2$ then $\store'_1 \approx^{\Gamma}_{ \sla } \store'_2$.
\end{itemize}
All the other cases can be treated in a similar manner.\qedhere
\end{itemize}
\end{proof}

\subsection{Polynomial step count}\label{ss:psc}
In this section, we show that terminating and safe programs have a runtime polynomially bounded by the size of the input store and the maximal size of answers returned by the oracle in the course of execution.

The following Lemma shows that the innermost tier of a while loop subcommand is always an upper bound on the tier of this loop.
\begin{lem}\label{lem:inner}
Let $\rho$ be a typing derivation of a safe program  with respect to the typing environments $\typenv,\typop$. For any typing derivations  $ \rho_1 \rightslice \pbl \while (\e_1) \{\cmd_1\}   : (\sla^1,\sla^1_{in},\sla^1_{out})\in \D(\rho)$ and $\rho_2 \rightslice \pbl \cmd_2 : (\sla^2,\sla^2_{in},\sla^2_{out}) \in \mathring{\D}(\rho_1)$, $\sla^2_{in} \ord \sla^1$.
\end{lem}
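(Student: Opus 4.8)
The plan is to argue along the path in the typing derivation from the while-loop node $\rho_1$ down to $\rho_2$, exploiting the fact that the innermost tier is altered by only two rules. First I would observe that, since $\rho_2$ is a \emph{command} typing derivation lying strictly inside $\rho_1$, it must occur in the subderivation rooted at the body $\cmd_1$ of the loop: the only other child of $\rho_1$ is the guard $\e_1$, which is an expression, and the expression rules (V), (OP), (OR) have only expression premises, so no command derivation can occur in the guard's subtree. Inspecting rules (W) and (W$_0$), the body premise has the form $\pbl \cmd_1 : (\sla^1,\sla^1,\cdot)$; in particular the innermost tier of the body equals the command tier $\sla^1$ of the loop.

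Next I would isolate the key structural fact: among the command typing rules of Figure~\ref{TS}, only (W) and (W$_0$) change the innermost tier when passing from a conclusion to its command premise, and in both cases the body premise receives an innermost tier equal to the command tier appearing in the conclusion of that rule. Every other command rule, namely (SUB), (S), (C) (and trivially (SK), (A), which have no command premises), copies the innermost tier unchanged to its command premises.

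With these observations the result follows by induction on the position of $\rho_2$ within the body subtree, maintaining the invariant that every command subderivation of $\cmd_1$ has innermost tier $\ord \sla^1$. The base case is the body itself, whose innermost tier is exactly $\sla^1$. For the inductive step, consider a command node with innermost tier $\ord \sla^1$ together with one of its command children: if the applied rule is (SUB), (S), or (C) the child inherits the same innermost tier, so the invariant is preserved; if the applied rule is (W) or (W$_0$), the child's innermost tier equals the command tier of that nested while loop, which is itself a command subderivation in $\D(\rho_1)$ and hence has command tier $\ord \sla^1$ by Lemma~\ref{lem:strat}. In either case the invariant persists, and applying it to $\rho_2 \in \mathring{\D}(\rho_1)$ yields $\sla^2_{in} \ord \sla^1$.

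The only genuinely delicate point is the nested-loop case, where the bound does not come from propagation but from the monotonicity of command tiers: the innermost tier inherited from an inner while loop is that loop's command tier, and it is precisely Lemma~\ref{lem:strat} that bounds it by $\sla^1$. Everything else is a routine rule-by-rule check that the innermost tier is either preserved or reset to an already-bounded value.
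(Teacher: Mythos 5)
Your proof is correct and follows essentially the same route as the paper's: an induction over the derivation with the invariant that the innermost tier of every command subderivation stays $\ord \sla^1$, observing that (S), (SUB), (C), (A) preserve the innermost tier while the while rules reset it to the loop's command tier, which is then bounded via Lemma~\ref{lem:strat}. The only cosmetic difference is that the paper separately argues that (W$_0$) cannot occur strictly inside a while loop (because of its outermost-tier-$\tiera$ requirement), whereas you handle (W$_0$) uniformly with (W) — which works equally well, since (W$_0$) also sets the body's innermost tier equal to the loop's command tier.
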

\begin{proof}
Given a typing derivation $\rho_1 \rightslice \pbl \while (\e_1) \{\cmd_1\} : (\sla^1,\sla^1_{in},\sla^1_{out})$, we show by induction that the property $Inv$ defined on command typing derivations by:

 \[Inv(\rho \rightslice \pbl \cmd : (\sla,\sla_{in},\sla_{out}))\ \triangleq \ (\sla_{in} \ord \sla^1)\]
 is invariant on command typing derivations strictly smaller than $\rho_1$, \ie, typing derivations $\rho$ such that $\rho < \rho_1$.

\emph{Base case.} Suppose that the typing derivation $\rho_1$ is of the shape:

\[
\begin{prooftree}
\hypo{\pbl \e_1:  \ldots}
\hypo{\rho_2 \rightslice \pbl \cmd_1 : (\sla^2,\sla^2_{in},\sla^2_{out})}
\infer2[(R)]{\rho_1 \rightslice \pbl  \while (\e_1) \{\cmd_1\} :(\sla^1,\sla^1_{in},\sla^1_{out})}
\end{prooftree}.
\]

Rule (R) can be either rule (W) or rule (W$_0$) of Figure~\ref{TS}. In both cases, it holds that $\sla^2=\sla^2_{in}=\sla^1$ and, consequently, $Inv(\rho_2)$ holds.

\emph{General case.} Consider the typing derivation $\rho_3 \in \mathring{\D}(\rho_2)$ of the shape:

\[
\begin{prooftree}
\hypo{\ldots}
\hypo{\rho_4 \rightslice \pbl \cmd_4 : (\sla^4,\sla^4_{in},\sla^4_{out})}
\infer2[(R')]{\rho_3 \rightslice \pbl  \cmd_3 :(\sla^3,\sla^3_{in},\sla^3_{out})}
\end{prooftree},
\]
for some typing rule (R'). By induction hypothesis, $Inv(\rho_3)$ holds, and, consequently, $\sla^3_{in} \ord \sla^1$. Moreover, by Lemma~\ref{lem:strat}, $\sla^3 \ord \sla^1$ holds.

 By a case analysis on typing rules of Figure~\ref{TS},
 (R') can be either (S), (A), (SUB), (C), or (W). Indeed, rule (W$_0$) cannot be applied under a while loop as it requires the outermost tier to be equal to $\tiera$. This is impossible under a while loop as, by looking at the constraints in rules (W) and (W$_0$) of Figure~\ref{TS}, one can check that the outermost tier of a while loop strict subcommand is greater than $\tierb$.

These five rules imply that either $\sla^4_{in} = \sla^3_{in}$ (rules (S), (A), (SUB), and (C)) or $\sla^4_{in} = \sla^3$ (rule (W)). Consequently, $\sla^4_{in} \ord \sla^1$, and $Inv(\rho_4)$ holds.
\qedhere
\end{proof}

Consequently, the innermost tier of a while loop provides a lower bound on the tier of the loop guard expression.

We now show that within a while loop of tier $\sla$, if an expression is assigned to a tier $\sla$ variable then this expression does not contain positive operators. Moreover, there are no oracle calls in tier $\sla$ subcommands of a tier $\sla$ while loop.

\begin{lem}\label{lem:stratification}
Let $\rho$ be a typing derivation of a safe program with respect to the typing environments $\typenv,\typop$.
For any $\rho_1 \rightslice \pbl  \while (\e_1) \{\cmd_1\} : (\sla^1,\sla^1_{in},\sla^1_{out})\in \D(\rho)$ and $ \rho_2 \rightslice \pbl \x \asg \e_2 : (\sla^2,\sla^2_{in},\sla^2_{out}) \in \D(\rho_1)$, if $\sla^1 =\sla^2$ then:
\begin{enumerate}
\item  for any $ \op \in \Oo(\e_2)$, $\op$ is a neutral operator;
\item  there is no oracle call in $\e_2$.
\end{enumerate}
\end{lem}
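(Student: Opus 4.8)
The plan is to obtain both conclusions from a single descending chain of tier inequalities, in which a positive operator or an oracle call would force a \emph{strict} decrease that is incompatible with the command tier $\sla^2$ sitting below every tier occurring in $\e_2$.

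First I would record the tiers attached to $\e_2$. Since $\rho_2 \rightslice \pbl \x \asg \e_2 : (\sla^2,\sla^2_{in},\sla^2_{out})$ is derived by rule (A), its premises type $\e_2$ as $\pbl \e_2 : (\sla_{\e_2},\sla^2_{in},\sla^2_{out})$ with $\sla^2 \ord \sla_{\e_2}$, and every subexpression of $\e_2$ shares the innermost tier $\sla^2_{in}$, because rules (V), (OP) and (OR) propagate $\sla_{in}$ unchanged to all premises. As $\rho_2$ is a \emph{strict} subderivation of $\rho_1$ (an assignment is not a while loop), Lemma~\ref{lem:inner} gives $\sla^2_{in} \ord \sla^1$, and the hypothesis $\sla^1 = \sla^2$ then yields $\sla^2_{in} \ord \sla^2 \ord \sla_{\e_2}$.

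Next I would prove the auxiliary ``structurally decreasing'' fact: for every subexpression $\e'$ of $\e_2$ typed with tier $\sla_{\e'}$ in the same derivation, $\sla_{\e_2} \ord \sla_{\e'}$. This is a routine induction on the syntax of $\e_2$; the (OP) step uses safety Condition~\ref{premier} of Definition~\ref{sote} (so $\sla \ord \meet_i \sla_i$), while the (OR) step uses that the input data carries the output tier $\sla$ and that the input bound carries tier $\sla_{out}$ with $\sla \ord \sla_{out}$. Together with the previous paragraph this shows that the output tier of every operator occurrence and of every oracle call inside $\e_2$ is at least $\sla^2_{in}$.

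Finally the contradiction closes both parts uniformly. For part~(1), if some $\op \in \Oo(\e_2)$ were positive, its occurrence is typed by rule (OP) with output tier $\sla_{\op}$ and innermost tier $\sla^2_{in}$, so safety Condition~\ref{second} of Definition~\ref{sote} forces $\sla_{\op} \ordst \sla^2_{in}$, contradicting $\sla^2_{in} \ord \sla_{\op}$; hence every operator in $\e_2$ is neutral. For part~(2), an oracle call in $\e_2$ is typed by rule (OR) with output tier $\sla$ satisfying $\sla \ordst \sla^2_{in}$, again contradicting $\sla^2_{in} \ord \sla$. The step I would treat most carefully is the tier bookkeeping that pins the innermost tier of the offending occurrence to $\sla^2_{in}$ and relates it, via Lemma~\ref{lem:inner}, to the loop tier $\sla^1 = \sla^2$; once that is in place, the strict inequalities supplied by positivity and by rule (OR) do all the work.
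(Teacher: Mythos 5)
Your proof is correct and follows essentially the same route as the paper's: rule (A) together with Lemma~\ref{lem:inner} and the hypothesis $\sla^1=\sla^2$ pin every tier occurring in $\e_2$ at or above the innermost tier $\sla^2_{in}$, after which safety Condition~\ref{second} (for a positive operator) and the side condition of rule (OR) (for an oracle call) supply the contradictory strict inequality. The only difference is presentational: your explicit ``structurally decreasing tiers'' claim is precisely the structural induction on expressions that the paper invokes but leaves implicit when it says the general case follows by induction from the single-operator and single-oracle-call cases.
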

\begin{proof}
Both (1) and (2) are proved by contradiction using Lemma~\ref{lem:inner}.

\begin{figure*}
\hrulefill
\\[10pt]
\centering
\begin{prooftree}
\hypo{\Gamma(\x) = \sla^2}
\infer1[(V)]{\pbl \x : (\sla^2,\sla^2_{in},\sla^2_{out})}
\hypo{\sla^4  \to  \sla^3 \in \Delta(\op)(\sla^2_{in})}
\hypo{\Gamma(\y)= \sla^4}
\infer1[(V)]{\pbl \y : (\sla^4,\sla^2_{in},\sla^2_{out})}
\infer2[(OP)]{\op(\y) : (\sla^3 ,\sla^2_{in},\sla^2_{out}) }
\infer2[(A)]{\pbl \x \asg \op(\y) : (\sla^2 ,\sla^2_{in},\sla^2_{out})}
\end{prooftree}
\\[10pt]
\hrulefill
\caption{Proof of Lemma~\ref{lem:stratification} (1)}\label{fig:stratification(1)}
\end{figure*}

(1) By contradiction, suppose that there exists an assignment $\x \asg \e_2  $ such that $\sla^1=\sla^2$ and there exists an operator in $\Oo(\e_2)$ that is positive. For simplicity, suppose that $ar(\op)=1$ and that $\e_2 = \op(\y)$. We obtain the typing derivation of Figure~\ref{fig:stratification(1)},
with the constraints that $\sla^2 \ord \sla^3 \ord \sla^4 \ord \sla^2_{in}$. $\sla^2 \ord \sla^3$  is enforced by rule (A) and $\sla^3 \ord \sla^4 \ord \sla^2_{in}$ is enforced by definition of safe typing environment applied to $\Delta(\op)(\sla^2_{in})$ in rule (OP).  Moreover as $\op$ is positive, the constraint $\sla^3  \ordst \sla^2_{in}$ is enforced and, consequently, $\sla^2 \ordst \sla^2_{in}$. However, by Lemma~\ref{lem:inner} applied to the typing derivation sub-tree of root $\x \asg \op(\y)$, $\sla^2_{in} \ord \sla^1$, hence $\sla^2_{in} \ord \sla^2$, by hypothesis ($\sla^2=\sla^1$), and we obtain a contradiction.
The general case can be treated by a structural induction on expressions.

(2) By contradiction suppose that $\sla^1 =\sla^2$ and that there is an oracle call in $\e_2$.
Suppose for simplicity that $\e_2=\phi(\e' \upharpoonright \e'')$.  We obtain the typing derivation of Figure~\ref{fig:stratification(2)}.
\begin{figure*}
\hrulefill
\\[10pt]
\centering
\begin{prooftree}
\hypo{\Gamma(\x) = \sla^2}
\infer1[(V)]{\pbl \x : (\sla^2,\sla^2_{in},\sla^2_{out})}
\hypo{}
\ellipsis{}{}
\infer1[]{\pbl \e' : (\sla^3 ,\sla^2_{in},\sla^2_{out}) }
\hypo{}
\ellipsis{}{}
\infer1[]{\pbl \e'' : (\sla^2_{out} ,\sla^2_{in},\sla^2_{out})}
\infer2[]{\pbl \phi(\e' \upharpoonright \e'') : (\sla^3 ,\sla^2_{in},\sla^2_{out}) }
\infer2[(A)]{\pbl \x \asg \phi(\e' \upharpoonright \e'') : (\sla^2 ,\sla^2_{in},\sla^2_{out})}
\end{prooftree}
\\[10pt]
\hrulefill
\caption{Proof of Lemma~\ref{lem:stratification} (2)}\label{fig:stratification(2)}
\end{figure*}
By the constraint of typing rule (A), $\sla^2 \ord \sla^3$. By the constraints of typing rule (OR), $\sla^3 \ordst \sla^2_{in}$ and $\sla^3 \ord \sla^2_{out}$.
Consequently,  $\sla^2 \ordst \sla^2_{in}$ is enforced. By Lemma~\ref{lem:inner}, $ \sla^2_{in} \ord \sla^1$ and, consequently, we obtain a contradiction.
The general case can be treated in a similar manner by a structural induction on expressions.
\qedhere
\end{proof}

\begin{defi}
Let $m_{\store}^{\prog}$ be the maximum of $\size{\store}$ and the maximum size of an oracle answer in the derivation $\pi_{\phi} : \store \Imp \prog \to \varu$.
Formally, \[m_{\store}^{\prog} =\max_{(\varv,\w) \in C(\pi_\phi)}(\size{\store}, \max\{\size{\phi(\sem{\upharpoonright}(\varv,\w))} \}),\]
where $C(\pi_\phi)$ is the set of pairs $(\varv,\w)$ such that \[
\begin{prooftree}\hypo{\store \Imp \e_1 \to \varv \quad\store \Imp \e_2 \to \w }\infer1[]{\store \Imp \phi(\e_1\upharpoonright \e_2) \to \phi(\sem{\upharpoonright}(\varv,\w))} \end{prooftree} \in \pi_{\phi}.\]

A program $\prog$ has a \emph{polynomial step count} if there is a polynomial $P$ such that for any store $\store$ and any oracle $\phi$, $\pi_{\phi} : \store \Imp \prog \to \w$, $\size{\pi_{\phi}} \leq  P (m_{\store}^{\prog})$.
\end{defi}

We show that a safe program has a polynomial step count on terminating computations.

\begin{thm}\label{thm:pol}
Given a safe program $\prog$ with respect to the typing environments $\typenv,\typop$, there is a polynomial $P$ such that for any derivation $\pi_{\phi} : \store \Imp \prog \to \w$, $\size{\pi_{\phi}} \leq P (m_{\store}^{\prog})$.
\end{thm}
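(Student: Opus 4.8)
The plan is to factor $\size{\pi_{\phi}}$ as (number of commands executed) times (cost of one command), to observe that the second factor is constant, and to bound the first factor by a polynomial in $n := m_{\store}^{\prog}$. Evaluating an expression $\e$ contributes to $\pi_{\phi}$ a subtree with exactly one node per variable, operator and oracle occurrence in $\e$, independently of the sizes of the words manipulated; hence each executed command ($\skp$, assignment, or guard test) accounts for only a constant number of rule applications, and $\size{\pi_{\phi}}$ is, up to a constant factor, the total number of commands executed, i.e. essentially the total number of while-loop iterations. So everything reduces to bounding the number of loop iterations.

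The core is a downward induction on tiers, from the maximal tier $k$ occurring in the typing derivation of $\prog$ down to $\tiera$, proving simultaneously two facts for each tier $\sla$: a \emph{size bound} $V_{\sla}$, polynomial in $n$, on $\size{\store'(\x)}$ for every store $\store'$ reached in $\pi_\phi$ and every $\x$ with $\Gamma(\x)=\sla$; and an \emph{iteration bound}, polynomial in $n$, on the total number of executions of the body of any while loop of command tier $\sla$. The two must be established together: a tier-$\sla$ variable can grow only inside a surrounding loop of strictly higher tier (see below), so its size bound relies on the iteration bound already obtained for those higher tiers; conversely, by Theorem~\ref{thm:ni} the behaviour of a higher-tier loop never depends on tier-$\sla$ data, so the dependency is acyclic.

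For the \emph{size bound}, a tier-$\sla$ variable $\x$ can gain size only through an assignment $\x \asg \e$ in which $\e$ contains a positive operator or an oracle call. By Lemma~\ref{lem:stratification} this is impossible when $\x \asg \e$ lies inside a while loop of command tier $\sla$, and by Lemma~\ref{lem:strat} a tier-$\sla$ command never occurs under a loop of strictly smaller tier; hence every size-increasing assignment to $\x$ occurs either outside all loops (statically, constantly many occurrences) or strictly inside a loop of tier $\ordst\sla$ reversed, i.e. of tier strictly greater than $\sla$, whose number of iterations is already polynomial in $n$ by the induction hypothesis. Each execution adds at most a constant $c_{\op}$ (positive operator, Definition~\ref{def:np}) or installs a word of size at most $n$ (oracle answer, by definition of $m_{\store}^{\prog}$), while a plain copy $\x \asg \y$ imports a value whose tier satisfies $\Gamma(\x)\ord\Gamma(\y)$ (Corollary~\ref{coro:ss}) and therefore never raises the running maximum above a bound already available. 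Summing these finitely many and polynomially many contributions yields a polynomial $V_{\sla}$. For the \emph{iteration bound}, fix a loop $\while(\e)\{\cmd'\}$ of command tier $\sla$; its guard $\e$ has tier $\sla$ and, by Lemma~\ref{lem:ss}, is read only from variables $\x$ with $\sla\ord\Gamma(\x)$. By Lemma~\ref{lem:confinement} the body $\cmd'$, of command tier $\sla$, assigns no variable of tier $\succ\sla$, so all such variables stay frozen during one activation; by Lemma~\ref{lem:stratification} every assignment to a tier-$\sla$ variable inside the loop uses only neutral operators, whose outputs are subwords (for $\dord$) of their arguments or elements of a fixed finite set (Definition~\ref{def:np}). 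Consequently, throughout one activation each tier-$\sla$ variable holds only a subword of one of finitely many words (the frozen higher-tier values and the tier-$\sla$ values on entry), each of size $\ord$... of size at most $V_{\sla}$, or a fixed constant; so the tuple of tier-$\ord\sla$ (that is, tier $\succeq\sla$) variables ranges over a set of size polynomial in $V_{\sla}$, hence in $n$.

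By Theorem~\ref{thm:ni} this tuple alone determines the successive guard values and its own evolution, independently of the lower tiers; and since $\pi_{\phi}$ is finite, two iteration heads cannot carry the same tier-$\succeq\sla$ configuration, for otherwise the guard would stay true forever and the derivation would be infinite. Thus a single activation performs polynomially many iterations, and since each activation is entered at most once per iteration of the enclosing higher-tier loops — themselves already polynomially bounded — summing over the constantly many nesting levels bounds the total number of tier-$\sla$ iterations by a polynomial in $n$. Combining this with the constant cost per executed command gives $\size{\pi_{\phi}} \leq P(m_{\store}^{\prog})$ for a suitable polynomial $P$. The main obstacle is precisely this iteration bound: it is where Lemma~\ref{lem:stratification} (no growth at the loop's own tier), Lemma~\ref{lem:confinement} (freezing of higher tiers) and Theorem~\ref{thm:ni} (independence from lower tiers) must be combined so that determinism together with finiteness of $\pi_{\phi}$ forbids any repetition of the controlling configuration, and where the nested loops must be multiplied through while keeping the mutual size/iteration induction non-circular.
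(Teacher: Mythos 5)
Your proposal is correct and takes essentially the same approach as the paper's proof: an induction over tiers that combines Lemma~\ref{lem:stratification} (no positive operators or oracle calls in assignments at the loop's own tier), Lemma~\ref{lem:confinement}, Lemma~\ref{lem:ss} and Theorem~\ref{thm:ni} with a subword-counting argument bounding the number of distinct tier-$\succeq\sla$ store configurations polynomially in $m_{\store}^{\prog}$, a determinism-plus-finiteness pigeonhole bounding loop iterations, and closure of polynomials under composition over the constantly many tiers and nesting levels. Your simultaneous size-bound/iteration-bound induction and the explicit pigeonhole step are a more detailed rendering of what the paper states tersely (its $O((m_{\store}^{\prog})^2)$ count of distinct values per tier-$\sla$ variable and the polynomial unfolding of lower-tier commands), not a different method.
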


\begin{proof}
By induction on the tier of a command using non-interference Theorem (\ref{thm:ni}) and Lemma~\ref{lem:stratification}.

The case $\tiera$ is trivial as there are no while loops (see rules (W) and (W$_0$) of Figure~\ref{TS}). Now consider a safe program, whose command is of tier $\sla$ strictly greater than $\tiera$. For any variable of tier $\sla$, by Lemma~\ref{lem:stratification}, only neutral operators may be applied in assignments to such a variable in a while. Combining Definition~\ref{def:np} and non-interference (Theorem~\ref{thm:ni}), the word computed by a neutral operator is either a subword of the initial values stored in tier $\sla$ variables or a constant in $\{0,1\}$. Moreover, the number of times a positive operator is applied to such a variable is at most constant. Indeed, such an assignment occurs outside a while loop. Putting it altogether, the number of distinct values in a store for each tier $\sla$ variable is in $O((m_{\store}^{\prog})^2)$ as strict subword operations can be iterated at most a quadratic number of time.

 The evaluation of while loops of tier $\sla$ unfolds at most a polynomial number of commands of tier at most $\sla-\tierb$. The degree of the polynomial depends on the number of variables of tier $\geq \sla$, by non-interference (Theorem~\ref{thm:ni}), and on the number of nested tier $\sla$ while loops and, as a consequence, is bounded by a constant: the size of the program. By induction, all these commands of tier strictly smaller are evaluated in a polynomial number of steps in their input. Each of their input is polynomially bounded by $m_{\store}^{\prog}$ as it consists in oracle calls combined with at most a polynomial number of positive operators.

It just remains to observe that we compose a constant number ($\sla\mathbf{+1}$) of polynomials and that polynomials are closed under composition.
\end{proof}

The proof of the above Theorem is similar to proofs of polynomiality in~\cite{M11} and~\cite{MP14}, except for two distinctions:
\begin{itemize}
\item As strictly more than 2 tiers are allowed, the innermost tier $\sla_{in}$ is used to ensure that operators and oracle calls are stratified (Lemma~\ref{lem:stratification}): in a while loop of innermost tier $\sla_{in}$ the return type of an oracle or positive operator is always strictly smaller than $\sla_{in}$. Hence the results of such computations cannot be assigned to variables whose tier is equal to $\sla_{in}$.
\item Oracle calls may return a value whose size is not bounded by the program input. This is the reason why $m_{\store}^{\prog}$ has to be considered as an input of the time bound.
 \end{itemize}
\begin{cor}\label{psc}
Given a program $\prog$,  if $\prog \in \st$ then $\prog$ has a polynomial step count.
\end{cor}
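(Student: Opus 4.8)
The plan is to observe that Corollary~\ref{psc} is an almost immediate consequence of Theorem~\ref{thm:pol}, with the termination hypothesis supplying the one missing ingredient. By Definition~\ref{def:safe}, membership $\prog \in \st$ means that $\prog$ is safe with respect to some typing environments $\typenv,\typop$; hence Theorem~\ref{thm:pol} applies and produces a polynomial $P$ such that every derivation $\pi_\phi : \store \Imp \prog \to \w$ satisfies $\size{\pi_\phi} \leq P(m_\store^\prog)$. So the quantitative content of the corollary is already in hand, and what remains is purely a matter of matching the logical form of Theorem~\ref{thm:pol} to the logical form demanded by the definition of polynomial step count.

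The only gap is the universal quantification over stores and oracles. The definition of polynomial step count requires the bound to hold for \emph{every} store $\store$ and \emph{every} oracle $\phi$, whereas Theorem~\ref{thm:pol} constrains only those derivations $\pi_\phi$ that actually exist. This is exactly where the termination side-condition enters: since $\prog \in \st$, the program is terminating, so $\sem{\prog}$ is a total function for every oracle $\phi$. Consequently, for each $\store$ and $\phi$ there is a finite derivation $\pi_\phi : \store \Imp \prog \to \w$ — without termination, $\size{\pi_\phi}$ could be infinite and no finite tree would exist to bound. Applying the bound of Theorem~\ref{thm:pol} to each of these guaranteed derivations yields $\size{\pi_\phi} \leq P(m_\store^\prog)$ uniformly in $\store$ and $\phi$, which is precisely the polynomial step count property.

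The main point to stress — and really the only subtlety — is the necessity of the termination assumption: safety alone makes Theorem~\ref{thm:pol} vacuous on divergent runs, and it is the totality of $\sem{\prog}$ guaranteed by $\prog \in \st$ (and not by safety in isolation) that forces a finite, hence polynomially bounded, derivation to exist for every input and oracle. In this sense the corollary is best read as a repackaging of Theorem~\ref{thm:pol} together with the termination condition that separates $\st$ from mere typability.
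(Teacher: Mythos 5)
Your proof is correct and takes essentially the same route as the paper: the paper presents this corollary as an immediate consequence of Theorem~\ref{thm:pol} (safety yields the polynomial bound on every finite derivation), with the termination condition built into $\st$ playing exactly the role you describe, namely ensuring that every computation is a finite derivation to which that bound applies (the paper itself frames Theorem~\ref{thm:pol} as polynomial step count ``on terminating computations''). The only nuance left implicit in your argument is that, since the paper explicitly allows derivations of infinite size, excluding them for a program in $\st$ also uses determinism of the big-step semantics: termination guarantees a finite derivation for each store and oracle, and determinism guarantees it is the \emph{only} derivation, so no unbounded derivation escapes the bound of Theorem~\ref{thm:pol}.
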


\subsection{Finite lookahead revision}\label{ss:flr}
In this section, we show that, whereas terminating and safe programs may perform a polynomial number (in the size of the input and the maximal size of the oracle answers) of oracle calls during their execution, they may only perform a constant number of oracle calls on input data of increasing size.

We first start to show that the outermost tier of a command of a safe program is an upper bound on the tiers of while loop expressions guarding this command.
\begin{lem}\label{lem:kout}
Let $\rho$ be a typing derivation of a safe program with respect to the typing environments $\typenv,\typop$.
For any $\rho_1 \rightslice \pbl  \while (\e_1) \{\cmd_1\} : (\sla^1,\sla^1_{in},\sla^1_{out}) (R) \in \D(\rho)$ and $ \rho_2 \rightslice \pbl \cmd_2 : (\sla^2,\sla^2_{in},\sla^2_{out}) \in \mathring{\D}(\rho_1)$,
if $R \in \{W,W_0\}$ then $\sla^1 \ord \sla^2_{out} $.
\footnote{The assumption for the last rule of $\rho_1$ to be in $\{$W,W$_0\}$ ensures that $\sla^1$ is exactly the tier of the guard expression $\e_1$. Otherwise typing rule (SUB) can be applied arbitrarily and the result does not hold.}
\end{lem}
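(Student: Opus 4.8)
The plan is to run the same invariant induction used for Lemma~\ref{lem:inner}, but following the \emph{outermost} tier rather than the innermost one. I would fix the while-loop derivation $\rho_1 \rightslice \pbl \while(\e_1)\{\cmd_1\} : (\sla^1,\sla^1_{in},\sla^1_{out})$ whose last rule $R$ lies in $\{W,W_0\}$, and prove by induction that
\[
Inv(\rho \rightslice \pbl \cmd : (\sla,\sla_{in},\sla_{out})) \ \triangleq\ (\sla^1 \ord \sla_{out})
\]
holds for every typing derivation $\rho < \rho_1$. Since $\rho_2 \in \mathring{\D}(\rho_1)$ is precisely such a strict subderivation, $Inv(\rho_2)$ is exactly the desired inequality $\sla^1 \ord \sla^2_{out}$.

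For the base case I would inspect the direct body $\cmd_1$. If $R$ is (W), then rule (W) gives $\cmd_1$ the outermost tier $\sla^1_{out}$ and its side condition $\tierb \ord \sla^1 \ord \sla^1_{out}$ already yields $\sla^1 \ord \sla^1_{out}$; if $R$ is (W$_0$), then rule (W$_0$) gives $\cmd_1$ the outermost tier $\sla^1$ itself (while relabelling the conclusion with outermost tier $\tiera$), so $Inv$ holds with equality. Either way the body of the loop satisfies the invariant.

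For the inductive step the key observation is that every rule of Figure~\ref{TS} \emph{except} (W$_0$) copies the outermost tier unchanged from its conclusion into all of its premises: (V), (OP), (OR), (SUB), (SK), (A), (S) and (C) transmit $\sla_{out}$ verbatim, and (W) passes its own $\sla_{out}$ to both the guard and the body. Hence, as long as no (W$_0$) is encountered, $\sla_{out}$ stays constant and $Inv$ propagates for free. To discard the one remaining rule I would reuse the argument already established inside the proof of Lemma~\ref{lem:inner}: rule (W$_0$) can never be the last rule of a strict subderivation of a while loop, since its conclusion forces $\sla_{out} = \tiera$, whereas every strict subcommand of a while loop carries outermost tier at least $\tierb$. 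This eliminates the only rule that could alter the outermost tier below $\rho_1$, which closes the induction.

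The argument is short once the invariant is fixed; the one delicate point is the asymmetric treatment of $\sla_{out}$ by (W) and (W$_0$), which is also why the hypothesis $R \in \{W,W_0\}$ is indispensable (it guarantees, as the footnote notes, that $\sla^1$ is genuinely the tier of the guard $\e_1$ and not a value inflated by an overlying (SUB)). The only thing that could go wrong is a reappearance of (W$_0$) deeper inside $\rho_1$ resetting $\sla_{out}$ to $\tiera$; but this is exactly the possibility already excluded in the proof of Lemma~\ref{lem:inner}, so no additional work is needed there.
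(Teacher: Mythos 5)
Your proof is correct and follows essentially the same route as the paper's: the paper likewise splits on whether the last rule $R$ is (W) or (W$_0$) to seed the bound on the body's outermost tier, and then invokes exactly your two supporting facts --- that every rule other than (W$_0$) preserves the outermost tier, and that (W$_0$) cannot occur strictly inside a while loop (the observation already made in the proof of Lemma~\ref{lem:inner}). Your packaging of this as an explicit invariant induction is merely a more formal presentation of the same argument.
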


\begin{proof}
The typing rule (W$_0$) is the only rule changing the outermost tier. It is straightforward to observe that this rule can only be applied to an outermost while loop as the outermost tier of the command is updated from $\tiera$ to $\sla$, for some $\sla$ such that $\tierb \ord \sla$.
There are two cases to consider depending on $R$:
\begin{itemize}
\item if R=W then

\[
\begin{prooftree}
\hypo{\ldots}
\hypo{\rho'_1 \rightslice \pbl \cmd_1 : (\sla^1,\sla^1,\sla^1_{out})}
\hypo{\tierb \ord \sla^1 \ord \sla^1_{out}}
\infer3[(W)]{\rho_1 \rightslice \pbl  \while (\e_1) \{\cmd_1\}\ :(\sla^1,\sla^1_{in},\sla^1_{out})}
\end{prooftree}.
\]

Clearly, $\sla^1 \ord \sla^1_{out}= \sla^2_{out}$ by the guard condition and as $\rho_2 \leq \rho'_1$ and all the rules under a while preserve the outermost tier.
\item if R=W$_0$ then

\[
\begin{prooftree}
\hypo{\ldots}
\hypo{\rho'_1 \rightslice \pbl \cmd_1 : (\sla^1,\sla^1,\sla^1 )}
\hypo{\tierb \ord \sla^1  }
\infer3[(W$_0$)]{\rho_1 \rightslice \pbl  \while (\e_1) \{\cmd_1\}\ :(\sla^1,\sla^1_{in},\tiera)}
\end{prooftree}.
\]

It is straightforward that $\sla^1 = \sla^2_{out}$  as $\rho_2 \leq \rho'_1$ ($\rho_2 \in \mathring{\D}(\rho_1)$) and all the rules under a while preserve the outermost tier.\qedhere
\end{itemize}
\end{proof}

\begin{defi}
Given a program $\prog$, let $(l^{\pi_{\phi}}_n)$ be the sequence of oracle input values $\sem{\upharpoonright}(\varv,\w)$ in a rule (OR) obtained by a left-to-right depth-first traversal of the derivation $\pi_{\phi} : \store \Imp \prog \to \w$. Let $lr((l^{\pi_{\phi}}_n)) =\#\{i \ | \ \size{l_{i}^{\pi_{\phi}}} >\max_{j < i}(\size{l_j^{\pi_{\phi}}})\}$.

$\prog$ has  \emph{finite lookahead revision} if there is a constant $r$ such that for any oracle  $\phi$ and for any derivation $\pi_{\phi}$ (\ie, for all program inputs), we have $lr((l^{\pi_{\phi}}_n)) \leq r$.
\end{defi}

Note that the left-to-right depth-first traversal in a derivation exactly corresponds to the order of a sequential execution of a command.

\begin{thm}[Finite lookahead revision]\label{thm:flr}
Given a program $\prog$, if $\prog$ is safe with respect to the typing environments $\typenv,\typop$ then it has  finite lookahead revision.
\end{thm}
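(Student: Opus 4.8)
The plan is to reduce the statement to a bound on the sizes of the values stored in the \emph{input bounds} of oracle calls. By the padding property $\size{\sem{\upharpoonright}(\varv,\w)}=\size{\w}+1$, the size of the $i$-th oracle input $l_i^{\pi_\phi}$ is determined solely by the word $\w$ to which the input bound $\e_2$ evaluates at that call; hence a lookahead revision occurs exactly when an input bound value attains a strictly larger size than all previous ones. By rule (OR) the input bound $\e_2$ carries the outermost tier $\sla_{out}$, and, since rule (W$_0$) is the only rule resetting the outermost tier and can be applied only to a top-level loop (as observed in the proof of Lemma~\ref{lem:inner}), this $\sla_{out}$ is exactly the tier of the outermost \texttt{while} loop $L$ enclosing the call (Lemma~\ref{lem:kout}). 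So it suffices to bound the number of record-size input bound values, and I would do this loop by loop.

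First I would prove two structural facts about an oracle call sitting inside such an $L$. (a) The input bound $\e_2$ contains neither a positive operator nor a nested oracle call: any such occurrence produces a subexpression whose tier is $\ordst\sla_{in}$, and since safety makes each operator output tier at most its argument tiers, this bound propagates to the top of $\e_2$, giving $\sla_{out}\ordst\sla_{in}$; but Lemma~\ref{lem:inner} applied to $L$ yields $\sla_{in}\ord\sla_{out}$, a contradiction (this is the analogue for input bounds of Lemma~\ref{lem:stratification}). Consequently, by Definition~\ref{def:np}, $\e_2$ evaluates to a subword of one of its free variables or to a constant. (b) During a single execution of $L$, the maximum size among the variables $\x$ with $\sla_{out}\ord\Gamma(\x)$ never increases: by Confinement (Lemma~\ref{lem:confinement}) the body of $L$, being of tier $\sla_{out}$, writes only to variables of tier $\ord\sla_{out}$, so variables of tier strictly above $\sla_{out}$ are frozen throughout $L$; and by Lemma~\ref{lem:stratification} every assignment to a tier-$\sla_{out}$ variable inside $L$ uses only neutral operators, whose outputs (again Definition~\ref{def:np}) are subwords of their arguments or constants, the arguments having tier $\ord\sla_{out}$ by Simple security (Lemma~\ref{lem:ss}) and Non-interference (Theorem~\ref{thm:ni}). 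Combining (a) and (b), every input bound value computed inside $L$ has size at most $M$, the maximum size of a variable of tier at least $\sla_{out}$ at the moment $L$ is entered.

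To assemble the constant bound I would argue as follows. Because an outermost loop is, by definition, not nested inside any \texttt{while} loop, the control-flow skeleton obtained by contracting each outermost loop to a single node is loop-free; hence every outermost loop is entered at most once in any run, and only boundedly many oracle calls lie outside all loops. It then remains to bound the number of records produced inside one activation of $L$. By fact (b) the running maximum of the tier-$(\ord\sla_{out})$ variables is non-increasing inside $L$, and by fact (a) a tier-$\sla_{out}$ variable can only grow by copying the value of one of the finitely many frozen variables of tier strictly above $\sla_{out}$; organizing the count by induction on tiers from the top downwards, the records attainable inside $L$ are pinned to the boundedly many record sizes of those higher-tier variables, which are themselves controlled by the induction hypothesis. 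Summing the per-loop bounds over the constantly many outermost loops and straight-line calls yields a constant $r$, independent of $\phi$ and of the input, with $lr((l_n^{\pi_\phi}))\le r$.

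The main obstacle is precisely this last counting step. Facts (a) and (b) only bound each input bound value inside $L$ by the entry maximum $M$; by themselves they would still permit the input bound to reach a record through arbitrarily many intermediate sizes as $L$ iterates. The delicate point is therefore to show that a new record inside a single activation of $L$ can arise only from copying one of the finitely many frozen higher-tier values, so that the number of records per activation is bounded independently of the iteration count, and to make the tier induction that controls these higher-tier record sizes precise while correctly accounting for records carried over from before $L$ is entered.
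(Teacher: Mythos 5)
Your proposal reproduces the paper's structural analysis faithfully: your facts (a) and (b) are exactly what the paper extracts from rule (OR) together with Lemmas~\ref{lem:kout}, \ref{lem:ss}, \ref{lem:confinement} and~\ref{lem:stratification}, and your observation that each outermost loop is entered at most once per run is also how the paper converts a per-loop bound into a global constant. The genuine gap is the one you flag yourself: the per-activation counting. The paper does not close it the way you sketch. Its proof asserts a strictly stronger property than your facts give: inside a loop, the size of the value of each input bound $\e_2$ \emph{cannot increase from one iteration to the next} (not merely that it stays below the entry maximum $M$). Given this monotonicity, each syntactic oracle call contributes at most one record per activation of its enclosing outermost loop, so $r$ is bounded by the number of call sites in the program text; no induction on tiers and no analysis of ``copying'' is needed. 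Your substitute mechanism --- that a record inside $L$ can arise only by copying one of the finitely many frozen higher-tier values, the records being ``pinned'' to those values' sizes --- is both unproven and, as stated, incorrect: a tier-$\sla_{out}$ variable may also receive subwords of other \emph{mutable} tier-$\sla_{out}$ variables, and, more importantly, the value of a neutral expression can get longer while every variable it reads gets shorter (neutrality only guarantees that the result is a subword of one argument, and a subword of a shorter word can exceed the designated subword of a longer one). So boundedness by $M$ plus the copying argument cannot yield a constant number of records, as your last paragraph essentially concedes.

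To complete the proof along the paper's lines, the statement you must establish is the per-iteration monotonicity of $\size{\e_2}$, not its boundedness. Your instinct that this is the delicate point is well founded: the paper itself dispatches it in one sentence (``following the same reasoning''), deriving it from the fact that, by Lemma~\ref{lem:stratification} and Lemma~\ref{lem:confinement}, every variable of tier at least $\sla_{out}$ is updated inside the loop only through neutral operators over such variables. That inference is immediate when the input bound is a bare variable whose successive values form a chain of subwords of its own previous values --- which is the case in all of the paper's examples --- but for a general neutral expression over several such variables it is precisely the subtlety you identify. In short: same skeleton as the paper, correct reduction via padding and correct structural lemmas, but your proof is incomplete exactly at the step where the paper invokes monotonicity, and the alternative counting you propose would not close that hole.
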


\begin{proof}
By Lemma~\ref{lem:kout}, the outermost tier $\sla_{out}$ of any command within a while loop command $\cmd$ is an upper bound on the tier of any while loop guard in $\cmd$. By typing rule (OR) of Figure~\ref{TS}, the tier of the input bound $\e_2$ in any oracle call $\phi(\e_1\upharpoonright \e_2)$ in $\cmd$ must be equal to the outermost tier $\sla_{out}$. By Lemma~\ref{lem:ss}, only variables of tier greater than or equal to $\sla_{out}$ can occur in $\e_2$. Consider such a variable $\x$ of tier $\sla$. For simplicity, we assume that $\x$ is the only variable of tier greater than or equal to $\sla_{out}$ in $\cmd$.
If $\sla_{out} \ordst \sla$ then $\x$ cannot be assigned to in $\cmd$, by Lemma~\ref{lem:confinement}. If $\sla = \sla_{out}$ then $\x$ can only be assigned to by expressions that contain neither oracle calls nor positive operators, by Lemma~\ref{lem:stratification}.
It means that in any assignment $\x = \e$ of the loop, the expression $\e$ can only contain variable $\x$ and neutral operators. Hence the length of the value stored by $\x$ cannot increase between two iterations of any loop in $\cmd$. %As the evaluation of the oracle's input bound returns non-size-increasing values. Indeed,
Following the same reasoning, the length of the oracle's input bound $\e_2$ cannot increase between two iterations of a loop in $\cmd$. As $\size{\sem{\upharpoonright}(v,w)}=\size{\w}$+1, the length of the oracle's input values cannot increase within a loop in $\cmd$ and, consequently, the number of lookahead revisions in a sequence $(l_n^{\pi_{\phi}})$ is bounded by a constant.\footnote{Intuitively, this constant corresponds to the number of consecutive non-nested while loops.}

The general case, where several variables of tier greater than or equal to $\sla_{out}$ occur inside a while loop, only differs by a constant factor as only a finite number of different assignments on these variables may happen. \qedhere
\end{proof}

\section{Soundness}\label{s:sound}
In this section, we show a soundness result: the type-2 simply typed lambda-closure of programs in $\st$ is included in the class of basic feasible functionals $\BFF_2$~\cite{M76,KC91,KC96}. For that purpose, we use the characterization of~\cite{KS18} based on moderately polynomial time functionals. We show that terminating and safe program can be simulated by oracle Turing machines with a polynomial step count and a finite lookahead revision. We discuss briefly the requirement of the lambda-closure in \S\ref{ss:pos}.

\subsection{Moderately Polynomial Time Functionals}
We consider oracle Turing machines $M_{\phi}$ with one query tape and one answer tape for oracle calls. %Such a machine has one query tape and one answer tape for oracle calls.
If a query is written on the query tape and the machine enters a \emph{query state}, then the oracle's answer appears on the answer tape in one step.

\begin{defi}
Given an oracle TM $M_\phi$, computing a total function, and an input $\textbf{a}$, let $m_{\textbf{a}}^{M_\phi}$ be the maximum of the size of the input $\textbf{a}$ and of the biggest oracle answer in the run of machine on input $\textbf{a}$ with oracle $\phi$.
A machine $M_{\phi}$ has:
\begin{itemize}
\item a polynomial step count if there is a polynomial  $P$ such that for any input $\textbf{a}$ and oracle $\phi$, $M$ runs in time bounded by $P(m_{\textbf{a}}^{M_ \phi})$.
\item finite lookahead revision if there exists a natural number $r \in \mathbb{N}$ such that for any oracle and any input, in the run of the machine, it happens at most $r$ times that a query is posed whose size exceeds the size of all previous queries.
\end{itemize}
\end{defi}

\begin{defi}[Moderately Polynomial Time]
$\mpt$ is the class of second order functionals computable by an oracle TM with a polynomial step count and finite lookahead revision.
\end{defi}
The set of functionals over words $\W$ is defined to be the set of functions of type $\tau_1 \to \ldots \to \tau_n \to \W$, where the each type $\tau_i$ is defined inductively by $\tau ::= \W \ | \ \tau \to \tau $.

Suppose given a countably infinite number of variables $x^\tau,y^\tau,\ldots$, for each type $\tau$.
For a given class of functionals $X$, let $\lambda(X)$ be the set of closed simply typed lambda-terms generated inductively as follows:
\begin{itemize}
\item for each type $\tau$, variables $x^\tau, y^\tau, \ldots$ are terms,
\item each functional $F \in X$ of type $\tau$ is a term,
\item for any term $t$ of type $\tau'$ and variable $x^{\tau}$, $\lambda x.t$ is a term of type $\tau \to \tau'$,
\item for any terms $t$ of type $\tau \to \tau'$ and $s$ of type $\tau$, $t\ s$ is a term of type $\tau'$.
\end{itemize}
Each lambda-term of type $\tau$ represents a functional of type $\tau$ and terms are considered up to $\beta$ and $\eta$ equivalence.
The level of a type is defined inductively by $lev(\W)=0$ and $lev(\tau \to \tau')=\max(lev(\tau)+1,lev(\tau'))$. For a given class of functionals $X$, let $X_2$ be the set of functionals of level $2$.

\begin{lem}[Monotonicity]\label{closure}
Given two classes $X,Y$ of functionals,if $X \subseteq Y$ then $\lambda(X)_2 \subseteq \lambda(Y)_2$.
\end{lem}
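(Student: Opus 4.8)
The plan is to prove the stronger set inclusion $\lambda(X) \subseteq \lambda(Y)$ and then to observe that the level restriction $(\cdot)_2$, being defined purely on types, commutes with this inclusion. I would carry this out by structural induction on the inductive definition of $\lambda(X)$ given just before the statement, checking that each of the four formation rules is stable under replacing the generating class $X$ by the larger class $Y$, and that it assigns the term the same type in both systems.

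First I would dispatch the two base cases. For each type $\tau$, the variables $x^\tau, y^\tau, \ldots$ are terms of $\lambda(Y)$ because their generation does not refer to the class of functionals at all, so they belong to $\lambda(X)$ and $\lambda(Y)$ alike with identical types. For an atomic functional $F \in X$ of type $\tau$, the hypothesis $X \subseteq Y$ immediately gives $F \in Y$, so $F$ is also an atomic term of $\lambda(Y)$ of the very same type $\tau$. The two inductive cases then follow directly from the induction hypothesis: if $t \in \lambda(X)$ has type $\tau'$, then by induction $t \in \lambda(Y)$, whence $\lambda x.t \in \lambda(Y)$ of type $\tau \to \tau'$; and if $t \in \lambda(X)$ has type $\tau \to \tau'$ and $s \in \lambda(X)$ has type $\tau$, then by induction both lie in $\lambda(Y)$, so the application $t\,s \in \lambda(Y)$ has type $\tau'$. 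In each case the type produced by the rule is the same in $\lambda(X)$ and $\lambda(Y)$, so the inclusion $\lambda(X) \subseteq \lambda(Y)$ is strictly type-preserving.

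Finally I would conclude. Since $lev$ is a function of the type alone and is unaffected by which generating class produced the term, any term of $\lambda(X)$ whose type $\tau$ satisfies $lev(\tau) = 2$ is, by the inclusion just established, a term of $\lambda(Y)$ of the same level-$2$ type. Hence $\lambda(X)_2 \subseteq \lambda(Y)_2$, as required.

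There is essentially no obstacle in this argument; it is a routine induction. The only point deserving a moment's care is to confirm that each formation rule assigns a term identical types in the two systems, so that the level restriction can be transferred uniformly — and this is immediate because the typing discipline is shared and does not mention $X$ or $Y$.
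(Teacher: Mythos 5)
Your proof is correct. The paper states this monotonicity lemma without any proof, treating it as immediate; your structural induction on the four formation rules of $\lambda(X)$, followed by the observation that the level restriction $(\cdot)_2$ depends only on types and is therefore preserved by the type-preserving inclusion $\lambda(X) \subseteq \lambda(Y)$, is precisely the routine argument the paper takes for granted.
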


For a given functional $F$ of type $\tau_1 \to \ldots \to \tau_n \to \W$ and variables $X_i$ of type $\tau_i$, we will use the notation $F(X_1, \ldots,X_n)$ as a shorthand notation for $F\ (X_1)\ \ldots\ (X_n)$.

We are now ready to state the characterization of Basic Feasible Functionals in terms of moderately polynomial time functions.
\begin{thmC}[\cite{KS18}]\label{KS}
$ \lambda(\mpt)_2 = \BFF_2$.
\end{thmC}

\subsection{Proof of soundness}\label{ss:pos}

At this point we are able to give a clearer statement of the relationship between the size of a derivation for a safe program $\prog$ and the running time of a corresponding sequential execution of $\prog$. To make this precise, the running time of $M_{\phi}$ for a given input $a \in \W$ is just the number of steps
that it takes to terminate on with oracle $\phi$, starting with $a$ on its input tape (or undefined if the computation does not terminate). Given a store $\store$, this may appropriately be encoded by a single input $a_\store$. We then have
\begin{prop}%
\label{runtime}
Suppose that $\prog$ is a safe program. There are an oracle TM $M_{\phi}$ and a polynomial $P$ such that for any derivation $\pi_{\phi} : \store \Imp \prog \to \w$, $M_{\phi}$ on input $a_\store$ simulates the execution of $\prog$ with initial store $\store$ in time $O(P(\size{\pi_{\phi}}))$.
\end{prop}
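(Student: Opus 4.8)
The plan is to construct a single oracle TM $M_\phi$ that carries out a left-to-right depth-first traversal of the derivation $\pi_\phi$ — equivalently, the sequential small-step execution of $\prog$ — and to charge the simulation cost of each rule of Figure~\ref{fig:Com} against the corresponding portion of $\size{\pi_\phi}$. First I would fix a tape layout: one region of a work tape holding the current store as the list of pairs $\x \mapsto \store(\x)$ for $\x \in \FV(\prog)$, an auxiliary work tape for evaluating operators, the dedicated query and answer tapes for the oracle, and a control stack encoding the command continuation, i.e.\ the part of the execution still to be performed, which is exactly what the rules (Seq) and (Wh$_1$) make explicit. The traversal maintains the invariant that, whenever $M_\phi$ reaches the judgment $\store \Imp \cmd \to \store'$ (resp.\ $\store \Imp \e \to w$) at a node of $\pi_\phi$, the store region holds $\store$ and the stack holds the pending continuation; simulating the rule at that node then updates the store to $\store'$ (resp.\ leaves $w$ on the auxiliary tape) and pushes or pops the continuation accordingly. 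Correctness of $M_\phi$ is then a routine rule-by-rule verification against Figure~\ref{fig:Com}.

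Second, I would do the time accounting. Let $B$ denote the maximal length of any word occurring in a judgment of $\pi_\phi$. Each node is simulated in time polynomial in $B$: rule (Var) copies $\store(\x)$ in time $O(B)$; rule (Op) evaluates $\sem{\op}(\overline w)$, which costs $\mathrm{poly}(B)$ because a safe operator typing environment only types poly-time computable operators (Definition~\ref{sote}); rule (Orc) writes the query $\sem{\upharpoonright}(\varv,\w)$ of length $\size{\w}+1 \le B+1$ on the query tape, obtains the answer in one oracle step, and copies it back in time $O(B)$; and the remaining rules (Skip), (Asg), (Seq), (Cond), (Wh$_0$), (Wh$_1$), (Prg) require only $O(B)$ bookkeeping on the store and the stack. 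Summing over the nodes, the total running time of $M_\phi$ is bounded by $N \cdot \mathrm{poly}(B)$, where $N$ is the number of rule applications in $\pi_\phi$.

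Finally I would discharge this against $\size{\pi_\phi}$ itself. Both quantities above are dominated by $\size{\pi_\phi}$: the number of nodes $N$ is at most $\size{\pi_\phi}$, and, since $\size{\pi_\phi}$ is by definition the number of steps of a sequential execution, every word of length $s$ that is produced, read, or transmitted along that execution forces $\Omega(s)$ elementary steps, so $B \le \size{\pi_\phi}$ as well. Substituting gives a running time $\le \size{\pi_\phi}\cdot \mathrm{poly}(\size{\pi_\phi}) = O(P(\size{\pi_\phi}))$ for a suitable polynomial $P$, as required. I expect this last comparison between word sizes and derivation size to be the main obstacle: oracle answers are a priori unbounded, so the argument must rest on the fact that any oracle answer actually manipulated during the run contributes its full length to the step count $\size{\pi_\phi}$; likewise, non-oracle words grow only by the constant increments permitted by the positive operators of Definition~\ref{def:np}, each such increment being a separate node. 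Downstream, composing this bound with Theorem~\ref{thm:pol} (which bounds $\size{\pi_\phi}$ by a polynomial in $m_\store^\prog$) is what lets the polynomial step count of $M_\phi$ in $m_\store^\prog$ be recovered for the soundness theorem. Care is also needed to keep the continuation stack and store encoding from blowing up beyond $O(B)$ per manipulated word, but this follows from the same size constraints together with the confinement enforced by typing.
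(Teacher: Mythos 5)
Your construction --- an oracle TM performing the left-to-right depth-first traversal of $\pi_{\phi}$, maintaining the store and a continuation stack, with per-node cost polynomial in the maximal word length $B$ --- is exactly what the paper's one-line proof (``by induction on the structure of the derivation'') intends, and your per-node accounting, including the appeal to Definition~\ref{sote} for polynomial-time computability of operators, is sound. Up to the intermediate bound $N\cdot\mathrm{poly}(B)$ your argument is correct and takes the same route as the paper.

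The gap is the final step, $B \le \size{\pi_{\phi}}$. The paper identifies $\size{\pi_{\phi}}$ with the number of rule applications (``the number of steps in a sequential execution''), and these are \emph{big-step} rule applications, not elementary character-level operations; your claim that every word of length $s$ produced, read, or transmitted ``forces $\Omega(s)$ elementary steps'' conflates the two granularities. Establishing a correspondence between them is precisely what the proposition asks, so the argument is circular, and under the node-count reading the claim is false: a single (Orc) node yields an oracle answer of arbitrary length in \emph{one} step. For the program $\x \asg \phi(\y \upharpoonright \z) \sap \uu \asg \phi(\vvv \upharpoonright \ww) \sap \ifa(\x \meq{} \uu)\{\skp\}\elsea\{\skp\}\ \ret \x$ the derivation has constantly many nodes for every oracle $\phi$, yet the machine must compare two oracle answers, costing time proportional to their unbounded common length; similarly, the words of the initial store are not ``produced'' by any step, so $\skp\ \ret \x$ has a constant-size derivation while the machine must still write $\store(\x)$ on its output. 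Your hedge about positive operators only covers words built up during the execution from small seeds, not oracle answers or initial store contents. The repair is either (i) to read $\size{\pi_{\phi}}$ as the total size of the derivation \emph{including the words occurring in its judgments} --- then $B \le \size{\pi_{\phi}}$ is immediate, since every word the machine manipulates occurs in some judgment, and no $\Omega(s)$ argument is needed --- or (ii) to prove the bound in the weaker form $\mathrm{poly}(\size{\pi_{\phi}} + m_{\store}^{\prog})$, where $m_{\store}^{\prog}$ absorbs both the input store and the oracle answers. Either version suffices downstream: your closing observation about composing with Theorem~\ref{thm:pol} to obtain the polynomial step count used in Proposition~\ref{sound} goes through verbatim under both readings.
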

\begin{proof}
By induction on the structure of the derivation of $\prog$.\qedhere
\end{proof}
An oracle $\phi'$ is padded if there exists an oracle $\phi$ such that for any binary word $w$ and integer $n$, $\phi'(w10^n)=\phi(w)$. Let $\tilde{\phi}$ denote the padded version of $\phi$.

Before showing Theorem~\ref{thm:soundness}, we first show that the complexity class $\mpt$ is invariant through padding.
\begin{prop}\label{padnpad}
$f \in\mpt$ if and only if there exists $f'  \in\mpt$ such that for any input $a$ and any oracle $\phi$, $f(\phi)(a)=f'(\tilde{\phi})(a)$.
\end{prop}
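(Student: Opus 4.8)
The plan is to prove both directions of the equivalence, with the forward direction being essentially trivial and the backward direction requiring the actual construction. First I would observe that the ``if'' direction is immediate: if such an $f'$ exists, then setting $\phi'' = \tilde\phi$ in the definition shows that the map $\phi \mapsto f'(\tilde\phi)$ realizes $f$, and since $f'$ is computed by some machine $M'_{\phi''}$ with polynomial step count and finite lookahead revision, we may build a machine for $f$ that, on oracle $\phi$, simulates $M'$ but whenever $M'$ queries its oracle on a word $u$, our machine first checks whether $u$ is of the padded form $w10^n$, extracts $w$, and queries $\phi(w)$, returning the answer. The key point is that padding does not change the relevant complexity measures: an answer of $\tilde\phi$ on $w10^n$ is exactly an answer of $\phi$ on $w$, so the maximal oracle answer size $m^{M_\phi}_{\mathbf a}$ is preserved, and the sequence of query sizes (hence the lookahead revision count) is preserved up to the $10^n$ suffix, which only affects query \emph{sizes} and not answer sizes.

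The substantive direction is ``only if'': given $f \in \mpt$ computed by $M_\phi$, I must produce $f' \in \mpt$ with $f(\phi)(a) = f'(\tilde\phi)(a)$ for all $a, \phi$. The natural construction is to let $M'_{\psi}$ simulate $M_\phi$ step by step, but translate every oracle query: whenever $M_\phi$ would query its oracle on a word $w$, the machine $M'$ queries $\psi$ on $w10^0 = w1$ instead (or more carefully on a canonical padded encoding $w10^n$). When $\psi = \tilde\phi$, we have $\tilde\phi(w1) = \phi(w)$ by definition of padding, so the simulated computation of $M_\phi$ with oracle $\phi$ is reproduced exactly, giving $f'(\tilde\phi)(a) = f(\phi)(a)$.

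The step I expect to be the main obstacle is verifying that $f'$ remains in $\mpt$, i.e.\ that the translated machine still has polynomial step count and finite lookahead revision. For the step count, I would argue that appending the pad $1$ (a single symbol) to each query increases each query-writing phase by only a constant, and crucially the \emph{answers} returned by $\tilde\phi$ on $w1$ equal those returned by $\phi$ on $w$, so the quantity $m^{M'_\psi}_{\mathbf a}$ equals $m^{M_\phi}_{\mathbf a}$ and the same polynomial $P$ (adjusted by a constant factor for the padding overhead) bounds the running time. For finite lookahead revision, the delicate point is that the query \emph{sizes} change under padding: a query $w$ of size $|w|$ becomes a query of size $|w|+1$. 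I would argue that this uniform shift by one (if we always pad by a single symbol $1$) preserves the relative ordering of query sizes, so the set of indices at which a query exceeds all previous queries is unchanged, and hence the lookahead revision count is identical. The only genuine care needed is to fix a \emph{single} canonical padding (always pad with exactly ``$1$'', i.e.\ $n=0$) so that the size transformation $|w| \mapsto |w|+1$ is strictly monotone and does not collapse or reorder the query-size sequence; with that convention the revision bound $r$ transfers unchanged.
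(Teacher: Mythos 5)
Your ``only if'' direction (padding every query with a single $1$) is correct and is exactly the paper's argument: the uniform shift $\size{w}\mapsto\size{w}+1$ is strictly monotone, so the revision indices are unchanged. The genuine gap is in the direction you dismiss as essentially trivial. There you simulate $M'$ and replace each query $u=w10^n$ by the single query $\phi(w)$, claiming the lookahead revision count is preserved because the padding ``only affects query sizes and not answer sizes.'' But finite lookahead revision is defined precisely in terms of query \emph{sizes}, and your size change is not uniform: the $i$-th query shrinks by $n_i+1$ symbols, where $n_i$ varies from query to query. This can reorder the query-size sequence and inflate the revision count unboundedly. Concretely, $M'$ may pose queries $w_11 0^{n_1}, w_21 0^{n_2},\ldots$ all of the same total size $C$ (shorter $w_i$ padded with more zeros) while $\size{w_1}<\size{w_2}<\cdots$; then $M'$ has a single lookahead revision, but your machine, querying $w_1,w_2,\ldots$, revises at every query, and the number of such queries can grow with the input. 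So the machine you build need not have finite lookahead revision, and the conclusion $f\in\mpt$ does not follow. (Fixing a canonical padding $n=0$, as you propose, is a constraint on the machine you build in the \emph{other} direction; it gives you no control over the padding exponents $n_i$ chosen by the given machine $M'$.)

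The paper closes exactly this hole with a small device: each call $\tilde{\phi}(w10^n)$ of $M'$ is replaced by \emph{two} successive calls to $\phi$, namely $\phi(w10^n)$, whose answer is discarded, followed by $\phi(w)$. The dummy call has the same size as the original query, so the revision pattern of $M'$ is reproduced exactly, and the useful call $\phi(w)$, being no larger, never creates a new revision. The price is that the discarded answers $\phi(w10^n)$ may be huge; but this only increases the parameter $m$ against which the step count is measured, so the polynomial step count survives. Note the trade-off: your single-call construction keeps $m$ unchanged but breaks revision counting, while the paper's two-call construction preserves revision counting at the cost of possibly increasing $m$ --- and only the second cost is harmless.
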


\begin{proof}
Both directions can be proved through simple rewriting of the functions.
\begin{itemize}
\item It is trivial that if $f$ on $\phi$ has finite lookahead revision $k$, replacing oracle calls $\phi(w)$ to calls to the padded oracle $\tilde{\phi}(w1)$ does not modify the lookahead revision.
\item For the other direction, assume $f'$ works on padded oracle $\tilde{\phi}$.
Then we can design $f$ working on $\phi$ such that each call to $\tilde{\phi}(w10^n)$ is replaced by 2 successive calls: $\phi(w10^n)$ and $\phi(w)$, the first being unused.
This preserves the lookahead revision.
Note that this construction may change the maximum of the lengths of the inputs and oracle answers. However, it may only increase this value, hence, if $f'$ has a polynomial step count, then so has $f$. \qedhere
\end{itemize}
\end{proof}

Now we can show a soundness result stating that any terminating and safe program computes a second order function in $\mpt$.
\begin{prop}\label{sound}
$\sem{\st} \subseteq \mpt$.
\end{prop}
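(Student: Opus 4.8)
The plan is to take an arbitrary functional $F \in \sem{\st}$, realize it by the oracle Turing machine supplied by the simulation of Proposition~\ref{runtime}, and then transfer the two program-level invariants—polynomial step count (Corollary~\ref{psc}) and finite lookahead revision (Theorem~\ref{thm:flr})—onto that machine, so that it witnesses $F \in \mpt$. First I would unfold the definitions: $F = \lambda\phi.\lambda\w_1.\cdots\lambda\w_n.\sem{\prog}(\w_1,\ldots,\w_n)$ for some safe and terminating $\prog \in \st$. Fixing an input store $\store$ encoding $\w_1,\ldots,\w_n$, Proposition~\ref{runtime} yields an oracle TM $M_{\phi}$, uniform in $\phi$, and a polynomial $P$ such that $M_{\phi}$ started on $a_\store$ simulates the derivation $\pi_{\phi} : \store \Imp \prog \to \w$ in time $O(P(\size{\pi_{\phi}}))$. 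Since the simulation is faithful and the query tape is loaded at each rule (OR) with exactly the word $\sem{\upharpoonright}(\varv,\w)$, the machine computes $F(\phi)$ and its sequence of oracle queries coincides with the program's sequence $(l^{\pi_{\phi}}_n)$.

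For the polynomial step count, I would combine this with Corollary~\ref{psc}: as $\prog \in \st$, there is a polynomial $Q$ with $\size{\pi_{\phi}} \leq Q(m_{\store}^{\prog})$ for every derivation. The essential verification is that the machine measure $m_{a_\store}^{M_{\phi}}$ and the program measure $m_{\store}^{\prog}$ are polynomially related: the size of the encoded input $a_\store$ is linear in $\size{\store}$, and the oracle answers occurring in the run of $M_{\phi}$ are precisely the words $\phi(\sem{\upharpoonright}(\varv,\w))$ appearing in $C(\pi_{\phi})$. Hence $M_{\phi}$ runs in time $O(P(Q(m_{\store}^{\prog})))$, which is a polynomial in $m_{a_\store}^{M_{\phi}}$, giving a polynomial step count uniformly in $\phi$.

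For finite lookahead revision, Theorem~\ref{thm:flr} supplies a constant $r$ bounding $lr((l^{\pi_{\phi}}_n))$ for all $\phi$ and all inputs. Since the machine's queries are exactly the words $(l^{\pi_{\phi}}_n)$ produced by the program, the number of queries whose size exceeds all previous ones is bounded by the same $r$, so $M_{\phi}$ has finite lookahead revision. Together with the previous paragraph, $M_{\phi}$ satisfies both defining conditions of $\mpt$. Finally I would reconcile the fact that the program queries the oracle only on $\upharpoonright$-padded words $\sem{\upharpoonright}(\varv,\w)$, each of the shape $u10^k$: by Proposition~\ref{padnpad}, membership in $\mpt$ is insensitive to this padding, so the machine built above indeed places $F$ in $\mpt$, establishing $\sem{\st} \subseteq \mpt$.

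The main obstacle is the resource bookkeeping of the second paragraph—establishing the polynomial equivalence between $m_{\store}^{\prog}$ and $m_{a_\store}^{M_{\phi}}$ and checking that computing $\sem{\upharpoonright}(\varv,\w)$ and loading the query tape do not inflate the running time beyond a polynomial—rather than any conceptual difficulty, since the two genuinely hard semantic invariants have already been discharged by Corollary~\ref{psc} and Theorem~\ref{thm:flr}.
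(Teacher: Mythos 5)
Your proposal is correct and takes essentially the same route as the paper's own proof: both transfer the polynomial step count (Theorem~\ref{thm:pol}, restated as Corollary~\ref{psc}) and the finite lookahead revision property (Theorem~\ref{thm:flr}) onto the simulating oracle Turing machine of Proposition~\ref{runtime}, and both dispose of the $\upharpoonright$-padding of oracle queries via Proposition~\ref{padnpad}. The only difference is that you make explicit the bookkeeping the paper leaves implicit, namely the polynomial relation between $m_{\store}^{\prog}$ and the machine-side measure and the identification of the machine's query sequence with $(l^{\pi_{\phi}}_n)$.
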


\begin{proof}
Given an $\st$ program $\prog$,
by Theorem~\ref{thm:pol}, $\prog$ has a polynomial step count. By Theorem~\ref{thm:flr}, $\prog$ has finite lookahead revision and, consequently, it computes a function $f'$ in $\mpt$, however this function needs its oracles to be padded. By Proposition~\ref{padnpad}, $f'$ is equivalent to a function $f$ of $\mpt$ on standard oracles.\qedhere
\end{proof}

\begin{thm}[Soundness]\label{thm:soundness}
$\lambda(\sem{\st})_2 \subseteq \BFF_2$.
\end{thm}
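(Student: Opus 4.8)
The plan is to obtain Theorem~\ref{thm:soundness} by chaining together the results already established in this section with the $\mpt$-characterization of $\BFF_2$. The statement to prove is $\lambda(\sem{\st})_2 \subseteq \BFF_2$, and the key observation is that all the hard work has been isolated into Proposition~\ref{sound} and Theorem~\ref{KS}.

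First I would invoke Proposition~\ref{sound}, which gives $\sem{\st} \subseteq \mpt$. This is the crucial inclusion: it packages together the polynomial step count property (Theorem~\ref{thm:pol}, via Corollary~\ref{psc}) and the finite lookahead revision property (Theorem~\ref{thm:flr}), together with the padding argument of Proposition~\ref{padnpad} needed to convert the padded-oracle machine into one working on standard oracles. Next I would apply the monotonicity of the level-2 lambda-closure operator, namely Lemma~\ref{closure}: from $\sem{\st} \subseteq \mpt$ we conclude $\lambda(\sem{\st})_2 \subseteq \lambda(\mpt)_2$. Finally, Theorem~\ref{KS} of~\cite{KS18} states exactly that $\lambda(\mpt)_2 = \BFF_2$, so combining these gives
\[
\lambda(\sem{\st})_2 \subseteq \lambda(\mpt)_2 = \BFF_2,
\]
which is the desired conclusion.

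Since the substantive semantic content has already been extracted into the cited results, I do not expect any genuine obstacle in this final assembly; the only point requiring minor care is ensuring that the monotonicity lemma is applied at the correct type level (level 2) so that the closure inclusion is legitimate, and that the padding invariance of $\mpt$ established in Proposition~\ref{padnpad} has indeed been absorbed into the statement $\sem{\st}\subseteq\mpt$ rather than left dangling. Everything else is a direct composition of inclusions and an application of the external characterization theorem.

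It is worth remarking, at the level of the proof plan, that this structure mirrors the overall soundness strategy announced in the introduction: the two nontrivial semantic properties (polynomial step count and finite lookahead revision) are enforced \emph{statically} by typability, so that a safe and terminating program provably lands in $\mpt$, after which the already-known equality $\lambda(\mpt)_2 = \BFF_2$ does the rest. The lambda-closure is essential here precisely because, as noted earlier, oracle composition is not expressible in the language's syntax, so $\sem{\st}$ alone would not suffice; it is the level-2 simply-typed closure that recovers the full class $\BFF_2$.
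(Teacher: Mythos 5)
Your proposal is correct and matches the paper's proof exactly: Proposition~\ref{sound} gives $\sem{\st} \subseteq \mpt$, then Lemma~\ref{closure} and Theorem~\ref{KS} yield $\lambda(\sem{\st})_2 \subseteq \lambda(\mpt)_2 = \BFF_2$. The additional remarks about padding and the necessity of the lambda-closure are accurate but not needed for the argument itself.
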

\begin{proof}
By Proposition~\ref{sound}, $\sem{\st} \subseteq \mpt$.
By Lemma~\ref{closure} and Theorem~\ref{KS}, $\lambda(\sem{\st})_2 \subseteq \lambda(\mpt)_2= \BFF_2$.\qedhere
\end{proof}

Notice that the lambda-closure of $\sem{\st}$ is mandatory for characterizing $\BFF_2$ as it is known that $\mpt$ is strictly included in $\BFF_2$.
In particular, the following counter-example taken from~\cite{KS18} and computing a function of $\BFF_2$ cannot be typed as all our oracle calls have input bounds.
\begin{exa}\label{uncomp}
The functional $F$ defined below is in $\BFF_2$ but not in $\mpt$.
\begin{align*}
F(\phi,\epsilon)&=\epsilon \\
F(\phi,\msuc{1}(n))&=\phi \circ \phi (F(\phi,n) \upharpoonright \phi(\epsilon))
\end{align*}
Consequently, $F$ is not in $\sem{\st}$, since $\sem{\st} \subseteq \mpt$. Indeed, the outermost oracle call is performed without any oracle input bound and can clearly not be captured by typable programs.
\end{exa}

\section{Completenesses at type-1 and type-2}\label{s:comp}
Completeness is demonstrated in two steps. First, we show that each type 1 polynomial time computable function $\FP$ can be computed by a terminating program in $\st$, with no oracle calls. For that purpose, we show that the 2-tier sequential version of the type system of~\cite{MP14}, characterizing $\FP$, is a strict subsystem of the type system of Figure~\ref{TS}. Second, we show that the bounded  iterator functional $\mathcal{I}'$ of~\cite{KS19} can be simulated by a terminating and typable program in $\st$. The completeness follows as the type-2 simply typed lambda-closure of the bounded iterator $\mathcal{I}'$ and the functions of $\FP$ provides an alternative characterization of $\BFF_2$.

\subsection{A characterization of \texorpdfstring{$\FP$}{FP}}\label{ss:se}

For that purpose, we consider the 2-tier based characterization of $\FP$ in~\cite{MP14}, restricted to one single thread. Let $\alpha,\beta$ be tier variables ranging over $\{\tiera,\tierb\}$. The type system is provided in Figure~\ref{fig:TypeCom}, where $\overline{\alpha}$ stands for $\alpha_1 \to \ldots \to \alpha_{ar(\op)}$.

\begin{figure*}[t]
\centering
\hrulefill
\\[10pt]
\begin{prooftree}
\hypo{ \Gamma(\x)=\alpha}
\infer1[(V$_2$)]{\Gamma,\Delta \vdash_2 \x:\alpha}
\end{prooftree}
 \quad
\begin{prooftree}
\hypo{\forall i \leq n,\ \Gamma,\Delta \vdash_2 \e_i:\alpha_i \quad \overline{\alpha} \to {\alpha} \in \Delta(\op)}
\infer1[(OP$_2$)]{\Gamma,\Delta \vdash_2 \op(\e_1,\ldots,\e_{ar(\op)}):\alpha }
\end{prooftree}
\\[10pt]
 \begin{prooftree}
\hypo{\Gamma, \Delta \vdash_2 \x: \alpha  \quad \Gamma, \Delta \vdash_2 \e:\beta\quad \alpha \ord \beta}
\infer1[(A$_2$)]{\Gamma, \Delta \vdash_2 \x \asg \e:\alpha}
\end{prooftree}
\quad
\begin{prooftree}
\hypo{\phantom{sdd}}
\infer1[(SK$_2$)]{\Gamma, \Delta \vdash_2 \skp:\alpha}
\end{prooftree}
\\[10pt]
\begin{prooftree}
\hypo{\Gamma, \Delta \vdash_2 \cmd:\alpha }
\hypo{ \Gamma, \Delta \vdash_2 \cmd':\beta}
\infer2[(S$_2$)]{\Gamma, \Delta \vdash_2 \cmd \sap \cmd': \alpha \join \beta}
\end{prooftree}
\quad
\begin{prooftree}
\hypo{\Gamma, \Delta \vdash_2 \e :\alpha}
\hypo{ \Gamma, \Delta \vdash_2 \cmd: \alpha}
\hypo{\Gamma, \Delta \vdash_2 \cmd':\alpha}
\infer3[(C$_2$)]{\Gamma, \Delta \vdash_2 \ifa(\e)\{\cmd\} \elsea\{ \cmd'\}:\alpha}
\end{prooftree}
\\[10pt]
\begin{prooftree}
\hypo{\Gamma, \Delta \vdash_2 \e :\tierb}
\hypo{\Gamma, \Delta \vdash_2 \cmd:\alpha}
\infer2[(W$_2$)]{\Gamma, \Delta \vdash_2 \while (\e ) \{\cmd\}:\tierb}
\end{prooftree}
\caption{2-tier-based type system for $2\st$\label{fig:TypeCom}}
\hrulefill
\end{figure*}

In this particular context, the notion of \emph{2-tier safe program} is defined as follows. A 2-tier operator typing environment $\typop$ is a mapping that associates to each operator $\op$ a set of operator types $\typop(\op)$, of the shape $\alpha_1 \to \ldots \to \alpha_{ar(\op)} \to \alpha$, with $\alpha_i,\alpha \in \{\tiera,\tierb\}$.

\begin{defi}
A program is 2-tier safe if it can be typed using 2-tier operator typing environment $\Delta$ satisfying, for any $ \op \in \dom(\typop)$, $\sem{op} \in \FP$, $\op$ is either positive or neutral, and for each  $\alpha_1 \to \ldots \to \alpha_{ar(\op)} \to \alpha \in \typop(\op)$:
\begin{itemize}
\item $\alpha \ord \meet_{i=1,n} \alpha_i$ and
\item if $\op$ is positive but not neutral then $\alpha=\tiera$.
\end{itemize}
Let $2\st$ be the set of 2-tier safe and terminating programs and $\sem{2\st}$ be the set of functions computed by these programs.% in $2\st$.
\end{defi}

\begin{thm}[Theorem 7 of~\cite{MP14}]\label{fpmp}
$ \sem{2\st}=\FP$.
\end{thm}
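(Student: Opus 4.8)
The plan is to prove the two inclusions $\sem{2\st} \subseteq \FP$ and $\FP \subseteq \sem{2\st}$ separately; the former is essentially a specialization of the polynomial step count machinery already developed in this paper, while the latter is a Turing machine simulation.

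For soundness ($\sem{2\st} \subseteq \FP$), I would first observe that the two-tier system of Figure~\ref{fig:TypeCom} is the oracle-free restriction of the system of Figure~\ref{TS} to the tiers $\{\tiera,\tierb\}$: rule (W$_2$) forces every loop guard to have tier $\tierb$, which here plays the role of both the innermost and the outermost tier. Hence the analogues of non-interference (Theorem~\ref{thm:ni}) and of the stratification lemma (Lemma~\ref{lem:stratification}) hold verbatim. In particular, inside a tier $\tierb$ loop, any variable assigned a value of tier $\tierb$ receives only the result of neutral operators applied to tier $\tierb$ data, since positive operators output tier $\tiera$ by the safety condition $\alpha=\tiera$. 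By Definition~\ref{def:np}, a neutral operator returns either a constant in $\{\false,\true\}$ or a subword of one of its arguments, so the value of each tier $\tierb$ variable remains a subword of the initial store, up to constantly many constant writes. This bounds the number of distinct tier $\tierb$ configurations polynomially in $\size{\store}$, hence bounds the number of loop iterations polynomially, exactly as in the proof of Theorem~\ref{thm:pol} with $m_{\store}^{\prog}$ replaced by $\size{\store}$ (there being no oracle). Tier $\tiera$ data may grow, but only by a constant per positive-operator application, and such applications occur a polynomial number of times, so the whole store stays polynomially bounded and the running time is polynomial; termination is assumed, whence $\sem{2\st}\subseteq\FP$.

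For completeness ($\FP \subseteq \sem{2\st}$), I would simulate a polynomial-time Turing machine $M$ computing a function $f$ and running in time bounded by $n^d$, where $n$ is the input size. The key is a polynomial clock realized within the tier discipline: keep a tier $\tierb$ copy of the input and nest $d$ \texttt{while} loops, each guarded by a non-emptiness test on a freshly reset tier $\tierb$ copy of the input that is decremented by $\mpred$. Since $\mpred$ is neutral and each reset assigns a tier $\tierb$ value to a tier $\tierb$ variable, all these loops type-check under the constraints of rule (A$_2$), together perform $\Theta(n^d)$ iterations, and obviously terminate. The configuration of $M$ (tape contents, head position, control state) is stored in tier $\tiera$ variables, which are allowed to grow; one transition step is simulated by a loop-free block of assignments built from neutral operators (for tests and subword extraction) and positive operators such as $\msuc{i}$ (for extending the tape), all of which lie in $\FP$ and output tier $\tiera$, hence are admissible in a safe two-tier environment. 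Reading the final tape into the return variable yields a safe and terminating program computing $f$, so $f \in \sem{2\st}$.

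The main obstacle is the completeness construction, specifically engineering the polynomial clock and the step simulation so that every assignment respects both the constraint $\alpha \ord \beta$ of rule (A$_2$) and the operator restriction $\alpha=\tiera$ for positive operators: the loop counters must live at tier $\tierb$ to guard loops, yet the only way to make a loop run polynomially often without a growing tier $\tierb$ value is the nesting of decremented input copies, and one must check that resetting each inner counter from the tier $\tierb$ input is well-typed and that no tier $\tiera$ configuration datum ever leaks into a loop guard. Since this statement is quoted as Theorem~7 of~\cite{MP14}, the detailed verification is inherited from that reference; the point to emphasize here is that its soundness half is an immediate corollary of the present paper's non-interference (Theorem~\ref{thm:ni}) and step-count (Theorem~\ref{thm:pol}) results, restricted to the oracle-free two-tier fragment.
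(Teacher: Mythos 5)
The paper never proves this statement: it is imported wholesale as Theorem~7 of~\cite{MP14} and used as a black box in the completeness argument, so there is no in-paper proof to compare yours against. What can be judged is whether your reconstruction is sound and how it relates to the machinery the paper does develop --- and on both counts you do well, with some glossing.

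Your soundness half is, in substance, the route the paper itself takes when it needs this direction: the embedding of two-tier typability into the system of Figure~\ref{TS} is exactly Lemma~\ref{t2s} (and Lemma~\ref{subsystem}), and combining it with polynomial step count (Corollary~\ref{psc}) plus the absence of oracles is precisely how the paper derives $\sem{\st}_1 \subseteq \FP$ inside Theorem~\ref{t1}; your argument specializes this correctly and without circularity, since Theorem~\ref{thm:pol} is proved independently of the present statement. One caveat: the embedding is not ``verbatim'' as you claim --- rules (S$_2$) and (W$_2$) permit tier mismatches between a command and its subcommands that rules (S) and (W) of Figure~\ref{TS} forbid, and bridging this requires the subtyping rule (SUB); that bookkeeping is the actual content of Lemma~\ref{t2s}. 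Your completeness half (nested-loop clock with tier-$\tierb$ counters decremented by the neutral operator $\mpred$, configuration held in tier-$\tiera$ variables, steps simulated by neutral and positive operators) is the standard construction, and it type-checks for the reasons you give; the paper itself remarks after Theorem~\ref{t1} that completeness can be proved ``directly by simulating polynomials over unary numbers and Turing Machines'' as in~\cite{M11}. The one piece you leave entirely to~\cite{MP14} --- defensibly, since the statement is a citation --- is the counting argument that turns ``tier-$\tierb$ values remain subwords of the initial store'' into a polynomial bound on loop iterations (distinct tier-$\tierb$ configurations cannot recur within a terminating loop, and there are only polynomially many of them); stating that step explicitly would make your soundness sketch self-contained.
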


We first show that the set of 2-tier safe programs is (strictly) embedded in the set of safe programs. We define a naive translation $()^\star$ from 2-tier operator typing environments to typing environments as follows:
$(\Delta)^\star$ is the operator typing environment defined by $\forall \op,\ (\Delta)^\star(\op)(\sla)=\Delta(\op)$, if $\sla = \tierb$, and $\emptyset$, otherwise.

\begin{lem}\label{t2s}
For any command  or expression $b$, if $\Gamma,\Delta \vdash_2 b : \alpha$ then $\Gamma,(\Delta)^\star \vdash b : (\alpha,\tierb,\tierb)$.
\end{lem}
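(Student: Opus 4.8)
The plan is to proceed by structural induction on the derivation of the judgment $\Gamma, \Delta \vdash_2 b : \alpha$, treating expressions and commands simultaneously (or in two passes, expressions first). The key observation that makes the induction go through is that the translation $(\Delta)^\star$ places all admissible operator types exactly at innermost tier $\tierb$, so the target type system of Figure~\ref{TS} is forced to use $\sla_{in} = \tierb$ and $\sla_{out} = \tierb$ throughout, matching the fixed triple $(\alpha,\tierb,\tierb)$ in the statement. I would first verify that $(\Delta)^\star$ is a safe operator typing environment in the sense of Definition~\ref{sote}: the two conditions of the 2-tier safety definition ($\alpha \ord \meet_i \alpha_i$, and $\alpha = \tiera$ for positive non-neutral operators) translate directly into Conditions~(\ref{premier}) and~(\ref{second}) of Definition~\ref{sote} once $\sla_{in}$ is instantiated to $\tierb$, since $\join_i \alpha_i \ord \tierb$ holds automatically for tiers in $\{\tiera,\tierb\}$ and $\alpha = \tiera \ordst \tierb$ gives the strict inequality required for positive operators.

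The inductive cases then match the rules of Figure~\ref{fig:TypeCom} against those of Figure~\ref{TS} rule by rule. For (V$_2$) I would apply (V); for (OP$_2$) I would apply (OP), using that $\overline{\alpha}\to\alpha \in \Delta(\op) = (\Delta)^\star(\op)(\tierb)$ and the induction hypothesis on each $\e_i$; for (A$_2$) I would apply (A), noting the side condition $\alpha \ord \beta$ is preserved verbatim; and (SK$_2$) maps to (SK) with command tier $\tiera$. The genuinely interesting cases are the sequence and while rules, where the two systems differ in how they combine tiers. In (S$_2$) the resulting command tier is $\alpha \join \beta$, whereas rule (S) of Figure~\ref{TS} requires both subcommands to share a single homogeneous tier; here I would use the subtyping rule (SUB) to lift the subcommand of tier $\alpha \meet \beta$ up to $\alpha \join \beta$ (over $\{\tiera,\tierb\}$ the join is just the larger of the two, so at most one application of (SUB) is needed) before applying (S).

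The main obstacle will be the while rule. In (W$_2$), the guard is forced to tier $\tierb$ and the body may have any tier $\alpha \in \{\tiera,\tierb\}$, with the loop typed at $\tierb$; but rule (W$_0$) of Figure~\ref{TS} demands that the guard, the body, and the loop all carry the same tier $\sla$ with the body typed at $(\sla,\sla,\sla)$. So I must first coerce the body from tier $\alpha$ up to tier $\tierb$ via (SUB), then rederive the body judgment with its innermost component updated to $\tierb$ (which is exactly what (W$_0$) expects), and apply (W$_0$) to obtain loop type $(\tierb,\sla_{in},\tiera)$. Since the ambient $\sla_{in}$ and $\sla_{out}$ are fixed at $\tierb$ by the surrounding derivation, a final (SUB) adjusts the outermost component if needed. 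Care is required to track that the innermost tier of the body must genuinely be $\tierb$ and not merely the ambient value, and that no positive operator or oracle call (there are none in the source language here) obstructs the reindexing; I would state a strengthened induction hypothesis carrying the constraint that every triple in the translated derivation has innermost and outermost components equal to $\tierb$, so that the while case closes cleanly.
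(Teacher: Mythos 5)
Your overall strategy (induction on the 2-tier derivation, rule-by-rule simulation, inserting (SUB) to homogenize command tiers in sequences) is the same as the paper's, and those cases are fine. The genuine problem is your while case. You propose to close it with rule (W$_0$), whose conclusion has the shape $\pbl \while(\e)\{\cmd'\} : (\sla,\sla_{in},\tiera)$, and then to ``adjust the outermost component'' with a final (SUB). But (SUB) only increments the \emph{command} tier $\sla$, leaving $\sla_{in}$ and $\sla_{out}$ untouched; no rule of Figure~\ref{TS} ever modifies the outermost tier of a judgment. So your construction yields $\Gamma,(\Delta)^\star \vdash \while(\e)\{\cmd'\} : (\tierb,\tierb,\tiera)$, which is not the judgment $(\tierb,\tierb,\tierb)$ that the lemma asserts. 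This is not a cosmetic discrepancy: with outermost tier $\tiera$ the translated loop can never be re-assembled into a larger derivation, since rule (S) requires its two subcommands to carry identical triples and rules (W)/(W$_0$) require the body's outermost tier to be at least $\tierb$; hence sequences containing loops and nested loops would break. It also contradicts the strengthened invariant you yourself propose (every triple has innermost and outermost components equal to $\tierb$), which (W$_0$) violates by construction.

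The fix, which is what the paper's proof does, is to use rule (W), not (W$_0$), instantiated with $\sla = \sla_{in} = \sla_{out} = \tierb$. The induction hypothesis gives the guard at $(\tierb,\tierb,\tierb)$ and the body at $(\alpha,\tierb,\tierb)$; one application of (SUB) lifts the body to $(\tierb,\tierb,\tierb)$ when $\alpha=\tiera$; the side condition $\tierb \ord \tierb \ord \tierb$ of (W) holds trivially; and the conclusion is exactly $(\tierb,\tierb,\tierb)$. Rule (W$_0$) is only an initialization rule for outermost loops and is never needed in this simulation. Two smaller points: your verification that $(\Delta)^\star$ is a safe operator typing environment is correct but is not needed for this lemma (typability does not presuppose safety of the operator environment; the paper defers that check to Lemma~\ref{subsystem}), and the (SK$_2$) case needs (SK) followed by (SUB) when $\alpha=\tierb$, since (SK) fixes the command tier at $\tiera$.
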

\begin{proof}
By an easy induction on the typing derivation of $\Gamma,\Delta \vdash_2 b : \alpha$:
\begin{itemize}
\item If the last rule is (S$_2$), then $b = \cmd_1 \sap \cmd_2$ and $\Gamma,(\Delta)^\star \vdash \cmd_i : (\alpha_i,\tierb,\tierb)$ by the induction hypothesis, for $\alpha_i$ such that $\alpha=\alpha_1 \vee \alpha_2$. Consequently, $\Gamma,(\Delta)^\star \vdash b : (\alpha,\tierb,\tierb)$ can be derived applying rule (C) and at most once rule (SUB) (in the case where $\alpha_1 \neq \alpha_2$).
\item If the last rule is (W$_2$) then $\Gamma,\Delta \vdash_2 b : \tierb$ for $b = \while (\e ) \{\cmd'\}$ and, by the induction hypothesis, $\Gamma,(\Delta)^\star \vdash \e : (\tierb,\tierb,\tierb)$ and $\Gamma,(\Delta)^\star \vdash \cmd' : (\alpha,\tierb,\tierb)$, for some $\alpha$. Consequently, $\Gamma,(\Delta)^\star \vdash b : (\tierb,\tierb,\tierb)$ can be derived.
\item If the last  rule is (OP$_2$), then $b=\op(\overline{\e})$, for some operator $\op$ and expressions $\overline{\e}=\e_1,\ldots,\e_{ar(\op)}$ such that $\Gamma,\Delta \vdash_2 \e_i : \alpha_i$ and $\alpha_1 \to \ldots \alpha_n \to \alpha \in \Delta(\op)$. By the induction hypothesis, $\Gamma,(\Delta)^\star \vdash \e_i : (\alpha_i,\tierb,\tierb)$. Moreover, $\alpha_1 \to \ldots \alpha_n \to \alpha \in (\Delta)^\star(\op)(\tierb) = \Delta(\op)$ and, consequently, $\Gamma,(\Delta)^\star \vdash b : (\alpha,\tierb,\tierb)$ can be derived using rule (OP).
\item the cases where the last rule is (SK$_2$), (C$_2$), (V$_2$) and (A$_2$) can be straightforwardly simulated by (SK)+(SUB), (C), (V) and (A), respectively. \qedhere
\end{itemize}
\end{proof}

\begin{lem}\label{subsystem}
$2\st \subsetneq \st$.
\end{lem}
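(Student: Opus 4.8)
The plan is to prove the two inclusions separately: first $2\st \subseteq \st$, and then the strictness $2\st \neq \st$.

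For the inclusion, I would take a program $\prog \in 2\st$, so that $\prog$ is terminating and $2$-tier safe with respect to some $2$-tier operator typing environment $\Delta$. Applying Lemma~\ref{t2s} to the command of $\prog$ yields a typing derivation in the general system of Figure~\ref{TS} using the translated environment $(\Delta)^\star$. Since termination is a purely semantic property that is unaffected by the choice of type system, it then suffices to check that $(\Delta)^\star$ is a \emph{safe} operator typing environment in the sense of Definition~\ref{sote}; once this is established, $\prog$ is safe and terminating, hence $\prog \in \st$.

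The verification that $(\Delta)^\star$ is safe is the routine but essential core of this direction. By definition of $()^\star$, the set $(\Delta)^\star(\op)(\sla_{in})$ is empty unless $\sla_{in} = \tierb$, so both conditions of Definition~\ref{sote} need only be checked at $\sla_{in} = \tierb$. There, for any $\sla_1 \to \cdots \to \sla_{ar(\op)} \to \sla \in (\Delta)^\star(\op)(\tierb) = \Delta(\op)$ all $\sla_i$ and $\sla$ lie in $\{\tiera,\tierb\}$. Condition~(\ref{premier}) then reduces to $\sla \ord \meet_{i} \sla_i$, supplied directly by $2$-tier safety, together with $\join_{i} \sla_i \ord \tierb$, which is automatic since $\tierb$ is the top of $\{\tiera,\tierb\}$; condition~(\ref{second}) for a positive (non-neutral) operator reduces to $\sla = \tiera \ordst \tierb$, which is exactly the $2$-tier safety requirement that $\alpha = \tiera$ in this case. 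The remaining clauses, namely that each $\op$ is neutral or positive and that $\sem{\op} \in \FP$, are inherited verbatim from the definition of $2$-tier safety.

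For strictness I would exhibit a program that is safe and terminating but not $2$-tier safe, and the program of Example~\ref{ex:multiple} is the natural witness. It belongs to $\st$, being typable over the three tiers $\{\tiera,\tierb,\tierc\}$ and terminating. It does not belong to $2\st$: the positive operator $\msuc{1}$ in the body of the first loop forces $\Gamma(\y) = \tiera$ through rule (A$_2$) and the $2$-tier safety requirement $\alpha = \tiera$, whereas the guard $\y > 0$ of the second loop forces $\Gamma(\y) = \tierb$ through rule (W$_2$), a contradiction. This reproduces, inside the $2$-tier system, the strictly decreasing chain $\Gamma(\z) \ordst \Gamma(\y) \ordst \Gamma(\x)$ already observed in Example~\ref{ex:multiple}, which is unrealizable with only two tiers. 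I expect the main obstacle to lie not in either inclusion per se but in presenting the strictness argument precisely: one must rule out \emph{every} possible $2$-tier typing, which requires tracing the forced tier assignments through the two sequential loops rather than merely displaying one failed attempt.
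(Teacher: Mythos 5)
Your proof is correct, and the inclusion half is essentially the paper's own argument: apply Lemma~\ref{t2s} and then verify that $(\Delta)^\star$ satisfies Definition~\ref{sote}, with both conditions collapsing to checks at $\sla_{in}=\tierb$ exactly as you describe. Where you genuinely diverge is the strictness half. The paper dispatches it in one line: $2\st$ programs have no oracle call (the 2-tier system of Figure~\ref{fig:TypeCom} has no typing rule for oracle expressions at all), so any safe terminating program containing an oracle call --- e.g.\ the one in Example~\ref{oracles} or the iterator $it_\phi$ --- already lies in $\st \setminus 2\st$. You instead exhibit the oracle-free program of Example~\ref{ex:multiple} and rule out every 2-tier typing by tracing forced constraints: the positive operator $\msuc{1}$ forces $\Gamma(\y)=\tiera$ via the 2-tier safety clause and rule (A$_2$), while the second guard forces $\Gamma(\y)=\tierb$ via (W$_2$), (OP$_2$) and the neutrality constraint on $>0$. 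This argument is sound (the constraints are indeed forced, not artifacts of one attempted derivation), and it proves something strictly more informative than the paper's remark: the inclusion is strict even when both systems are restricted to their common, oracle-free syntax, i.e.\ the multi-tier discipline captures strictly more type-1 programs. The price is the extra bookkeeping you correctly anticipate; the paper's witness buys brevity but says nothing about relative expressiveness on the shared fragment.
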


\begin{proof}
Consider a tier-2 safe program $\prog = \cmd\ \ret \x$ in $2\st$. There exist typing environments $\Gamma,\Delta$ such that $\Gamma,\Delta \vdash_2 \cmd : \alpha$. Moreover, for each  $\alpha \to \ldots \to \alpha_{n} \to \alpha \in \typop(\op)$:
\begin{itemize}
\item $\alpha \ord \meet_{i=1,n} \alpha_i$ and
\item if $\op$ is positive but not neutral then $\alpha=\tiera$.
\end{itemize}
By Lemma~\ref{t2s}, $\Gamma,(\Delta)^\star \vdash \cmd : (\alpha,\tierb,\tierb)$, and $\forall \op, (\Delta)^\star(\op)(\tierb)=\Delta(\op)$.
Consequently, for all $\alpha_1 \to \alpha_n \to \alpha \in (\Delta)^\star(\op)(\tierb)$, the following hold:
\begin{itemize}
\item $\alpha \ord \meet_{i=1,n} \alpha_i \ord \vee _{i=1,n} \alpha_i \ord \tierb$ as $\forall i,\ \alpha_i \in \{\tiera,\tierb\}$ and
 \item if $\op$ is positive but not neutral then $\alpha \ordst \tierb$, as $\alpha = \tiera$.
 \end{itemize}
 Hence $(\Delta)^\star$ is safe and $\prog$ is in $\st$.
 The inclusion is strict as $2\st$ programs have no oracle call.\qedhere
 \end{proof}

Let $\sem{\st}_1$ be defined as the set of type-1 functions computed by safe and terminating programs with no oracle calls, $\sem{\st}_1=\{\lambda w.\sem{\prog}(w) \ | \ \phi \notin \prog \text{ and } \prog \in \st \}$.

\begin{thm}\label{t1}
$\FP = \sem{\st}_1$.
\end{thm}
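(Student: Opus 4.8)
The plan is to prove the two inclusions $\FP \subseteq \sem{\st}_1$ and $\sem{\st}_1 \subseteq \FP$ separately, leveraging the machinery already established for $2\st$. The key observation is that $\sem{\st}_1$ is a restriction of $\sem{\st}$ to oracle-free programs, and for such programs the notions of polynomial step count, finite lookahead revision, and MPT collapse back to ordinary type-1 polynomial time, since there are no oracle calls for which the distinctions introduced by $\sla_{in}$ and $\sla_{out}$ could matter.

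For the inclusion $\FP \subseteq \sem{\st}_1$, I would chain together the results already in hand. By Theorem~\ref{fpmp}, $\FP = \sem{2\st}$, so any $f \in \FP$ is computed by a terminating $2\st$ program $\prog = \cmd\ \ret \x$ with $\phi \notin \prog$. By Lemma~\ref{subsystem}, $\prog \in \st$, and since the translation $()^\star$ preserves the program syntactically it remains oracle-free and terminating. Hence $f = \lambda w.\sem{\prog}(w) \in \sem{\st}_1$. This direction is essentially a corollary of the embedding $2\st \subsetneq \st$ and requires only checking that the witnessing program has no oracle calls, which is immediate.

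For the reverse inclusion $\sem{\st}_1 \subseteq \FP$, I would take a terminating safe program $\prog$ with $\phi \notin \prog$ computing a type-1 function $f$. Since $\prog$ is safe and terminating, Proposition~\ref{sound} gives $\sem{\st} \subseteq \mpt$, so $f$ is computed by an oracle TM with polynomial step count and finite lookahead revision. Because $\prog$ makes no oracle calls, the quantity $m_{\store}^{\prog}$ of Theorem~\ref{thm:pol} reduces to $\size{\store}$ (there are no oracle answers to maximise over), so the polynomial step count bound $\size{\pi_\phi} \leq P(\size{\store})$ is a bound purely in the input size. By Proposition~\ref{runtime}, the simulating machine runs in time $O(P(\size{\pi_\phi}))$, hence polynomial in the input size, and since there is no oracle it is an ordinary Turing machine. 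Therefore $f \in \FP$.

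The main obstacle is making precise that oracle-free safe programs genuinely fall under the scope of the polynomiality argument without the $m_{\store}^{\prog}$ term contributing anything beyond the input size, and confirming that Theorem~\ref{thm:pol} (proved for the full language) specializes cleanly when $C(\pi_\phi) = \emptyset$. One must verify that the degree of the bounding polynomial $P$ depends only on static program data (number of high-tier variables and nesting depth of while loops), which is already established in the proof of Theorem~\ref{thm:pol}, so no new complexity analysis is needed; the only care required is the bookkeeping that ties the derivation size $\size{\pi_\phi}$ to actual Turing machine running time via Proposition~\ref{runtime}. Everything else is an assembly of previously proved statements, so I would present the argument as two short paragraphs invoking Theorem~\ref{fpmp}, Lemma~\ref{subsystem}, Proposition~\ref{sound}, Theorem~\ref{thm:pol}, and Proposition~\ref{runtime}.
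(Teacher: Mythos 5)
Your proposal is correct and follows essentially the same route as the paper: the inclusion $\FP \subseteq \sem{\st}_1$ via Theorem~\ref{fpmp} and the embedding of Lemma~\ref{subsystem}, and the converse via the polynomial step count (Corollary~\ref{psc}, i.e.\ Theorem~\ref{thm:pol}) specialized to oracle-free programs, where $m_{\store}^{\prog}$ collapses to $\size{\store}$. Your detour through Proposition~\ref{sound} and $\mpt$ is unnecessary but harmless, and your explicit appeal to Proposition~\ref{runtime} to convert derivation size into Turing machine time makes precise a step the paper leaves implicit.
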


 \begin{proof}
By Theorem~\ref{fpmp}, any function $f$ in $\FP$ is computable by a 2-tier safe and terminating program $\prog$ with no oracle calls, \ie, $f =\lambda w.\sem{\prog}(w)$. By Lemma~\ref{subsystem}, $\prog$ is in $\st$ and, consequently, $f \in \sem{\st}_1$.
Conversely, by Corollary~\ref{psc}, if $\prog \in \st$ then $\prog$ has a polynomial step count.
Since there is no oracle call, it means that the runtime of $\prog$ is bounded by $P(\size{w})$ for some polynomial $P$.
Consequently, $\sem{\st}_1 \subseteq \FP$.\qedhere
\end{proof}

Note that the completeness part of the above Theorem ($\FP \subseteq \sem{\st}_1$) can also be proved directly by simulating polynomials over unary numbers and Turing Machines with a program in $\st$ as in the completeness proof of~\cite{M11}.

\subsection{Type two iteration}~\cite{KS19} introduces a bounded iterator functional $\mathcal{I}'$ of type $(\W \to \W) \to \W \to \W \to \W \to \W$ defined by: \[\mathcal{I}'(F,a,b,c)=(\lambda x.F(lmin(x,a)))^{\size{c}}(b),\] where $lmin$ is a functional of type $\W \to \W \to \W$ defined by: \[lmin(a,b)=
\begin{cases}
&a, \text{ if }\size{a}<\size{b}, \\
&b, \text{ otherwise.}
\end{cases}
\]

In~\cite{KS19}, using Cook's notion~\cite{C92} of polynomial time reducibility, it is shown that this functional is polynomial time-equivalent to the recursor $\mathcal{R}$ of~\cite{CU93}.
As a consequence of the Cook-Urquhart Theorem, the following characterization is obtained.

\begin{thmC}[\cite{KS19}]%
\label{th:ks19}
$ \lambda(\FP \cup \{\mathcal{I}'\})_2 = \BFF_2$.
\end{thmC}

Our proof of type-2 completeness will mostly rely on the use of this latter characterization of $\BFF_2$.

\begin{thm}[Type-2 completeness]\label{t2}
$\BFF_2 \subseteq \lambda(\sem{\st})_2$.
\end{thm}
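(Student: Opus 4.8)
The plan is to reduce Theorem~\ref{t2} to the alternative characterization $\BFF_2 = \lambda(\FP \cup \{\mathcal{I}'\})_2$ supplied by Theorem~\ref{th:ks19}. It then suffices to realize each generator of $\FP \cup \{\mathcal{I}'\}$ inside $\lambda(\sem{\st})_2$. For the $\FP$ part this is immediate from Theorem~\ref{t1}: every $f \in \FP$ is computed by an oracle-free program $\prog \in \st$, so the oracle-ignoring functional $\hat{f} = \lambda \phi \lambda \overline{\w}.\sem{\prog}(\overline{\w})$ already lies in $\sem{\st}$ and satisfies $\hat{f}(\psi)=f$ for every $\psi$. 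The whole difficulty therefore concentrates on exhibiting a safe, terminating program whose computed functional yields $\mathcal{I}'$ after a suitable adjustment inside the lambda-closure.

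Second, I would introduce a neutral operator $lmin$ with $\sem{lmin}(\varv,\w)=\varv$ when $\size{\varv}<\size{\w}$ and $\w$ otherwise. Since $\sem{lmin}(\varv,\w)$ is always one of its arguments, it is a subword of that argument, hence $lmin$ is neutral in the sense of Definition~\ref{def:np}; it is clearly polynomial-time, and, crucially, $\size{\sem{lmin}(\varv,\w)} \ord \size{\w}$. The candidate program $\prog$, with oracle $\phi$ and free variables $\variable{a},\variable{b},\variable{c}$, is
\[
\x \asg \variable{b}\sap \while(\variable{c} > 0)\{\ \x \asg \phi(lmin(\x,\variable{a}) \upharpoonright \variable{a})\sap \variable{c} \asg \mpred(\variable{c})\ \}\ \ret \x,
\]
where $\variable{c} > 0$ is the neutral emptiness test and $\mpred$ removes one symbol. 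This loop runs exactly $\size{\variable{c}}$ times independently of $\phi$, so $\prog$ is terminating for every oracle; moreover each iteration replaces $\x$ by an oracle answer on an input bounded in length by $\variable{a}$, so $\prog$ computes $(\lambda \x.\phi(\sem{\upharpoonright}(lmin(\x,\variable{a}),\variable{a})))^{\size{\variable{c}}}(\variable{b})$.

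Third, I would verify safety by taking $\Gamma(\variable{a})=\Gamma(\variable{c})=\tierb$ and $\Gamma(\variable{b})=\Gamma(\x)=\tiera$ and typing the loop with rule (W$_0$) at innermost and outermost tier $\tierb$. The oracle call then gets expression tier $\tiera$: the input data $lmin(\x,\variable{a})$ is typed $\tiera$ using $\tiera \to \tierb \to \tiera \in \Delta(lmin)(\tierb)$, which is admissible because $lmin$ is neutral and $\tiera \ord \tiera \ord \tierb \ord \tierb$ holds as required by Definition~\ref{sote}, while the input bound $\variable{a}$ has tier $\tierb=\sla_{out}$. This meets the constraints $\tiera \ordst \tierb$ and $\tiera \ord \tierb$ of rule (OR). The assignment $\x \asg \phi(\cdots)$ has command tier $\tiera$ and is raised to $\tierb$ by (SUB) before being sequenced with $\variable{c} \asg \mpred(\variable{c})$, so $\prog$ is safe and hence $\prog \in \st$. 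Consequently $g = \lambda \phi\lambda \variable{a}\lambda\variable{b}\lambda\variable{c}.\sem{\prog}(\variable{a},\variable{b},\variable{c}) \in \sem{\st}$.

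Finally, I would reconcile $g$ with $\mathcal{I}'$ through the padding built into the semantics of oracle calls. Because $\size{lmin(\x,\variable{a})}\ord \size{\variable{a}}$, the truncation in $\sem{\upharpoonright}$ is trivial and $\sem{\upharpoonright}(lmin(\x,\variable{a}),\variable{a}) = lmin(\x,\variable{a})\cdot 1 0^{k}$ is just the padded data; thus $g$ queries its oracle on padded arguments. Introducing the unpadding function $U \in \FP$ with $U(\varv 1 0^{k})=\varv$ and setting $\tilde{F} = \lambda \y.F(U(\y))$, one obtains $g(\tilde{F},\variable{a},\variable{b},\variable{c}) = \mathcal{I}'(F,\variable{a},\variable{b},\variable{c})$, so $\mathcal{I}' = \lambda F\lambda \variable{a}\lambda\variable{b}\lambda\variable{c}.\,g(\lambda \y. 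F(U(\y)),\variable{a},\variable{b},\variable{c}) \in \lambda(\sem{\st})_2$, where $U$ is supplied by its oracle-ignoring realization $\hat{U}\in\sem{\st}$ instantiated at the bound variable $F$. Rewriting every $\FP$ atom $f$ as $\hat{f}$ and $\mathcal{I}'$ by this realization turns any closed level-$2$ term over $\FP\cup\{\mathcal{I}'\}$ into one over $\sem{\st}$; combining this with Lemma~\ref{closure} and Theorem~\ref{th:ks19} gives $\BFF_2 = \lambda(\FP\cup\{\mathcal{I}'\})_2 \subseteq \lambda(\sem{\st})_2$. I expect the main obstacle to be exactly this last reconciliation: the operational semantics forces every oracle query to be padded by its input bound, so $\prog$ cannot compute $\mathcal{I}'$ verbatim, and the mismatch must be absorbed by pre-composing the oracle with an $\FP$ unpadding map inside the lambda-closure, which is precisely why the closure, rather than $\sem{\st}$ alone, is needed for completeness.
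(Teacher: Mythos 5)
Your proposal is correct and follows essentially the same route as the paper's proof: reduce to $\lambda(\FP\cup\{\mathcal{I}'\})_2 = \BFF_2$ (Theorem~\ref{th:ks19}), obtain the $\FP$ part from Theorem~\ref{t1}, and realize $\mathcal{I}'$ by exactly the program $it_\phi$ of Figure~\ref{fig:itphi} with the same tier assignment, the same admissible type for the neutral operator $lmin$, and the same use of rules (OR), (SUB) and (W$_0$). Your explicit reconciliation of the forced padding, precomposing the oracle with an $\FP$ unpadding map $U$ inside the lambda-closure, is just the inverse formulation of the paper's use of the padded oracle $\tilde{\phi}$ (indeed $\tilde{\phi}=\phi\circ U$), and it spells out more carefully what the paper delegates to Proposition~\ref{padnpad}.
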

\begin{proof}
By Theorem~\ref{t1}, $\FP	= \sem{\st}_1$, hence $\FP \subseteq \lambda(\sem{\st})_1$ by definition of lambda-closure.
%as any function $f \in \sem{\st}_1$ is of the shape $\lambda w.\sem{\prog}(w)$, for $\prog \in \st$, and, consequently $f =(\lambda \phi.f) \ \phi \in \lambda(\sem{\st})_1$.

Now we show that $\mathcal{I}'$ can be computed by a terminating program in $\st$. For that purpose, assume that $lmin$ is an operator of our language. $lmin$ is neutral, by definition. The program $it_\phi$, written in Figure~\ref{fig:itphi}, computes the functional $\lambda \phi.\lambda a.\lambda b.\lambda c.\mathcal{I'}(\phi,a,b,c)$ and can be typed by $(\tierb,\tierb,\tiera)$, as described in Figure~\ref{fig:iprime}, under the typing environment $\Gamma$ such that $\Gamma(c)=\Gamma(a)=\tierb$, $\Gamma(\x)=\Gamma(b)=\tiera$ and operator typing environment $\Delta$ such that $\tiera \to \tierb \to \tiera \in \Delta(lmin)(\tierb)$ and $\tierb \to \tierb \in \Delta(>0)(\tierb)$, and $\tierb \to \tierb \in \Delta(\mpred{})(\tierb)$. Note that the simulation uses the padded oracle variant $\tilde{\phi}$ of $\phi$. Recall that for each word $w$ and each integer $n$, $\tilde{\phi}(w10^n)=\phi(w)$. Consequently, the iteration of $\tilde{\phi}$ in the program $it_\phi$ simulates the iteration of $\phi$, when provided as input of $\mathcal{I}'$ (see also Proposition~\ref{padnpad}).

As $\FP \subseteq \lambda(\sem{\st})_1$ and $\mathcal{I}' \in \sem{\st}$. We have that $\lambda(\FP \cup \{\mathcal{I}'\})_2 \subseteq \lambda(\sem{\st})_2$ and the result follows by Theorem~\ref{th:ks19}.\qedhere
\end{proof}

\begin{figure}
\begin{lstlisting}
   $\x^\tiera \asg b^\tiera \sap$
   $\while (c \ne \epsilon)^\tierb\{$
      $\x^\tiera \asg \resizebox{!}{8pt}{\(\tilde{\phi}\)}(lmin(\x,a)^\tiera \upharpoonright a^\tierb)^\tiera\sap$
      $c^\tierb \asg \mpred(c)^\tierb\} $
   $\}$
   $\ret\ \x$
\end{lstlisting}
\caption{Program $it_\phi$}\label{fig:itphi}
\end{figure}

\begin{figure*}[!ht]
\hrulefill
\centering
\\[10pt]
\scalebox{0.95}{
\begin{prooftree}
\hypo{}
\ellipsis{}{}
\hypo{\Gamma(\x)=\tiera}
\infer1[(V)]{\vdash\x : (\tiera,\tierb,\tierb) }
\hypo{\Gamma(\x)=\tiera}
\infer1[(V)]{\vdash \x : (\tiera,\tierb,\tierb)}
\hypo{\Gamma(a)=\tierb}
\infer1[(V)]{\vdash a : (\tierb,\tierb,\tierb)}
\infer2[(OP)]{\vdash lmin(\x,a) : (\tiera,\tierb,\tierb)}
\hypo{\Gamma(a)=\tierb}
\infer1[(V)]{\vdash a : (\tierb,\tierb,\tierb)}
\infer2[(OR)]{\vdash \tilde{\phi}(lmin(\x,a) \upharpoonright a):(\tiera,\tierb,\tierb)}
\infer2[(A)]{\vdash\x \asg \tilde{\phi}(lmin(\x,a) \upharpoonright a):(\tiera,\tierb,\tierb) }
\infer1[(SUB)]{\vdash\x \asg \tilde{\phi}(lmin(\x,a) \upharpoonright a):(\tierb,\tierb,\tierb) }
\hypo{}
\ellipsis{}{}
\infer2[(S)]{\vdash\x \asg \tilde{\phi}(lmin(\x,a) \upharpoonright a) \sap c \asg \mpred{}(c):(\tierb,\tierb,\tierb)}
\infer2[(W$_0$)]{\vdash \while(c>0)\{\x \asg \tilde{\phi}(lmin(\x,a) \upharpoonright a) \sap c \asg \mpred{}(c)\}:(\tierb,\tierb,\tiera)}
\end{prooftree}
}
\caption{Typing derivation for program $it_\phi$}\label{fig:iprime}
\hrulefill
\end{figure*}

  To illustrate the need of the type-2 lambda closure for achieving completeness, consider a variant of Example~\ref{uncomp}:

 \begin{align*}
F'(\phi,\epsilon)&=\epsilon \\
F'(\phi,\msuc{1}(n))&=\phi \circ \phi (lmin(F'(\phi,n), \phi(\epsilon)))
\end{align*}

This functional is in $\BFF_2$ but neither in $\mpt$ nor in $\st$ as, by essence, it has no finite lookahead revision. Indeed, the outermost oracle call input data is not bounded and iterated linearly in the input. However it can be computed by $\lambda \phi.\lambda n.(\mathcal{I}'\ (\lambda x.\phi\ (\phi\ x))\ \phi(\epsilon)\ \epsilon \ n)$ and is in $\lambda(\sem{\st})_2$, as $\mathcal{I'}=\sem{it_\phi} \in \sem{\st}$, $it_\phi$ being the program in the proof of Theorem~\ref{t2}, and computes the functional $\lambda \phi.\lambda n.F'(\phi,n)$.

  \section{Other properties}\label{s:ext}
  \subsection{Intensional and extensional properties of tiers}
 The type system of Figure~\ref{TS} enjoys several other properties of interest.
 First, completeness can be achieved using only 2 tiers (at the price of worse expressive power). Second, type inference is decidable in polynomial time in the size of the program.

 Let $\sem{\st^{\sla}}$ be the subset of functionals of $\sem{\st}$ computable by terminating and typable programs using tiers bounded by $\sla$.
 Formally, $\prog \in \st^{\sla}$ if and only if $\prog$  is terminating and $\Gamma,\Delta \vdash \prog : (\sla',\sla'_{in},\sla'_{out})$ for a safe operator typing environment $\Delta$ and a variable typing environment $\Gamma$ such that $\forall \x \in \FV(\prog),\ \Gamma(\x) \ord \sla$.

 We can show that tiers allow strictly more expressive power in terms of captured programs. However tiers greater than $\tierb$ are equivalent from an extensional point of view.

\begin{prop}\label{prop:si}
The following properties hold:
\begin{enumerate}
\item $\forall \sla \succeq  \tiera,\ \st^{\sla} \subsetneq \st^{\sla+1}$,
\item $\forall \sla \succeq \tierb, \ \lambda(\sem{\st^\sla})_2 = \BFF_2$.
\end{enumerate}
\end{prop}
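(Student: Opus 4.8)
The plan is to prove the two items separately, in each case reducing the claim to results already established. For item (1), the inclusion $\st^\sla \subseteq \st^{\sla+1}$ is immediate: a typing derivation witnessing $\prog \in \st^\sla$ uses only tiers bounded by $\sla \ord \sla+1$, so it equally witnesses $\prog \in \st^{\sla+1}$. The substance is strictness, which I would establish by exhibiting, for each $\sla$, a witness program $\prog_{\sla+1}$ generalizing the multiple-tier program of Example~\ref{ex:multiple}: a sequence of $\sla+1$ non-nested \while{} loops in which the $i$-th loop decrements a variable $\x_i$ and, via a positive operator such as $\msuc{1}$, increments the next variable $\x_{i+1}$. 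I would then argue exactly as in Example~\ref{counter:exp} and Example~\ref{ex:multiple} that typing each loop forces $\Gamma(\x_{i+1}) \ordst \Gamma(\x_i)$: the guard of loop $i$ pins the innermost tier to $\Gamma(\x_i)$, and since $\msuc{1}$ is positive, Condition~(2) of Definition~\ref{sote} forces the tier of $\x_{i+1}$ to lie strictly below it. Chaining these constraints yields a strict $\ordst$-chain $\Gamma(\x_{\sla+2}) \ordst \cdots \ordst \Gamma(\x_1)$ of length $\sla+2$ in $\SL$, whose maximum is therefore at least $\sla+1$. Hence $\prog_{\sla+1}$ is typable using tiers up to $\sla+1$ (so it lies in $\st^{\sla+1}$) but cannot be typed with tiers bounded by $\sla$, giving $\prog_{\sla+1} \notin \st^\sla$.

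For item (2), I would first dispatch the upper bound uniformly: since $\st^\sla \subseteq \st$ we have $\sem{\st^\sla} \subseteq \sem{\st}$, so Lemma~\ref{closure} together with soundness (Theorem~\ref{thm:soundness}) gives $\lambda(\sem{\st^\sla})_2 \subseteq \lambda(\sem{\st})_2 \subseteq \BFF_2$. The crux is the lower bound, and the key observation is that the completeness proof of Theorem~\ref{t2} never uses more than two tiers. Concretely, the $\FP$-part rests on Theorem~\ref{fpmp}, Lemma~\ref{subsystem} and Lemma~\ref{t2s}, whose translation $()^\star$ produces derivations of the form $(\alpha,\tierb,\tierb)$ with $\alpha \in \{\tiera,\tierb\}$, so the simulating (oracle-free) programs already lie in $\st^\tierb$ and $\FP \subseteq \sem{\st^\tierb}$; and the bounded iterator $\mathcal{I}'$ is computed by the program $it_\phi$ of Figure~\ref{fig:itphi}, which is typed using only the tiers $\tiera$ and $\tierb$ (Figure~\ref{fig:iprime}), so $\mathcal{I}' \in \sem{\st^\tierb}$. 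Therefore, by Lemma~\ref{closure} and Theorem~\ref{th:ks19}, $\BFF_2 = \lambda(\FP \cup \{\mathcal{I}'\})_2 \subseteq \lambda(\sem{\st^\tierb})_2$.

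It then remains to combine these facts for an arbitrary $\sla \succeq \tierb$. Iterating the inclusion of item (1) gives $\st^\tierb \subseteq \st^\sla$, so by Lemma~\ref{closure} we get $\lambda(\sem{\st^\tierb})_2 \subseteq \lambda(\sem{\st^\sla})_2$; chaining with the lower bound at $\tierb$ yields $\BFF_2 \subseteq \lambda(\sem{\st^\sla})_2$, and with the uniform upper bound this gives the desired equality. The main obstacle I anticipate is item (1): making the chain-forcing argument fully rigorous in the general case rather than merely mimicking the two-loop example. In particular one must verify that the innermost tier of loop $i$ is genuinely pinned to $\Gamma(\x_i)$, and that neither the subtyping rule (SUB) nor the initialization rule (W$_0$) can be exploited to collapse the chain; this is where I would invoke Lemma~\ref{lem:inner} and the stratification constraints to show that the strict $\ordst$-ordering between successive loop variables is forced in \emph{every} typing derivation of $\prog_{\sla+1}$, not just in a convenient one.
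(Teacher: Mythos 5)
Your proposal is correct and follows essentially the same route as the paper: for item (1) the paper also exhibits a chain of $\sla$ sequential non-nested \while{} loops using $\mpred$ on each guard and $\msuc{1}$ on the next variable, so that positivity forces a strict $\ordst$-chain on the variable tiers, and for item (2) the paper likewise observes that the completeness proof of Theorem~\ref{t2} only uses tiers bounded by $\tierb$, then sandwiches $\BFF_2 \subseteq \lambda(\sem{\st^{\sla}})_2 \subseteq \lambda(\sem{\st})_2 = \BFF_2$ via monotonicity. Your write-up is in fact somewhat more explicit than the paper's (notably in tracking how the guard and the positive operator pin the tiers, and in separating the $\FP$ and $\mathcal{I}'$ components of the lower bound), but the underlying argument is identical.
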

\begin{proof}
(1) The inclusion is trivial. For any tier $\sla+1$, it is easy to enforce the tier of one variable of a safe and terminating program to be $\sla+1$ using $\sla$ sequential while loops of the shape:
\begin{lstlisting}[]
$\while (\x_{\sla+1} > 0) \{ \x_{\sla+1} \asg \mpred(\x_{\sla+1})\sap \x_{\sla} \asg  \msuc{1}(\x_{\sla}) \}\sap$
$\cdots$
$\while (\x_{\tierb} > 0) \{ \x_{\tierb} \asg \mpred(\x_{\tierb})\sap \x_{\tiera} \asg  \msuc{1}(\x_{\tiera})\}$
 \end{lstlisting}
  Consequently, the inclusion is strict.

(2) The proof of Theorem~\ref{t2}, only makes use of programs of tier smaller than $\tierb$. Consequently, $\lambda(\sem{\st^\tierb})_2 = \BFF_2$. By Proposition~\ref{prop:si}, $\st^{\sla} \subsetneq \st^{\sla+1}$ and, consequently, $\forall \sla,\ \sem{\st^{\sla}} \subseteq \sem{\st^{\sla+1}}$. We obtain that $\forall \sla,\ \BFF_2 \subseteq \lambda(\sem{\st^{\sla}})_2 \subseteq  \lambda(\sem{\st})_2 = \BFF_2$ and so the result.\qedhere
\end{proof}

Proposition~\ref{prop:si} implies that the use of exactly 2 tiers is sufficient to achieve completeness but weakens the type system expressive power.

\subsection{Decidability of type inference}

\begin{prop}\label{prop:ti}
Given a program $\prog$ of size $n$ and a safe operator typing environment $\Delta$, deciding if there exists a variable typing environment $\Gamma$ such that $\prog \in \st^\sla$ can be done in time $\mathcal{O}(n^2\times \sla)$.
\end{prop}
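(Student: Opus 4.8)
The plan is to recast the search for a variable typing environment $\Gamma$ as the satisfiability of a finite system of tier constraints over the totally ordered domain $\{\tiera,\ldots,\sla\}$, to generate that system by a single pass over $\prog$, and then to solve it by a monotone propagation whose number of iterations is bounded by the lattice height $\sla$.

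First I would argue that a typing derivation is essentially syntax-directed, so the only genuine degrees of freedom are (a) the tier $\Gamma(\x)$ of each variable, (b) the admissible type of $\Delta(\op)(\sla_{in})$ chosen at each operator node, and (c) the placement of the subtyping rule (SUB) and the choice between (W) and (W$_0$). Point (c) is harmless: (SUB) only raises command tiers, so the homogeneity requirements of (S), (C) and the loop rules amount to taking the join of the subcommand tiers, while the (W)/(W$_0$) choice is forced by the nesting structure and the mutual exclusivity noted after Figure~\ref{TS} (since (W$_0$) demands outermost tier $\tiera$ at the node, it applies only to a top-level loop). Crucially, at every node the innermost tier $\sla_{in}$ and the outermost tier $\sla_{out}$ are \emph{not} new unknowns: they are the tiers of the guards of, respectively, the innermost and the outermost enclosing \while\ loop, and those guard expressions are fixed occurrences of $\prog$. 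Hence introducing one tier unknown per variable and per syntactic node already suffices, yielding $\mathcal{O}(n)$ unknowns, and every $\sla_{in},\sla_{out}$ is a pointer, fixed once by syntax, to a specific unknown.

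Next I would generate the constraints. Reading each rule of Figure~\ref{TS} bottom-up produces, per node, a constant-size set of constraints of three kinds: equalities (homogeneity), non-strict inequalities $\ord$ (rules (V), (A), (OP) via safety of $\Delta$, and (W)), and strict inequalities $\ordst$ arising from (OR) and from (OP) on positive operators. The only non-binary conditions are the membership constraints $\sla_1\to\cdots\to\sla_{ar(\op)}\to\sla\in\Delta(\op)(\sla_{in})$; as $\Delta$ is safe and fixed, each is, for a fixed value of $\sla_{in}$, a table lookup, and the aggregate size of all such conditions is linear in $n$ since operand occurrences sum to the program size. A direct count then gives a system of $\mathcal{O}(n)$ constraints of total size $\mathcal{O}(n)$. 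To solve it I would run a worklist propagation over $\{\tiera,\ldots,\sla\}$: keep a current lower-bound tier for each unknown, initialised to $\tiera$, and repeatedly raise an unknown violating some constraint to the least value restoring it. Each unknown can be raised at most $\sla$ times before exceeding the bound, so there are at most $n\times\sla$ raises; re-examining the constraints after a raise costs aggregate $\mathcal{O}(n)$, which yields the announced $\mathcal{O}(n^2\times\sla)$. Stabilisation with every unknown $\ord\sla$ gives a $\Gamma$ witnessing $\prog\in\st^{\sla}$, and any unknown forced above $\sla$ proves infeasibility.

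The main obstacle is that the system is genuinely two-sided: data flow yields lower bounds that push tiers up, whereas (OR) and positive operators impose strict \emph{upper} bounds $\sla\ordst\sla_{in}$ in which $\sla_{in}$ — the tier of an enclosing loop guard — is itself an unknown carrying lower bounds, so one cannot merely compute a least solution and test the upper bounds afterwards. The crux will be to show that, using the safety conditions of Definition~\ref{sote} (forcing $\sla\ord\meet_i\sla_i\ord\join_i\sla_i\ord\sla_{in}$ together with strictness for positive operators) and Lemmas~\ref{lem:inner} and~\ref{lem:kout}, these upper bounds are equivalent to the absence of a cycle carrying a strict edge in the induced constraint graph, so that they are detected monotonically by the same propagation rather than by a separate non-monotone check. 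I expect the operator-type disjunction to be resolved greedily by always selecting the admissible type with the least result tier compatible with the current bounds, with safety guaranteeing this local choice is never regretted.
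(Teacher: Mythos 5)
Your proposal reaches the stated bound, but by a genuinely different route from the paper. The paper reduces type inference to 2-SAT: each tier unknown is encoded in unary by $\mathcal{O}(\sla)$ threshold booleans (e.g., $x^{tier}_i$ true iff the tier of $\x$ is at most $i$), every rule of Figure~\ref{TS} is compiled into $\mathcal{O}(\sla)$ binary clauses expressing exactly the equalities, inequalities and strict inequalities you list, and satisfiability is decided by a linear-time 2-SAT algorithm~\cite{EveItaSha76,AspPlaTar79}, giving $\mathcal{O}(n^2\times\sla)$ overall. You instead solve the same constraint system directly by a monotone worklist propagation over $\{\tiera,\ldots,\sla\}$, and your complexity accounting ($\mathcal{O}(n)$ unknowns, at most $\sla$ raises each, an $\mathcal{O}(n)$ recheck per raise) is sound. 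What the 2-SAT route buys is that all algorithmic subtlety is delegated to a known black box whose implication graph silently handles both directions of every constraint; what your route buys is self-containedness, no detour through boolean encodings, and an explicit witness $\Gamma$ (the least solution) when the answer is positive.

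That said, the step you single out as the crux is both simpler than you expect and mislocated. The system is not ``genuinely two-sided'': a strict constraint $\sla\ordst\sla_{in}$ is just the lower bound $\sla_{in}\succeq\sla\mathbf{+1}$, hence still monotone. Once every expression and command node carries its own tier unknown (in particular operator result tiers are unknowns, not greedy choices), all constraints generated by Figure~\ref{TS} and Definition~\ref{sote} have the forms $a\ord b$, $a\ordst b$, $a=b$, or comparison with a constant, and the solution set of such a system is closed under pointwise minimum. Consequently your propagation computes the least solution whenever one exists, and an unknown exceeding $\sla$ certifies unsatisfiability within the bound --- no strict-edge-cycle criterion is needed, no appeal to Lemmas~\ref{lem:inner} and~\ref{lem:kout} (which are statements about typing derivations, not about the constraint graph), and ``never regretted'' is exactly min-closure, not a property of safety per se. Finally, two points you share with the paper's own proof and should make explicit rather than inherit: the membership $\sla_1\to\cdots\to\sla\in\Delta(\op)(\sla_{in})$ must be assumed to decompose into the inequalities of Definition~\ref{sote} (for an arbitrary safe $\Delta$ it is a disjunction that neither your propagation nor the 2-SAT encoding handles), and the (W)/(W$_0$) alternative is not fully forced by syntax: for a top-level loop both rules can apply ((W) with $\sla_{out}=\sla$ has the same premises as (W$_0$), and (W) with $\sla_{out}\ordst$-above $\sla$ is a genuine extra option when an oracle input bound is a bare variable), so you must either prove that choosing (W$_0$) at top level loses no generality or keep that $\sla_{out}$ as one more unknown in the system.
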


\begin{proof}
The proof follows the type inference proof of~\cite{HMP13}: the tier of each variable $\x$ is encoded using $3(\sla+1)$ boolean variables $x_\tiera^{tier}$, $x_\tierb^{tier}$, $\ldots$, $x_\sla^{tier}$, $x_\tiera^{in}$, $x_\tierb^{in}$, $\ldots$, $x_\sla^{in}$,$x_\tiera^{out}$, $x_\tierb^{out}$, $\ldots$, $x_\sla^{out}$. The same encoding can be done for an expression $\e$ and a command $\cmd$ using the boolean variables $e$ and $c$. These variables can be interpreted as follows: $x_\sla^{in}$ is true whenever the innermost tier of variable $\x$ is at most $\sla$ and $c_\tierb^{tier}$ is false whenever the tier of command $\cmd$ is strictly more than $\tierb$. For example, if the tier of $\x$ is $\tierb$, then $x_{\tiera}^{tier}$ is false and $x_{\tierb}^{tier},\ldots,  x_{\sla}^{tier}$ are true.

The tiers of each boolean variable $a \in \{x,e,c\}$ is enforced to be correctly encoded by the following propositional formula $\bigwedge_{k \in \{tier,in,out\}}\bigwedge_{i \ordst \sla} (a_{i}^k \implies a_{i+1}^k)$, which is equivalent to $\bigwedge_{k \in \{tier,in,out\}}\bigwedge_{i \ordst \sla}(\neg {a_{i}^k}\vee {a_{i+1}}^k)$.
This accounts for $3\sla$ clauses for each variable.

%Following the same encoding, each boolean variable $c$ is enforced to have exactly one innermost tier and one outermost tier. $c$ is also enforced to have increasing tiers by the following propositional formula $\bigwedge_{i \ordst \sla}(c_{i}^{tier} \implies c_{i+1}^{tier})$ in order to encode subtyping rule, which is equivalent to $\bigwedge_{i \ordst \sla}(\neg c_{i}^{tier} \vee c_{i+1}^{tier})$.
%This accounts for $3\sla$ clauses for each variable.

Equality of the $k$-tier of variables $a$ and the $l$-tier of variable $b$ (with $k,l \in \{tier,in, out\}$) can be expressed as:
\[\bigwedge_{i}(\neg {a_{i}^k}\vee {b_{i}}^{l}) \wedge \bigwedge_{i}({a_{i}^k}\vee  \neg{b_{i}}^{l}).\]
This accounts for $2 (\sla+1)$ clauses.

Strict inequality of tiers, for example the tier of $\e$ strictly less than the innermost tier of $\mathtt{d}$, can be encoded as:
\[\bigwedge_{i\ordst \sla} (\neg {d}^{in}_{i+1} \vee {e}^{tier}_i) \wedge \neg d^{in}_0 \wedge {e}^{tier}_{\sla}.\]
This accounts for $\sla+2$ clauses.

Open inequality of tiers, for example the tier of variable $\x$ is at most the tier of variable $\y$, can be encoded as:
\[\bigwedge_i (\neg {y}^{tier}_i \vee {x}^{tier}_{i}).\]
This accounts for $\sla+1$ clauses.

Now we inspect each of the program constructs relatively to the corresponding typing rule in Figure~\ref{TS}:
\begin{itemize}
\item Consider typing rule $(A)$.

\[\begin{prooftree}
\hypo{\pbl \x: (\sla_1,\sla_{in},\sla_{out})}
\hypo{\pbl \e: (\sla_2,\sla_{in},\sla_{out})}
\hypo{\sla_1  \ord \sla_2}
\infer3[(A)]{\pbl \x \asg \e \ : (\sla_1,\sla_{in},\sla_{out})}
\end{prooftree}.\]
 The typing of command $\cmd \triangleq \x \asg \e$ translates to 1 open inequality and 5 equalities, accounting for $11(\sla+1)$ clauses:

\[
\begin{array}{l}
\bigwedge_i(\neg e^{tier}_i \vee x^{tier}_i) \wedge\\
\bigwedge_i(\neg c_i^{tier}\vee x_i^{tier})\wedge \bigwedge_i(c_i^{tier}\vee \neg x_i^{tier})\wedge\\
\bigwedge_i(\neg c_i^{in}\vee x_i^{in})    \wedge \bigwedge_i(c_i^{in}  \vee \neg x_i^{in})  \wedge\\
\bigwedge_i(\neg c_i^{in}\vee e_i^{in})    \wedge \bigwedge_i(c_i^{in}  \vee \neg e_i^{in})  \wedge\\
\bigwedge_i(\neg c_i^{out}\vee x_i^{out})  \wedge \bigwedge_i(c_i^{out} \vee \neg x_i^{out}) \wedge\\
\bigwedge_i(\neg c_i^{out}\vee e_i^{out})  \wedge \bigwedge_i(c_i^{out} \vee \neg e_i^{out})
\end{array}
\]
%This means that assignment rules will account for $\frac{\sla\times(\sla+1)}2+2$ clauses (if $\tierb \ord \sla$).
\item Similarly, consider typing rule $(W)$:
\[
\begin{prooftree}
\hypo{\pbl \e: (\sla_1,\sla_{in},\sla_{out})}
\hypo{\pbl \cmd : (\sla_1,\sla_1,\sla_{out})}
\hypo{\tierb \ord \sla_1 \ord \sla_{out}}
\infer3[(W)]{\pbl  \while (\e) \{\cmd\}\ :(\sla_1,\sla_{in},\sla_{out})}
\end{prooftree}.
\]
The typing of  command $\mathtt{w} \triangleq \while(\e)\{\cmd\}$ translates to 2 inequalities ($\tierb\ord\sla_1$ which is trivial and  $\sla_1 \ord \sla_{out}$) and 6 equalities, accounting for $13(\sla+1)+1$ clauses:

\[
\begin{array}{l}
\neg w_0^{tier} \wedge \\
\bigwedge_i \neg w_i^{out} \vee w_i^{tier} \wedge\\
\bigwedge_i \neg e^{tier}_i \vee w^{tier}_i \wedge \bigwedge_i e^{tier}_i \vee \neg w^{tier}_i\wedge\\
\bigwedge_i \neg c^{tier}_i \vee w^{tier}_i \wedge \bigwedge_i c^{tier}_i \vee \neg w^{tier}_i\wedge\\
\bigwedge_i \neg c^{in}_i \vee w^{tier}_i \wedge \bigwedge_i c^{in}_i \vee \neg w^{tier}_i\wedge\\
\bigwedge_i \neg e^{in}_i \vee w^{in}_i \wedge \bigwedge_i e^{in}_i \vee \neg w^{in}_i\wedge\\
\bigwedge_i \neg c^{out}_i \vee w^{out}_i \wedge \bigwedge_i c^{out}_i \vee \neg w^{out}_i\wedge\\
\bigwedge_i \neg e^{out}_i \vee w^{out}_i \wedge \bigwedge_i e^{out}_i \vee \neg w^{out}_i\wedge\\
\end{array}\]
\item The number of equalities used when encoding a rule $(OP)$ is $3\times ar(op)+1$, which needs in total $(3\times ar(op)+1)\times(2(\sla+1))$ clauses. Hence $\mathcal{O}(n\times\sla)$ clauses.
\end{itemize}

\noindent
Finally, the type inference problem can be reduced to 2-SAT with $\mathcal{O}(n^2\times \sla)$ clauses, which can be solved in time linear in the number of clauses~\cite{EveItaSha76,AspPlaTar79}.\qedhere
\end{proof}

\begin{thm}\label{thm:ti}
Given a program $\prog$ and a safe operator typing environment $\Delta$, deciding if there exists a variable typing environment $\Gamma$ such that $\prog \in \st$ can be done in time cubic in the size of the program.
\end{thm}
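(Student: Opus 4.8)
The plan is to reduce Theorem~\ref{thm:ti} to Proposition~\ref{prop:ti} by bounding, in terms of the program size $n$, the tiers that ever need to be considered. Proposition~\ref{prop:ti} already decides membership in $\st^\sla$ in time $\mathcal{O}(n^2 \times \sla)$; hence if I can show that a program of size $n$ lies in $\st$ exactly when it lies in $\st^\sla$ for some $\sla = \mathcal{O}(n)$, then instantiating the proposition with this value of $\sla$ immediately gives the cubic bound $\mathcal{O}(n^2 \times n) = \mathcal{O}(n^3)$. So the whole content of the theorem, beyond Proposition~\ref{prop:ti}, is a linear bound on the number of tiers.

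The heart of the argument is therefore a compression lemma: $\prog \in \st$ if and only if $\prog \in \st^\sla$ for some $\sla = \mathcal{O}(n)$. One direction is trivial, since $\st^\sla \subseteq \st$. For the other direction I would start from an arbitrary safe typing of $\prog$ with variable typing environment $\typenv$ and observe that the set $T = \{\typenv(\x) \mid \x \in \FV(\prog)\} \cup \{\tiera\}$ of tiers actually assigned to variables has cardinality at most $\size{\FV(\prog)} + 1 = \mathcal{O}(n)$, since each variable receives a single tier and $\size{\FV(\prog)} \ord n$. I would then take the unique order-preserving bijection $\sigma$ from $T$ onto the initial segment of $\SL$ of length $\size{T}$ (so that $\sigma(\tiera) = \tiera$), and set $\typenv' = \sigma \circ \typenv$, so that every tier occurring in $\typenv'$ is bounded by $\mathcal{O}(n)$.

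The key observation is that every constraint in the rules of Figure~\ref{TS}, and in the safety conditions of Definition~\ref{sote}, is expressed purely through the order relations $\ord$ and $\ordst$ and the lattice operations $\join, \meet$; all of these are preserved by the order-isomorphism $\sigma$. Relabelling with $\sigma$ therefore turns the original typing derivation into one whose tiers are all $\mathcal{O}(n)$: the operator types needed at each (OP) and (OR) node remain admissible because Conditions~(\ref{premier}) and~(\ref{second}) of Definition~\ref{sote} hold for the $\sigma$-images precisely when they hold for the originals. I expect this order-invariance step to be the main obstacle, the delicate part being to argue that the admissible operator types of the given safe $\typop$ transport correctly under $\sigma$ (this is where one uses that $\typop$ is characterised by the order-invariant conditions of Definition~\ref{sote}, exactly as exploited by the clause encoding in Proposition~\ref{prop:ti}). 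Once the compression lemma is established, the theorem follows by running the decision procedure of Proposition~\ref{prop:ti} with $\sla = \mathcal{O}(n)$, which takes time $\mathcal{O}(n^2 \times n) = \mathcal{O}(n^3)$, i.e.\ cubic in the size of the program.
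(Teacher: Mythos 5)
Your overall route is the same as the paper's: reduce to Proposition~\ref{prop:ti} by showing that a program of size $n$ is in $\st$ iff it is in $\st^{\sla}$ for some $\sla=\mathcal{O}(n)$, then run the $\mathcal{O}(n^2\times\sla)$ procedure with $\sla=\mathcal{O}(n)$ to get the cubic bound. Where you differ is in the justification of the linear tier bound. The paper argues on the constraint side: the only sources of \emph{strict} inequalities between tiers are rule (OR) and rule (OP) applied to positive operators, there are at most $n$ such rule applications in a derivation for $\prog$, and over a totally ordered domain a system of equalities and (strict and non-strict) inequalities is satisfiable iff it is satisfiable with values bounded by the number of strict inequalities; hence one may simply test $\prog\in\st^{n}$. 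Your compression-by-relabeling of an existing typing is a legitimate alternative, but as stated it has a gap.

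The gap is your choice of $T=\{\typenv(\x)\mid \x\in\FV(\prog)\}\cup\{\tiera\}$: this set does not contain all tiers occurring in a typing derivation, so an order isomorphism $\sigma$ defined only on $T$ cannot be applied to the derivation. Expression tiers need not be variable tiers (rule (OP) allows a neutral operator's result tier to be any admissible tier $\ord$ the minimum of its argument tiers, and similarly for (OR)); command tiers are inflated by (SUB); and, crucially, the innermost and outermost tiers are tiers of while-loop \emph{guards}, i.e.\ expression tiers, so they too may lie strictly between two consecutive variable tiers. These non-variable tiers carry strict-order constraints ($\sla\ordst\sla_{in}$ in (OR) and in Condition~(\ref{second}) of Definition~\ref{sote}), so you can neither leave them fixed (relabeling the variable tiers around them flips order relations) nor collapse them onto $T$ (collapsing can destroy strictness). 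The repair is straightforward and keeps your idea intact: derivations are syntax-directed apart from (SUB), so a derivation for $\prog$ has $\mathcal{O}(n)$ judgments, each carrying three tiers; take $T$ to be the set of \emph{all} tiers occurring at these judgments (size $\mathcal{O}(n)$, not $\size{\FV(\prog)}+1$), relabel by the order isomorphism onto an initial segment of $\SL$ fixing $\tiera$, and rebuild each (SUB) chain with the appropriate new length — note that (SUB) is a successor rule, hence not itself order-invariant; only its iteration is. With that repair (and under the reading, also implicit in Proposition~\ref{prop:ti} and Example~\ref{tre}, that $\typop$ admits every type satisfying the conditions of Definition~\ref{sote}), your argument yields $\prog\in\st^{\mathcal{O}(n)}$ and the cubic bound follows.
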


\begin{proof}
The maximal tier needed to type a program can be bounded by the size of the program as the number of strict inequalities on tiers is fixed by the number of rules (OP) (in the case of a positive operator) and (OR) needed to type a program. Consequently, with $n$ the size of $\prog$, we can simply check if $\prog \in \st^{n}$.
By Proposition~\ref{prop:ti}, this means that we can decide if $\prog \in \st$ in time $\mathcal{O}(n^3)$.\qedhere
\end{proof}

\section{Conclusion and future work}\label{s:con}
We have presented a first tractable characterization of the class of type-2 polynomial time computable functionals $\BFF_2$ based on a simple imperative programming language. This characterization does not require any explicit and external resource bound and its restriction to type-1 provides an alternative characterization of the class $\FP$.

The presented type system can be generalized to programs with a constant number of oracles (the typing rule for oracles remains unchanged).
However the lambda closure is mandatory for completeness as illustrated by Example~\ref{uncomp}. An open issue of interest is to get rid of this closure in order to obtain a characterization of $\BFF_2$ in terms of a pure imperative programming language. Indeed, in our context, programs can be viewed as a simply typed lambda-terms with typable and terminating imperative procedure calls. One suggestion is to study to which extent oracle composition can be added directly to the program syntax.

Another issue of interest is to study whether this non-interference based approach could be extended (or adapted within the context of light logics) to characterize $\BFF_2$ on a pure functional language. We leave these open issues as future work.

\section*{Acknowledgements.} We would like to thank the anonymous reviewers for their suggestions and comments, which helped us to greatly improve the presentation of our work. Bruce Kapron's work was supported in part by NSERC RGPIN-2021-02481. % chktex 8

\bibliographystyle{alphaurl}
%\bibliography{bib-lics}

\begin{thebibliography}{HKMP20}

\bibitem[APT79]{AspPlaTar79}
Bengt Aspvall, Michael~F. Plass, and Robert~Endre Tarjan.
\newblock A linear-time algorithm for testing the truth of certain quantified
  boolean formulas.
\newblock {\em Information Processing Letters}, 8(3):121--123, 1979.

\bibitem[BAJK08]{BAJK08}
Amir~M. Ben-Amram, Neil~D. Jones, and Lars Kristiansen.
\newblock Linear, polynomial or exponential? complexity inference in polynomial
  time.
\newblock In {\em Logic and Theory of Algorithms}, pages 67--76. Springer,
  2008.

\bibitem[BC92]{BelCoo92}
Stephen Bellantoni and Stephen Cook.
\newblock A new recursion-theoretic characterization of the polytime functions.
\newblock {\em Computational Complexity}, 2:97--110, 1992.

\bibitem[BL16]{BL16}
Patrick Baillot and Ugo~Dal Lago.
\newblock Higher-order interpretations and program complexity.
\newblock {\em Information and Computation}, 248:56--81, 2016.

\bibitem[BM10]{BM10}
Patrick Baillot and Damiano Mazza.
\newblock Linear logic by levels and bounded time complexity.
\newblock {\em Theoretical Computer Science}, 411(2):470--503, 2010.

\bibitem[BMM11]{BMM11}
Guillaume Bonfante, Jean{-}Yves Marion, and Jean{-}Yves Moyen.
\newblock Quasi-interpretations a way to control resources.
\newblock {\em Theoretical Computer Science}, 412(25):2776--2796, 2011.

\bibitem[BT04]{BT04}
Patrick Baillot and Kazushige Terui.
\newblock Light types for polynomial time computation in lambda-calculus.
\newblock In {\em Logic in Computer Science, {LICS} 2004}, pages 266--275. IEEE,
  2004.

\bibitem[CK89]{CooKap89}
Stephen~A. Cook and Bruce~M. Kapron.
\newblock Characterizations of the basic feasible functionals of finite type.
\newblock In {\em Symposium on Foundations of Computer Science,
  {FOCS} 1989}, pages 154--159. IEEE, 1989.

\bibitem[Cob65]{Cob65}
Alan Cobham.
\newblock The intrinsic computational difficulty of functions.
\newblock In {\em International
  Conference on Logic, Methodology, and Philosophy of Science}, pages 24--30.
  North-Holland, Amsterdam, 1965.

\bibitem[Con73]{Con73}
Robert~L. Constable.
\newblock Type two computational complexity.
\newblock In {\em Symposium on Theory of Computing, {STOC} 1973}, pages
  108--121. ACM, 1973.

\bibitem[Coo92]{C92}
Stephen~A. Cook.
\newblock Computability and complexity of higher type functions.
\newblock In {\em Logic from Computer Science}, pages 51--72. Springer, 1992.

\bibitem[CPR06]{CPR06}
Byron Cook, Andreas Podelski, and Andrey Rybalchenko.
\newblock Terminator: beyond safety.
\newblock In {\em International Conference on Computer Aided Verification, {CAV} 2006},
  pages 415--418. Springer, 2006.

\bibitem[CU93]{CU93}
Stephen~A. Cook and Alasdair Urquhart.
\newblock Functional interpretations of feasibly constructive arithmetic.
\newblock {\em Annals of Pure and Applied Logic}, 63(2):103--200, 1993.

\bibitem[DR06]{DR06}
Norman Danner and James~S. Royer.
\newblock Adventures in time and space.
\newblock In {\em Symposium on
  Principles of Programming Languages, {POPL} 2006}, pages 168--179. ACM, 2006.

\bibitem[EIS76]{EveItaSha76}
Shimon Even, Alon Itai, and Adi Shamir.
\newblock On the complexity of timetable and multicommodity flow problems.
\newblock {\em SIAM Journal on Computing}, 5(4):691--703, 1976.

\bibitem[FHHP15]{FHHP15}
Hugo F{\'{e}}r{\'{e}}e, Emmanuel Hainry, Mathieu Hoyrup, and Romain
  P{\'{e}}choux.
\newblock Characterizing polynomial time complexity of stream programs using
  interpretations.
\newblock {\em Theoretical Computer Science}, 585:41--54, 2015.

\bibitem[Gir98]{Girard98}
Jean-Yves Girard.
\newblock Light linear logic.
\newblock {\em Information and Computation}, 143(2):175--204, 1998.

\bibitem[GMR08]{GMR08}
Marco Gaboardi, Jean{-}Yves Marion, and Simona Ronchi~Della Rocca.
\newblock A logical account of {PSPACE}.
\newblock In {\em Symposium on Principles of Programming Languages, {POPL} 2008}, pages 121--131. ACM, 2008.

\bibitem[H{\'{a}}j79]{H79}
Petr H{\'{a}}jek.
\newblock Arithmetical hierarchy and complexity of computation.
\newblock {\em Theoretical Computer Science}, 8:227--237, 1979.

\bibitem[HKMP20]{HKMP20}
Emmanuel Hainry, Bruce~M. Kapron, Jean{-}Yves Marion, and Romain P{\'{e}}choux.
\newblock A tier-based typed programming language characterizing feasible
  functionals.
\newblock In {\em Symposium on Logic in
  Computer Science, {LICS} 2020}, pages
  535--549, 2020.

\bibitem[HMP13]{HMP13}
Emmanuel Hainry, Jean-Yves Marion, and Romain P{\'e}choux.
\newblock Type-based complexity analysis for fork processes.
\newblock In {\em International Conference on Foundations of Software Science
  and Computational Structures, {FoSSaCS} 2013}, pages 305--320. Springer, 2013.

\bibitem[HP15]{HP15}
Emmanuel Hainry and Romain P{\'{e}}choux.
\newblock Objects in polynomial time.
\newblock In {\em Asian Symposium on Programming Languages and Systems,
  {APLAS} 2015}, Lecture Notes in Computer Science, pages 387--404. Springer,
  2015.

\bibitem[HP17]{HP17}
Emmanuel Hainry and Romain P{\'{e}}choux.
\newblock Higher order interpretation for higher order complexity.
\newblock In {\em International Conference on Logic for
  Programming, Artificial Intelligence and Reasoning, {LPAR} 2017}, pages 269--285.
  EasyChair, 2017.

\bibitem[IRK01]{IRK01}
Robert~J. Irwin, James~S. Royer, and Bruce~M. Kapron.
\newblock On characterizations of the basic feasible functionals (part {I}).
\newblock {\em Journal of Functional Programming}, 11(1):117--153, 2001.

\bibitem[JK09]{JK09}
Neil~D. Jones and Lars Kristiansen.
\newblock A flow calculus of mwp-bounds for complexity analysis.
\newblock {\em ACM Transactions on  Computational Logic}, 10(4):28:1--28:41, 2009.

\bibitem[KC91]{KC91}
Bruce~M. Kapron and Stephen~A. Cook.
\newblock A new characterization of {M}ehlhorn's polynomial time functionals
  (extended abstract).
\newblock In {\em Symposium on Foundations of Computer Science,  {FOCS} 1991}, pages 342--347. IEEE, 1991.

\bibitem[KC96]{KC96}
Bruce~M. Kapron and Stephen~A. Cook.
\newblock A new characterization of type-2 feasibility.
\newblock {\em SIAM Journal on Computing}, 25(1):117--132, 1996.

\bibitem[KS17]{KS17}
Akitoshi Kawamura and Florian Steinberg.
\newblock Polynomial running times for polynomial-time oracle machines.
\newblock In {\em International Conference on Formal Structures for
  Computation and Deduction, {FSCD} 2017}, pages 23:1--23:18. Schloss Dagstuhl
  - Leibniz-Zentrum fuer Informatik, 2017. % chktex 8

\bibitem[KS18]{KS18}
Bruce~M. Kapron and Florian Steinberg.
\newblock Type-two polynomial-time and restricted lookahead.
\newblock In {\em Logic in Computer Science, {LICS} 2018}, pages 579--588. ACM,
  2018.

\bibitem[KS19]{KS19}
Bruce~M. Kapron and Florian Steinberg.
\newblock Type-two iteration with bounded query revision.
\newblock In {\em Joint Workshops on Developments in Implicit
  Computational complExity and Foundational {\&} Practical Aspects of Resource
  Analysis, {DICE-FOPARA@ETAPS} 2019}, {EPTCS}, pages 61--73, 2019.

\bibitem[Lei95]{L94}
Daniel Leivant.
\newblock Ramified recurrence and computational complexity {I}: Word recurrence
  and poly-time.
\newblock In {\em Feasible
  Mathematics II}, pages 320--343. Birkh{\"a}user, Boston, MA, 1995.

\bibitem[LJB01]{LJB01}
Chin~Soon Lee, Neil~D. Jones, and Amir~M. Ben{-}Amram.
\newblock The size-change principle for program termination.
\newblock In {\em Symposium on Principles of Programming Languages, {POPL} 2001}, pages
  81--92. ACM, 2001.

\bibitem[LM93]{LeivantMar93}
Daniel Leivant and Jean-Yves Marion.
\newblock Lambda calculus characterizations of poly-time.
\newblock {\em Fundamenta Informaticae}, 19(1/2):167--184, 1993.

\bibitem[LM13]{LM13}
Daniel Leivant and Jean{-}Yves Marion.
\newblock Evolving graph-structures and their implicit computational
  complexity.
\newblock In {\em International Colloquium on Automata, Languages, and Programming, {ICALP} 2013, Part {II}}, Lecture Notes in Computer
  Science, pages 349--360. Springer, 2013.

\bibitem[Mar11]{M11}
Jean{-}Yves Marion.
\newblock A type system for complexity flow analysis.
\newblock In {\em Logic in Computer Science, {LICS} 2011}, pages 123--132. IEEE
  Computer Society, 2011.

\bibitem[Meh76]{M76}
Kurt Mehlhorn.
\newblock Polynomial and abstract subrecursive classes.
\newblock {\em Journal of Computer and System Sciences}, 12(2):147--178, 1976.

\bibitem[Mit91]{M91}
John~C. Mitchell.
\newblock Type inference with simple subtypes.
\newblock {\em Journal of Functional Programming}, 1(3):245--285, 1991.

\bibitem[MP14]{MP14}
Jean{-}Yves Marion and Romain P{\'{e}}choux.
\newblock Complexity information flow in a multi-threaded imperative language.
\newblock In {\em Theory and Applications of Models of Computation, {TAMC}
  2014}, Lecture Notes in Computer Science, pages 124--140. Springer, 2014.

\bibitem[VIS96]{VIS96}
Dennis Volpano, Cynthia Irvine, and Geoffrey Smith.
\newblock A sound type system for secure flow analysis.
\newblock {\em Journal of Computer Security}, 4(2-3):167--187, 1996. % chktex 8

\end{thebibliography}

\end{document}